\def\doi{8 (1:12) 2012}
\newcommand{\fancypicture}[1]{}
\renewcommand{\cite}{\citet}
\newcommand{\rightqed}{}
\newcommand{\define}[1]{\emph{#1}}
\newcommand{\tuple}[1]{\langle{#1}\rangle}
\newcommand{\defeq}{\coloneqq}
\newcommand{\EL}{$\mathcal{EL}$}
\newcommand{\SHIQ}{$\mathcal{SHIQ}$}
\newcommand{\SHIQb}{$\mathcal{SHIQ}b_s$}
\newcommand{\SHIQbs}{$\mathcal{SHIQ}b_s$}
\newcommand{\SRIQbs}{$\mathcal{SRIQ}b_s$}
\newcommand{\SHOIQ}{$\mathcal{SHOIQ}$}
\newcommand{\SROIQ}{$\mathcal{SROIQ}$}
\newcommand{\SHOIQBs}{$\mathcal{SHOIQB}_s$}
\newcommand{\SROIQBs}{$\mathcal{SROIQB}_s$}
\newcommand{\ALC}{$\mathcal{ALC}$}
\newcommand{\ALCHIQb}{$\mathcal{ALCHIQ}b$}
\newcommand{\ALCIQb}{$\mathcal{ALCQI}b$}
\newcommand{\ALCHIqb}{$\mathcal{ALCHI}b^{\leqslant}$}
\newcommand{\ALCIqb}{$\mathcal{ALCI}b^{\leqslant}$}
\newcommand{\ALCIFb}{$\mathcal{ALCIF}b$}
\newcommand{\ALCIb}{$\mathcal{ALCI}b$}
\newcommand{\NExpTime}{{\sc{NExpTime}}}
\newcommand{\NExpExpTime}{{\sc{N2ExpTime}}}
\newcommand{\ExpTime}{{\sc{ExpTime}}}
\newcommand{\NP}{{\sc{NP}}}
\newcommand{\Inter}{\mathcal{I}} 
\newcommand{\Jnter}{\mathcal{J}} 
\newcommand{\Knter}{\mathcal{K}} 
\newcommand{\true}{\mathit{true}}
\newcommand{\false}{\mathit{false}}
\newcommand{\delem}{\delta} 
\newcommand{\selem}{\sigma} 
\newcommand{\quantor}{\mathord{\reflectbox{$\text{\sf{Q}}$}}}
\newcommand{\kb}{\text{\rm{KB}}} 
\newcommand{\rb}{\text{\rm{RB}}} 
\newcommand{\kbrb}{\kb} 
\newcommand{\lang}[1]{\ensuremath{\mathbf{#1}}} 
\newcommand{\rolnames}{{\text{\sf{N}}_R}} 
\newcommand{\connames}{{\text{\sf{N}}_C}} 
\newcommand{\indnames}{{\text{\sf{N}}_I}} 
\newcommand{\Rlang}{\ensuremath{\mathbf{R}}} 
\newcommand{\Var}{\text{\sf{Var}}} 
\newcommand{\ssb}{\sqsubseteq}
\newcommand{\atleast}[1]{\mathord{\geqslant}#1\,}
\newcommand{\atmost}[1]{\mathord{\leqslant}#1\,}
\newcommand{\con}[1]{#1}
\newcommand{\rol}[1]{#1}
\newcommand{\rolR}{\rol{R}}
\newcommand{\rolS}{\rol{S}}
\newcommand{\rolT}{\rol{T}}
\newcommand{\rolU}{\rol{U}}
\newcommand{\rolV}{\rol{V}}
\newcommand{\rolW}{\rol{W}}
\newcommand{\conA}{\con{A}}
\newcommand{\conC}{\con{C}}
\newcommand{\conD}{\con{D}}
\newcommand{\conE}{\con{E}}
\newcommand{\conF}{\con{F}}
\newcommand{\dlname}[1]{\text{\it{#1}}} 
\newcommand{\Trans}{\text{\sf{Tra}}} 
\newcommand{\Inv}{\text{\rm{Inv}}} 
\newcommand{\Epsilon}{\Theta}
\newcommand{\NNF}{\text{\sf{NNF}}} 
\newcommand{\FLAT}{\text{{\sf FLAT}}} 
\newcommand{\Es}{\Epsilon_{\mathcal{S}}} 
\newcommand{\Ege}{\Epsilon_{\geqslant}} 
\newcommand{\Eh}{\Epsilon_{\mathcal{H}}} 
\newcommand{\Ele}{\Epsilon_{\leqslant}} 
\newcommand{\Ef}{\Epsilon_{\mathcal{F}}} 
\newcommand{\Eshq}{\Epsilon_{\mathcal{SHQ}}} 
\newcommand{\domp}{\pi} 
\newcommand{\DO}{\mathbb{D}} 
\newcommand{\tail}[1]{\mbox{\small\textsf{last}}(#1)} 
\newcommand{\val}{\text{\textrm{tail}}} 
\newcommand{\parts}{P} 
\newcommand{\roleval}[2]{#1\vdash #2} 
\newcommand{\nroleval}[2]{#1\not\vdash #2}
\newcommand{\charf}[1]{\llbracket{}#1\rrbracket{}_{\chi}} 
\newcommand{\tobf}[1]{\llbracket{}#1\rrbracket{}} 
\newcommand{\high}{\text{\sf high}}
\newcommand{\low}{\text{\sf low}}
\newcommand{\nroot}{n_{\:\!\text{\sf root}}}
\newcommand{\ntrue}{n_{\;\!\text{\sf true}}}
\newcommand{\nfalse}{n_{\;\!\text{\sf false}}}
\newcommand{\Prog}{\mathsf{DD}}
\theoremstyle{plain}
\newtheorem{theorem}{Theorem}[section]
\newtheorem{proposition}[theorem]{Proposition}
\newtheorem{lemma}[theorem]{Lemma}
\theoremstyle{definition}
\newtheorem{definition}[theorem]{Definition}
\begin{document}
\setlength{\parindent}{0mm}

\title[Type-Elimination-Based Reasoning for \SHIQb{}]{Type-Elimination-Based Reasoning for the Description Logic \SHIQb{} Using Decision Diagrams and Disjunctive
Datalog}

\author[S.~Rudolph]{Sebastian Rudolph\rsuper a}   
\address{{\lsuper a}Institute AIFB, Karlsruhe Institute of Technology, Germany} 
\email{rudolph@kit.edu}  

\author[M.~Kr\"{o}tzsch]{Markus Kr\"{o}tzsch\rsuper b} 
\address{{\lsuper b}Department of Computer Science, University of Oxford, UK}    
\email{markus.kroetzsch@cs.ox.ac.uk}  

\author[P.~Hitzler]{Pascal Hitzler\rsuper c} 
\address{{\lsuper c}Kno.e.sis, Wright State University, Dayton, Ohio, US} 
\email{pascal.hitzler@wright.edu}  



\keywords{description logics, type elimination, decision diagrams,
Datalog} \subjclass{I.2.4, I.2.3, F.4.3, F.4.1}


\begin{abstract}
\noindent We propose a novel, type-elimination-based method for
standard reasoning in the description logic \SHIQb{} extended by
DL-safe rules. To this end, we first establish a knowledge
compilation method converting the terminological part of an \ALCIb{}
knowledge base into an ordered binary decision diagram (OBDD) that
represents a canonical model. This OBDD can in turn be transformed
into disjunctive Datalog and merged with the assertional part of the
knowledge base in order to perform combined reasoning.
In order to leverage our technique for full \SHIQb{}, we provide a
stepwise reduction from \SHIQb{} to \ALCIb{} that preserves
satisfiability and entailment of positive and negative ground facts.
The proposed technique is shown to be worst-case optimal w.r.t.\
combined and data complexity.
\end{abstract}

\maketitle


\section{Introduction}\label{sec:intro}

Description logics (DLs, see \citealp{dlhandbook}) have become a major paradigm in Knowledge Representation and Reasoning. This can in part be attributed to the fact that the DLs have been found suitable to be the foundation for ontology modeling and reasoning for the Semantic Web. In particular, the Web Ontology Language OWL \citep{owl2-overview}, a recommended standard by the World Wide Web Consortium (W3C)\footnote{http://www.w3.org/} for ontology modeling, is essentially a description logic (see, e.g., \citealp{fost}, for an introduction to OWL and an in-depth description of the correspondences). As such, DLs are currently gaining significant momentum in application areas, and are being picked up as knowledge representation paradigm by both industry and applied research.

The DL known as \SHIQ{} is among the most prominent DL fragments that do not feature nominals,\footnote{Nominals, i.e., concepts that denote a set with exactly one element, usually cause a reasoning efficiency problem when added to \SHIQ. This is evident from the performance of existing systems, and finds its theoretical justification in the fact that they increase worst-case complexity from ExpTime-completeness to NExpTime-completeness.} and it covers most of the OWL language. Various OWL reasoners implement efficient reasoning support for \SHIQ{} by means of tableau methods, e.g., Pellet,\footnote{http://clarkparsia.com/pellet/} FaCT++,\footnote{http://owl.man.ac.uk/factplusplus/} or RacerPro,\footnote{http://www.racer-systems.com/}.

However, even the most efficient implementations of reasoning algorithms to date do not scale up to very data-intensive application scenarios. This motivates the search for alternative reasoning approaches that build upon different methods in order to address cases where tableau algorithms turn out to have certain weaknesses. Successful examples are KAON2 \citep{kaon2} based on resolution, HermiT \citep{hermit} based on hyper-tableaux, as well as the consequence-based systems CB \citep{kazakov09-cdreasoning}, ConDOR \citep{conf/ijcai/Simancik11}, and ELK \citep{KKS11:parallEL}. Moreover, especially for lightweight DLs, approaches based on rewriting queries \citep{DBLP:journals/jar/CalvaneseGLLR07} or both queries and data \citep{DBLP:conf/kr/KontchakovLTWZ10} have been proposed.

In this paper, we propose the use of a variant of \textit{type elimination}, a notion first introduced by \cite{pratt}, as a reasoning paradigm for DLs. To implement the necessary computations on large type sets in a compressed way, we suggest the use of ordered binary decision diagrams (OBDDs). \mbox{OBDDs} have been applied successfully in the domain of large-scale model checking and verification, but have hitherto seen only little investigation in DLs, e.g., by \cite{pansattlervardi}.

Most of the description logics considered in this article exhibit restricted Boolean role expressions as a non-standard modeling feature, which is indicated by a $b$ or (if further restricted) $b_s$ in the name of the DL. In particular, we propose a novel method for reasoning in \SHIQb{} knowledge bases featuring terminological and assertional knowledge including (in)equality statements as well as DL-safe rules.

Our work starts by considering terminological reasoning in the DL \ALCIb{}, which is less expressive than \SHIQb{}. We introduce a method that compiles an \ALCIb{} terminology into an OBDD representation. Thereafter, we show that the output of this algorithm can be used for generating a disjunctive Datalog program that can in turn be combined with ABox data to obtain a correct reasoning procedure. Finally, the results for \ALCIb{} are lifted to full \SHIQb{} by providing an appropriate translation from the latter to the former.

This article combines and consolidates our previous work about pure TBox reasoning \citep{RKH-08-OBDDs}, its extension to ABoxes \citep{RKH:OBBD08b} and some notes on reasoning in DLs with Boolean role expressions \citep{RKH:Jelia-08} by
\begin{iteMize}{$\bullet$}
\item providing a collection of techniques for eliminating \SHIQb{} modeling
features that impede the use of our type elimination approach,
\item laying out the model-theoretic foundations for type-elimination-based
reasoning for very expressive description logics without nominals, using the \emph{domino} metaphor for 2-types,
\item elaborating the possibility of using OBDDs for making
type elimination computationally feasible,
\item providing a canonical translation of OBDDs into disjunctive
Datalog to enable reasoning with assertional information, and
\item making the full proofs accessible in a published version.
\end{iteMize}
Moreover, we extend our work by adding some missing aspects and completing the theoretical investigations by

\begin{iteMize}{$\bullet$}
\item extending the procedures for reducing \SHIQb{} to \ALCIb{} to ABoxes and DL-safe rules,
\item establishing worst-case optimality of our algorithms,
\item extending the supported language: while our previous work only covered
terminological reasoning in \SHIQ{} \citep{RKH-08-OBDDs} and combined reasoning in \ALCIb{} \citep{RKH:OBBD08b}, we now support reasoning in \SHIQb{} knowledge bases featuring terminological and assertional knowledge, including (in)equality statements and DL-safe rules.
\end{iteMize}

The structure of this article is as follows. Section~\ref{sec:prelims} recalls relevant preliminaries. Section~\ref{sec:dominoes} discusses the computation of sets of \emph{dominoes} that represent models of \ALCIb{} knowledge bases. Section~\ref{sec:boolfunc} casts this computation into a manipulation of OBDDs as underlying data structures. Section~\ref{sec:abox} discusses how the resulting OBDD presentation can be transformed to disjunctive Datalog and establishes the correctness of the approach. Section~\ref{sec:reduction} provides a transformation from \SHIQb{} to \ALCIb{}, thereby extending the applicability of the proposed method to \SHIQb{} knowledge bases. Section \ref{sec:related} discusses related work and Section \ref{sec:conc} concludes.

\section{The Description Logics \SHIQb{} and \ALCIb{}}\label{sec:prelims}

%
%
We first recall some basic definitions of DLs and introduce our notation. A more gentle first introduction to DLs, together with pointers to further reading, is given in \cite{DBLP:conf/rweb/Rudolph11}. Here, we define a rather expressive description logic \SHIQb{} that extends \SHIQ{} with restricted Boolean role expressions (see, e.g., \citealp{Tobies:PhD}).

\begin{definition}
A \SHIQb{} knowledge base is based on three disjoint sets of \define{concept names} $\connames$, \define{role names} $\rolnames$, and \define{individual names} $\indnames$. The set of \define{atomic roles} $\lang{R}$ is defined by $\lang{R}\defeq \rolnames\cup\{\rolR^- \mid \rolR\in\rolnames\}$. In addition, we let $\Inv(\rolR)\defeq\rolR^-$ and $\Inv(\rolR^-) \defeq \rolR$, and we extend this notation also to sets of atomic roles. In the following, we use the symbols $\rolR$ and $\rolS$ to denote atomic roles, if not specified otherwise.

The set of \define{Boolean role expressions} $\lang{B}$ is defined as
\[\lang{B} \Coloneqq \Rlang \mid \neg \lang{B} \mid \lang{B} \sqcap \lang{B} \mid \lang{B} \sqcup \lang{B}.\]
We use $\roleval{}{}$ to denote entailment between sets of atomic roles and role expressions. Formally, given a set $\mathscr{R}$ of atomic roles, we inductively define:
\begin{iteMize}{$\bullet$}
\item for atomic roles $\rolR$, $\roleval{\mathscr{R}}{\rolR}$ if $\rolR\in\mathscr{R}$, and $\nroleval{\mathscr{R}}{\rolR}$ otherwise,
\item $\roleval{\mathscr{R}}{\neg\rolU}$ if $\nroleval{\mathscr{R}}{\rolU}$, and $\nroleval{\mathscr{R}}{\neg\rolU}$ otherwise,
\item $\roleval{\mathscr{R}}{\rolU\sqcap\rolV}$ if $\roleval{\mathscr{R}}{\rolU}$ and $\roleval{\mathscr{R}}{\rolV}$, and $\nroleval{\mathscr{R}}{\rolU\sqcap\rolV}$ otherwise,
\item $\roleval{\mathscr{R}}{\rolU\sqcup\rolV}$ if $\roleval{\mathscr{R}}{\rolU}$ or $\roleval{\mathscr{R}}{\rolV}$, and $\nroleval{\mathscr{R}}{\rolU\sqcup\rolV}$ otherwise.
\end{iteMize}
A Boolean role expression $\rolU$ is \define{restricted} if $\nroleval{\emptyset}{\rolU}$. The set of all restricted role expressions is denoted by $\lang{T}$, and the symbols $\rolU$ and $\rolV$ will be used throughout this paper to denote restricted role expressions.
 A \SHIQb{} \define{RBox} is a set of axioms of the form $\rolU\ssb\rolV$ (role inclusion axiom) or $\Trans(\rolR)$ (transitivity axiom). The set of \define{non-simple} roles (for a given RBox) is defined as the smallest subset of $\Rlang$ satisfying:
\begin{iteMize}{$\bullet$}
\item If there is an axiom $\Trans(\rolR)$, then $\rolR$ is non-simple.
\item If there is an axiom $\rolR\ssb\rolS$ with $\rolR$ non-simple, then $\rolS$ is non-simple.
\item If $\rolR$ is non-simple, then $\Inv(\rolR)$ is non-simple.
\end{iteMize}
An atomic role is \define{simple} if it is not non-simple. In \SHIQb{}, every non-atomic Boolean role expression must contain only simple roles.

Based on a \SHIQb{} RBox, the set of \define{concept expressions} $\lang{C}$ is defined as
\[\lang{C} \Coloneqq \connames \mid \top \mid \bot \mid \neg \lang{C} \mid \lang{C} \sqcap \lang{C} \mid \lang{C} \sqcup \lang{C}\mid \forall\lang{T}.\lang{C}\mid \exists\lang{T}.\lang{C} \mid \atmost{n}\lang{R}.\lang{C} \mid \atleast{(n+1)}\lang{R}.\lang{C},\]
%
where $n\geq 0$ denotes a natural number, and the role $S$ in expressions $\atmost{n}S.C$ and $\atleast{(n+1)}S.C$ is required to be simple.
Common names for the various forms of concept expressions are given in Table~\ref{table:SHIQb} (lower part).
%
%
Throughout this paper, the symbols $\conC$, $\conD$ will be used to denote concept expressions. A \SHIQb{} \define{TBox} (or \define{terminology}) is a set of \define{general concept inclusion axioms} (GCIs) of the form $\conC \ssb \conD$.
%
\end{definition}

Besides the terminological components, DL knowledge bases typically include assertional knowledge as well. In order to increase expressivity and to allow for a uniform presentation of our approach we generalize this by allowing knowledge bases to contain so-called DL-safe rules as introduced by \cite{DLsafe-JWS}.
\newcommand{\Vlang}{\mathbf{V}}
\begin{definition}\label{defn:rules}
Let $\Vlang$ be a countable set of first-order variables. A \define{term} is an element of $\Vlang\cup\indnames$. Given terms $t$ and $u$, a \emph{concept atom/role atom/equality atom} is a formula of the form $\conC(t)/\rolR(t,u)/t\approx u$ with $\conC\in\connames$ and $\rolR\in\rolnames$. A \define{DL-safe rule} for \SHIQb{} is a formula $B\to H$, where $B$ and $H$ are possibly empty conjunctions of (role, concept, and equality) atoms. To simplify notation, we will often use finite sets $S$ of atoms for representing the conjunction $\bigwedge S$.

A set $\mathscr{P}$ of DL-safe rules is called a \define{rule base}. An \define{extended} \SHIQb{} \define{knowledge base} $\kb$ is a triple $\tuple{\mathscr{T},\mathscr{R},\mathscr{P}}$, where $\mathscr{T}$ is a \SHIQb{} TBox, $\mathscr{R}$ is a \SHIQb{} RBox, and $\mathscr{P}$ is a rule base.
\end{definition}

We only consider extended knowledge bases in this work, so we will often just speak of knowledge bases. In the literature, a DL ABox is usually allowed to contain assertions of the form $\conA(a)$, $\rolR(a,b)$, or $a \approx b$, where $a,b\in\indnames$, $\conA\in\connames$, and $\rolR\in\rolnames$. We assume that all roles and concepts occurring in the ABox are atomic.\footnote{This common assumption is made without loss of generality in terms of knowledge base expressivity. It is essential for defining the ABox-specific complexity measure of \emph{data complexity}, although it might be questionable in cases where ABox statements with complex concept expressions belong to the part of the knowledge base that is frequently changing.} These assertions can directly be expressed as DL-safe rules that have empty (vacuously true) bodies and a single head atom. Conversely, the negation of these assertions can be expressed by rules that have the assertion as body atom while having an empty (vacuously false) head. Knowing this, we will not specifically consider assertions or negated assertions in the proofs of this paper. For convenience we will, however, sometimes use the above notations instead of their rule counterparts when referring to (positive or negated) ground facts.

As mentioned above, we will mostly consider fragments of \SHIQb{}. In particular, an (extended) \ALCIb{} knowledge base is an (extended) \SHIQb{} knowledge base that contains no RBox axioms and no number restrictions (i.e., concept expressions $\atmost{n}\rolR.\conC$ or $\atleast{n}\rolR.\conC$). Consequently, an extended \ALCIb{} knowledge base only consists of a pair $\tuple{\mathscr{T},\mathscr{P}}$, where $\mathscr{T}$ is a TBox and $\mathscr{P}$ is a rule base. The related DL \ALCIQb{} has been studied by \cite{Tobies:PhD}. \fancypicture{Figure~\ref{fig:DLs} give a brief exemplary overview about the essentials of \SHIQ{} and \ALCIb.
\begin{figure}[bht]
\includegraphics[width=1\textwidth]{figures/executiveDLs}
\caption{Modeling features in \SHIQ{} and \ALCIb{}, respectively.\label{fig:DLs}}
\end{figure}}

\begin{table}[t]
 {\normalsize
 \begin{tabular}{|l|l|l|}
  \hline
 Name & Syntax & Semantics \\  \hline
 & & \\[-2ex]
 inverse role & $\rolR^-$ & $\{\tuple{x,y}\in\Delta^\Inter\times\Delta^\Inter \mid \tuple{y,x} \in \rolR^\Inter\}$ \\
 role negation & $\neg \rolU$ & $\{\tuple{x,y}\in\Delta^\Inter\times\Delta^\Inter \mid \tuple{x,y} \not\in \rolU^\Inter\}$ \\
 role conjunction & $\rolU \sqcap \rolV$ & $\rolU^\Inter \cap \rolV^\Inter$ \\
 role disjunction & $\rolU \sqcup \rolV$ & $\rolU^\Inter \cup \rolV^\Inter$ \\\hline
 & & \\[-2ex]
 top & $\top$ & $\Delta^\Inter$\\  
 bottom & $\bot$ & $\emptyset$ \\  
 negation & $\neg \conC$& $\Delta^\Inter \setminus \conC^{\Inter}$\\  
 conjunction & $\conC\sqcap \conD$& $\conC^{\Inter}\cap \conD^{\Inter}$ \\  
 disjunction & $\conC\sqcup \conD$& $\conC^{\Inter}\cup \conD^{\Inter}$\\  
 universal restriction & $\forall \rolU.\conC$ & $\{x\in\Delta^\Inter \mid \tuple{x,y} \in \rolU^{\Inter} \text{ implies } y\in \conC^{\Inter}\}$\\  
 existential restriction & $\exists \rolU.\conC$ & $\{x\in\Delta^\Inter \mid \tuple{x,y}\in\rolU^{\Inter}$, $y\in \conC^{\Inter}$ for some $y\in\Delta^\Inter\}$\\  
 qualified & $\atmost{n}\rolS.C$ & $\{x\in\Delta^\Inter \mid \#\{y\,\mathord{\in}\,\Delta^\Inter|\tuple{x,y}\,\mathord{\in}\,\rolS^{\Inter}\text{, } y\,\mathord{\in}\,\conC^{\Inter}\}\le n\}$\\
 \phantom{xii}number restriction & $\atleast{n}\rolS.C$ & $\{x\in\Delta^\Inter \mid \#\{y\,\mathord{\in}\,\Delta^\Inter|\tuple{x,y}\,\mathord{\in}\,\rolS^{\Inter}\text{, } y\,\mathord{\in}\,\conC^{\Inter}\}\ge n\}$\\  \hline
 \end{tabular}}
\caption{Semantics of constructors in \SHIQb{} for an interpretation $\Inter$ with domain $\Delta^\Inter$\label{table:SHIQb}}
\end{table}

The semantics of \SHIQb{} and its sublogics is defined in the usual, model-theoretic way. An interpretation $\Inter$ consists of a set $\Delta^\Inter$ called \emph{domain} (the elements of it being called \define{individuals}) together with a function $\cdot^\Inter$ mapping individual names to elements of $\Delta^\Inter$, concept names to subsets of $\Delta^\Inter$, and role names to subsets of $\Delta^\Inter\times\Delta^\Inter$.

The function $\cdot^\Inter$ is extended to role and concept expressions as shown in Table~\ref{table:SHIQb}. An interpretation $\Inter$ \define{satisfies} an axiom $\varphi$ if we find that $\Inter\models\varphi$, where
\begin{iteMize}{$\bullet$}
\item $\Inter\models \rolU\ssb\rolV$ if $\rolU^\Inter \subseteq \rolV^\Inter$,
\item $\Inter\models \Trans(\rolR)$ if $\rolR^\Inter$ is a transitive relation,
\item $\Inter\models \conC\ssb\conD$ if $\conC^\Inter \subseteq \conD^\Inter$,
\end{iteMize}
$\Inter$ \define{satisfies} a knowledge base $\kb$, denoted $\Inter\models\kb$, if it satisfies all axioms of $\kb$.

It remains to define the semantics of DL-safe rules. A (DL-safe) \define{variable assignment} $Z$ for an interpretation $\Inter$ is a mapping from the set of variables $\Vlang$ to $\{a^\Inter\mid a\in \indnames\}$. Given a term $t\in\indnames\cup\Vlang$, we set $t^{\Inter,Z}\defeq Z(t)$ if $t\in\Vlang$, and $t^{\Inter,Z}\defeq t^{\Inter}$ otherwise. Given a concept atom $\conC(t)$\  /\  role atom $\rolR(t,u)$\  /\  equality atom $t \approx u$, we write $\Inter,Z\models \conC(t)$ \ /\ $\Inter,Z\models \rolR(t,u)$ \ /\  $\Inter,Z\models t\approx u$ if $t^{\Inter,Z}\in\conC^{\Inter}$\  /\  $\tuple{t^{\Inter,Z},u^{\Inter,Z}}\in\rolR^{\Inter}$\  /\  $t^{\Inter,Z}=u^{\Inter,Z}$, and we say that $\Inter$ and $Z$ \define{satisfy} the atom in this case.

An interpretation $\Inter$ \emph{satisfies} a rule $B\to H$ if, for all variable assignments $Z$ for $\Inter$, either $\Inter$ and $Z$ satisfy all atoms in $H$, or $\Inter$ and $Z$ fail to satisfy some atom in $B$. In this case, we write $\Inter\models B\to H$ and say that $\Inter$ is a \define{model} for $B\to H$. An interpretation satisfies a rule base $\mathscr{P}$ (i.e., it is a \define{model} for it) whenever it satisfies all rules in it.
An extended knowledge base $\kb=\tuple{\mathscr{T},\mathscr{R},\mathscr{P}}$ is \define{satisfiable} if it has an interpretation $\Inter$ that is a model for $\mathscr{T}$, $\mathscr{R}$, and $\mathscr{P}$, and it is \define{unsatisfiable} otherwise.
\define{Satisfiability}, \define{equivalence}, and \define{equisatisfiability} of (extended) knowledge bases are defined as usual.

For convenience of notation, we abbreviate TBox axioms of the form $\top\ssb\conC$ by writing just $\conC$. Statements such as $\Inter\models\conC$ and $\conC\in\kb$ are interpreted accordingly. Note that $\conC\ssb\conD$ can thus be written as $\neg\conC\sqcup\conD$.


We often need to access a particular set of quantified and atomic subformulae of a DL concept expression. These specific \emph{parts} are provided by the function  $\parts: \mathbf{C} \to 2^\mathbf{C}$:
\[
\parts(\conC) \defeq \left\{
\begin{array}{@{}l@{}l}
\parts(\conD)                  & \text{if } \conC = \neg\conD, \\
\parts(\conD)\cup\parts(\conE)\; & \text{if } \conC = \conD\sqcap\conE
                                 \text{ or } \conC = \conD\sqcup\conE, \\
\{\conC\}\cup\parts(\conD)\;     & \text{if } \conC = \quantor\rolU.\conD\text{ with }\quantor\in\{\exists,\forall,\atleast{n}\!,\atmost{n}\!\}, \\
\{\conC\}                      & \text{otherwise.}
\end{array}\right.
\]
We generalize $\parts$ to DL knowledge bases\label{page:parts} $\kb$ by defining $\parts(\kb)$ to be the union of the sets $\parts(\conC)$ for all TBox axioms $\conC$ in $\kb$, where we express TBox axioms as simple concept expressions as explained above.

Given an extended knowledge base $\kb$, we obtain its negation normal form $\NNF(\kb)$ by keeping all RBox statements and DL-safe rules untouched and converting every TBox concept $C$ into its negation normal form $\NNF(C)$ in the usual, recursively defined way:
\begin{center}
\begin{tabular}{@{}ll}
\begin{tabular}{l@ {$\;\;\defeq\;\;$}l}
$\NNF(\neg \top)$ & $\bot$ \\
$\NNF(\neg \bot)$ & $\top$ \\
$\NNF(\conC)$ & $\conC$ if $\conC\in \{A,\neg A,\top,\bot\}$ \\
$\NNF(\neg\neg \conC)$ & $\NNF(\conC)$ \\
$\NNF(\conC\sqcap \conD)$ & $\NNF(\conC) \sqcap \NNF(\conD)$ \\
$\NNF(\neg(\conC\sqcap \conD))$ & $\NNF(\neg \conC) \sqcup \NNF(\neg \conD)$ \\
$\NNF(\conC\sqcup \conD)$ & $\NNF(\conC) \sqcup \NNF(\conD)$ \\
$\NNF(\neg(\conC\sqcup \conD))$ & $\NNF(\neg \conC) \sqcap \NNF(\neg \conD)$ \\
\end{tabular}&
\begin{tabular}{l@ {$\;\;\defeq\;\;$}l}
$\NNF(\forall \rolU.\conC)$ & $\forall \rolU.\NNF(\conC)$ \\
$\NNF(\neg\forall \rolU.\conC)$ & $\exists \rolU.\NNF(\neg \conC)$ \\
$\NNF(\exists \rolU.\conC)$ & $\exists \rolU.\NNF(\conC)$ \\
$\NNF(\neg\exists \rolU.\conC)$ & $\forall \rolU.\NNF(\neg \conC)$ \\
$\NNF(\atmost{n}\rolR.\conC)$ & $\atmost{n}\rolR.\NNF(\conC)$ \\
$\NNF(\neg\,\atmost{n}\rolR.\conC)$ & $\atleast{(n+1)}\rolR.\NNF(\conC)$ \\
$\NNF(\atleast{n}\rolR.\conC)$ & $\atleast{n}\rolR.\NNF(\conC)$ \\
$\NNF(\neg\,\atleast{n}\rolR.\conC)$ & $\atmost{(n-1)} \rolR.\NNF(\conC)$ \\
\end{tabular}\\
\end{tabular}
\end{center}
It is well known that $\kb$ and $\NNF(\kb)$ are semantically equivalent.

In places, we will additionally require another well-known normalization step that simplifies the structure of $\kb$ by \define{flattening} it to a knowledge base $\FLAT(\kb)$. This is achieved by transforming $\kb$ into negation normal form and exhaustively applying the following transformation rules:
\begin{iteMize}{$\bullet$}
\item Select an outermost occurrence of $\quantor\rolU.\conD$ in $\kb$, such that $\quantor\in\{\exists,\forall,\atmost{n}\!,\atleast{n}\!\}$ and $\conD$ is a non-atomic concept.
\item Substitute this occurrence with $\quantor\rolU.\conF$ where $\conF$
is a fresh concept name (i.e., one not occurring in the knowledge base).
\item If $\quantor\in\{\exists,\forall,\atleast{n}\!\}$, add $\neg\conF\sqcup\conD$ to the knowledge base.
\item If $\quantor = \atmost{n}\!$ add $\NNF(\neg\conD)\sqcup\conF$ to the
knowledge base.
\end{iteMize}
Obviously, this procedure terminates, yielding a flat knowledge base $\FLAT(\kb)$ all TBox axioms of which are $\sqcap,\sqcup$-expressions over formulae of the form $\top$, $\bot$, $\conA$, $\neg\conA$, or $\quantor\rolU.\conA$ with $\conA$ an atomic concept name. Flattening is known to be a satisfiability-preserving transformation; we include the proof for the sake of self-containedness.

\begin{proposition}\label{prop:flateq}
For every \SHIQb{} knowledge base $\kb$, we find that $\kb$ and $\FLAT(\kb)$ are equisatisfiable.
\end{proposition}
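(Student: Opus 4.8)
The plan is to show that a single application of the flattening transformation preserves satisfiability, and then conclude by induction on the number of applications, since the procedure terminates. So let $\kb'$ be obtained from $\kb$ (already in NNF, which is harmless by the known equivalence of $\kb$ and $\NNF(\kb)$) by one rewriting step: an outermost occurrence of $\quantor\rolU.\conD$ with $\conD$ non-atomic is replaced by $\quantor\rolU.\conF$ for a fresh concept name $\conF$, and either $\neg\conF\sqcup\conD$ or $\NNF(\neg\conD)\sqcup\conF$ is added, depending on whether $\quantor$ is existential/universal/at-least or at-most. It suffices to prove that $\kb$ and $\kb'$ are equisatisfiable, since RBox axioms and DL-safe rules are untouched and the concept occurrence being replaced is within some TBox axiom.

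For the direction from $\kb'$ to $\kb$: given a model $\Inter'$ of $\kb'$, I would argue that $\Inter'$ is already a model of $\kb$. The point is that the added axiom forces $\conF^{\Inter'}$ to be related to $\conD^{\Inter'}$ in the right direction — for $\quantor\in\{\exists,\forall,\atleast{n}\}$ the axiom $\neg\conF\sqcup\conD$ gives $\conF^{\Inter'}\subseteq\conD^{\Inter'}$, so every element satisfying $\quantor\rolU.\conF$ also satisfies $\quantor\rolU.\conD$ (monotonicity of $\exists$, $\forall$ under superconcepts in the filler is the relevant observation, noting $\forall$ and $\atleast{n}$ are monotone in the filler while $\atmost{n}$ is antitone, which is exactly why the at-most case uses the reversed inclusion $\conD^{\Inter'}\subseteq\conF^{\Inter'}$ obtained from $\NNF(\neg\conD)\sqcup\conF$). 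Hence replacing $\quantor\rolU.\conF$ back by $\quantor\rolU.\conD$ can only shrink or preserve the extension in a way compatible with the polarity; since the occurrence is outermost and $\kb$ is in NNF, this occurrence sits under an even number of negations, so the containing axiom remains satisfied. One must be slightly careful here and phrase the monotonicity argument directly at the level of the whole axiom containing the occurrence, using that NNF guarantees positive polarity.

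For the converse, from $\kb$ to $\kb'$: given a model $\Inter$ of $\kb$, extend it to $\Inter'$ by interpreting the fresh name $\conF$ as $(\quantor\rolU.\conD)^{\Inter}$ when $\quantor\in\{\exists,\forall,\atleast{n}\}$, and — in the at-most case — as $\conD^{\Inter}$ (or more uniformly, choose $\conF^{\Inter'}$ so that the newly added disjunction is satisfied while $\quantor\rolU.\conF$ and $\quantor\rolU.\conD$ have the same extension). With this choice the replaced subconcept has the same extension under $\Inter'$ as the original under $\Inter$, so all TBox axioms of $\kb'$ except the new one are satisfied, and the new one is satisfied by construction. The main obstacle — really the only subtlety — is getting the polarities and the monotonicity/antitonicity of the quantifiers exactly right, and making sure the fresh-name extension is chosen consistently with the specific disjunction the algorithm adds in each of the two cases; once that bookkeeping is done, both directions are routine. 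Finally, since each rewriting step strictly decreases the multiset of nesting depths of non-atomic fillers (so the process terminates), iterating the equisatisfiability of $\kb$ and $\kb'$ yields equisatisfiability of $\kb$ and $\FLAT(\kb)$.
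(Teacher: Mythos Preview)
Your approach is the same as the paper's: prove equisatisfiability of a single flattening step and conclude by induction. The direction from $\kb'$ back to $\kb$ is handled correctly and mirrors the paper's monotonicity case analysis.

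There is, however, a genuine gap in your converse direction. The explicit choice $\conF^{\Inter'}\defeq(\quantor\rolU.\conD)^{\Inter}$ for $\quantor\in\{\exists,\forall,\atleast{n}\}$ does not do what you claim. It does not validate the added axiom $\neg\conF\sqcup\conD$: that axiom requires $\conF^{\Inter'}\subseteq\conD^{\Inter}$, and in general $(\exists\rolU.\conD)^{\Inter}\not\subseteq\conD^{\Inter}$. Nor does it give $(\quantor\rolU.\conF)^{\Inter'}=(\quantor\rolU.\conD)^{\Inter}$, since with this choice $\conF^{\Inter'}$ bears no useful relation to $\conD^{\Inter}$ as a \emph{filler} set. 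The fix is exactly what your parenthetical already gestures at: set $\conF^{\Inter'}\defeq\conD^{\Inter}$ uniformly in all four cases. Then $(\quantor\rolU.\conF)^{\Inter'}=(\quantor\rolU.\conD)^{\Inter}$ is immediate, so the modified TBox axiom remains satisfied, and the added disjunction (in either variant) holds as an equality. Incidentally, the paper's own treatment of this direction is a single sentence and states the same problematic choice; the intended definition there is almost certainly $\conD^{\Inter'}$ as well.
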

\proof
We first prove inductively that every model of $\FLAT(\kb)$ is a model of $\kb$. Let $\kb'$ be an intermediate knowledge base and let $\kb''$ be the result of applying one single substitution step to $\kb'$ as described in the above procedure. We now show that any model $\Inter$ of $\kb''$ is a model of $\kb'$. Let $\quantor\rolU.\conD$ be the concept expression substituted in $\kb'$. Note that after every substitution step, the knowledge base is still in negation normal form. Thus, we see that $\quantor\rolU.\conD$ occurs outside the scope of any negation or quantifier in a $\kb'$ axiom $\conE'$, and the same is the case for $\quantor\rolU.\conF$ in the respective $\kb''$ axiom $\conE''$ obtained after the substitution. Hence, if we show that $(\quantor\rolU.\conF)^\Inter \subseteq (\quantor\rolU.\conD)^\Inter$, we can conclude that $\conE''^\Inter \subseteq \conE'^\Inter$. From $\Inter$ being a model of $\kb''$ and therefore $\conE''^\Inter = \Delta^\Inter$, we would then easily derive that $\conE'^\Inter=\Delta^\Inter$ and hence find that $\Inter \models \kb'$, as all other axioms from $\kb'$ are trivially satisfied due to their presence in $\kb''$.

It remains to show $(\quantor\rolU.\conF)^\Inter \subseteq (\quantor\rolU.\conD)^\Inter$. To show this, consider some arbitrary $\delem \in (\quantor\rolU.\conF)^\Inter$. We distinguish various cases:
\begin{iteMize}{$\bullet$}
\item $\quantor = \atleast{n}$\\
Then there are distinct individuals $\delem_1,\ldots,\delem_n\in\Delta^\Inter$ with $\tuple{\delem,\delem_i} \in \rolU^\Inter$ and $\delem_i\in \conF^\Inter$ for $1\leq i\leq n$. Since $\neg\conF\sqcup\conD\in\kb''$, we have $\Inter\models\neg\conF\sqcup\conD$, and therefore $\delem_i\in \conD^\Inter$ for all the $n$ distinct $\delem_i$. Thus $\delem \in (\atleast{n}\rolU.\conF)^\Inter$.
\item $\quantor = \atmost{n}$\\
Then the number of individuals $\delem'\in\Delta^\Inter$ with $\tuple{\delem,\delem'} \in \rolU^\Inter$ and $\delem'\in \conF^\Inter$ is not greater than $n$. Since $\NNF(\neg\conD)\sqcup\conF\in\kb''$, we know $\conD^\Inter \subseteq \conF^\Inter$. Thus, also the number of individuals $\delem'\in\Delta^\Inter$ with $\tuple{\delem,\delem'} \in \rolU^\Inter$ and $\delem'\in \conD^\Inter$ cannot be greater than $n$, leading to the conclusion $\delem \in (\atmost{n}\rolU.\conD)^\Inter$. Hence, we have $(\atmost{n}\rolU.\conF)^\Inter \subseteq (\atmost{n}\rolU.\conD)^\Inter$.
\end{iteMize}
The arguments for $\quantor = \exists$ and $\quantor = \forall$ are very similar, since these cases can be treated like $\atleast{1}\rolU.\conF$ and $\atmost{0}\rolU.\neg\conF$, respectively. Thus we obtain $\delem \in (\quantor\rolU.\conD)^\Inter$ in each case as required.

For the other direction of the claim, note that every model $\Inter$ of $\kb$ can be transformed into a model $\Jnter$ of $\FLAT(\kb)$ by following the flattening process described above: Let $\kb''$ result from $\kb'$ by substituting $\quantor \rolU .\conD$ by $\quantor \rolU .\conF$ and adding the respective axiom. Furthermore, let $\Inter'$ be a model of $\kb'$. Now we construct the interpretation $\Inter''$ as follows: $\conF^{\Inter''} \defeq (\quantor \rolU .\conD)^{\Inter'}$ and for all other concept and role names $\con{N}$ we set $\con{N}^{\Inter''} \defeq \con{N}^{\Inter'}$. Then $\Inter''$ is a model of $\kb''$.
\qed

\section{Building Models from Domino Sets}\label{sec:dominoes}

In this section, we introduce the notion of a set of \define{dominoes} for a given \ALCIb{} TBox. Rules (and thus ABox axioms) will be incorporated in Section~\ref{sec:abox} later on. Intuitively, a domino abstractly represents two individuals in an \ALCIb{} interpretation, reflecting their satisfied concepts and mutual role relationships. Thereby, dominoes are conceptually very similar to the concept of 2-types, as used in investigations on two-variable fragments of first-order logic, e.g., by \cite{GradelOR97}. We will see that suitable sets of such two-element pieces suffice to reconstruct models of \ALCIb{}, which also reveals certain model-theoretic properties of this not so common DL. In particular, every satisfiable \ALCIb{} TBox admits tree-shaped models. This result is rather a by-product of our main goal of decomposing models into unstructured sets of local domino components, but it explains why our below constructions have some similarity with common approaches of showing tree-model properties by unraveling models.

After introducing the basics of our domino representation, we present an algorithm for deciding satisfiability of an \ALCIb{} terminology based on sets of dominoes.

\subsection{From Interpretations to Dominoes}\label{ssec:domint}

We now introduce the basic notion of a domino set, and its relationship to interpretations. Given a DL with concepts $\lang{C}$ and roles $\lang{R}$, a \define{domino} over $\mathscr{C}\subseteq\lang{C}$ is an arbitrary triple $\tuple{\mathscr{A},\mathscr{R},\mathscr{B}}$, where $\mathscr{A}, \mathscr{B}\subseteq \mathscr{C}$ and $\mathscr{R}\subseteq \lang{R}$. In the following, we will always assume a fixed language and refer to dominoes over that language only. We now formalize the idea of deconstructing an interpretation into a set of dominoes.

\begin{definition}\label{def:domp}
Given an interpretation $\Inter = \tuple{\Delta^\Inter, \cdot^\Inter}$, and a set $\mathscr{C}\subseteq\lang{C}$ of concept expressions, the \define{domino projection} of $\Inter$ w.r.t.~$\mathscr{C}$, denoted by $\domp_{\mathscr{C}}(\Inter)$ is the set that contains, for all $\delem,\delem'\in\Delta^\Inter$, the triple $\tuple{\mathscr{A},\mathscr{R},\mathscr{B}}$ with \vspace{-0.5ex}
\[
 \mathscr{A} = \{ \conC\in\mathscr{C} \mid \delem \in \conC^\Inter \},\qquad
 \mathscr{R} = \{ \rolR\in\lang{R} \mid \tuple{\delem,\delem'} \in \rolR^\Inter \},\qquad
 \mathscr{B} = \{ \conC\in\mathscr{C} \mid \delem' \in \conC^\Inter \}.
\]
\end{definition}

\fancypicture{Figure~\ref{fig:DominoCrea} gives an intuition how domino projections are defined and justifies the used metaphor.
\begin{figure}[bht]
\includegraphics[width=1\textwidth]{figures/executiveDomCreation}
\caption{The idea behind domino projections.\label{fig:DominoCrea}}
\end{figure}}
It is easy to see that domino projections do not faithfully represent the structure of the interpretation that they were constructed from. But, as we will see below, domino projections capture enough information to reconstruct models of a TBox ${\mathscr{T}}$, as long as $\mathscr{C}$ is chosen to contain at least $\parts({\mathscr{T}})$. For this purpose, we introduce the inverse construction of interpretations from arbitrary domino sets.

\begin{definition}\label{def:domint}
Given a set $\DO$ of dominoes, the induced \define{domino interpretation} $\Inter(\DO)=\tuple{\Delta^\Inter,\cdot^\Inter}$ is defined as follows:

\begin{enumerate}[(1)]
\item $\Delta^\Inter$ consists of all nonempty finite words over $\DO$ where, for each pair of subsequent letters $\tuple{\mathscr{A},\mathscr{R},\mathscr{B}}$ and $\tuple{\mathscr{A}',\mathscr{R}',\mathscr{B}'}$ in a word, we have
$\mathscr{B}=\mathscr{A}'$.
\item For a word $\selem = \tuple{\mathscr{A}_1,\mathscr{R}_1,\mathscr{A}_2} \tuple{\mathscr{A}_2,\mathscr{R}_2,\mathscr{A}_3} \ldots \tuple{\mathscr{A}_{i-1},\mathscr{R}_{i-1},\mathscr{A}_i}$ and a concept name $\conA\in\connames$, we define $\val(\selem)\defeq \mathscr{A}_i$ and set $\selem \in \conA^\Inter$ iff $\conA\in \val(\selem)$.
\item For a role name $\rolR\in\rolnames$, we set $\tuple{\selem_1,\selem_2}\in\rolR^\Inter$ if \vspace{-0.5ex}
\[\selem_2 = \selem_1 \tuple{\mathscr{A},\mathscr{R},\mathscr{B}} \text{ with }\rolR \in
\mathscr{R}  \quad\text{ or }\quad \selem_1 = \selem_2 \tuple{\mathscr{A},\mathscr{R},\mathscr{B}} \text{ with } \Inv(\rolR) \in \mathscr{R}.\]
\end{enumerate}
\end{definition}

We can now show that certain domino projections contain enough information to reconstruct models of a TBox.

\begin{proposition}\label{prop:idompi}
Consider a set $\mathscr{C}\subseteq\lang{C}$ of concept expressions, and an interpretation $\Jnter$, and let $\Knter\defeq \Inter(\domp_{\mathscr{C}}(\Jnter))$ denote the induced domino interpretation of the domino projection of $\Jnter$ w.r.t.\ $\mathscr{C}$. Then, for any \ALCIb{} concept expression $\conC\in\lang{C}$ with $\parts(\conC)\subseteq\mathscr{C}$, we have that  $\Jnter\models\conC$ iff $\Knter\models\conC$.

Especially, for any \ALCIb{} TBox ${\mathscr{T}}$, we have $\Jnter\models{\mathscr{T}}$ iff $\Inter(\domp_{\parts({\mathscr{T}})}(\Jnter))\models{\mathscr{T}}$.
\end{proposition}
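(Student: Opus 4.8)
The plan is to prove the sharper, pointwise statement that $\Knter$ and $\Jnter$ agree on every \ALCIb{} concept $\conD$ with $\parts(\conD)\subseteq\mathscr{C}$, relative to a ``type match'' between their elements, and then read off the proposition. I first record a routine auxiliary fact about role expressions: in any interpretation $\Inter$, if $\mathscr{R}=\{\rolR\in\lang{R}\mid\tuple{x,y}\in\rolR^\Inter\}$ is the set of atomic roles linking $x$ to $y$, then $\tuple{x,y}\in\rolU^\Inter$ iff $\roleval{\mathscr{R}}{\rolU}$, for every Boolean role expression $\rolU$; this is immediate by induction on $\rolU$ from Table~\ref{table:SHIQb} and the inductive definition of $\roleval{}{}$. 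Applied to $\Knter=\Inter(\DO)$ this has the useful consequence that, for a \emph{restricted} $\rolU$, the $\rolU$-successors of a word $\selem$ are precisely the words adjacent to it (a one-letter extension $\selem\cdot d$ with $\val(\selem)$ the first component of $d$, or the one-letter-shorter prefix $\selem'$ with $\selem=\selem'\cdot d$) whose connecting atomic-role set --- read off from $d$, using $\Inv$ in the prefix case --- entails $\rolU$; restrictedness of $\rolU$ excludes $\rolU$-edges between non-adjacent words and $\rolU$-loops, since those carry the empty role set.

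The core of the proof is the claim: for every \ALCIb{} concept $\conD$ with $\parts(\conD)\subseteq\mathscr{C}$, every $\selem\in\Delta^\Knter$, and every $\delem\in\Delta^\Jnter$ with $\{\conC'\in\mathscr{C}\mid\delem\in\conC'^\Jnter\}=\val(\selem)$, we have $\selem\in\conD^\Knter$ iff $\delem\in\conD^\Jnter$. The proof is by induction on the structure of $\conD$. For $\conD\in\connames$ we use $\conD\in\parts(\conD)\subseteq\mathscr{C}$ together with clause~(2) of Definition~\ref{def:domint}: $\selem\in\conD^\Knter$ iff $\conD\in\val(\selem)$ iff $\delem\in\conD^\Jnter$; the cases $\conD\in\{\top,\bot\}$ are trivial, and the cases $\conD=\neg\conD_1$, $\conD=\conD_1\sqcap\conD_2$, $\conD=\conD_1\sqcup\conD_2$ follow at once from the induction hypothesis for the same $\selem,\delem$, since $\parts$ distributes over these connectives so that $\parts(\conD_i)\subseteq\mathscr{C}$.

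The substantive cases are $\conD=\exists\rolU.\conE$ and $\conD=\forall\rolU.\conE$, and here is where the hypothesis $\parts(\conD)\subseteq\mathscr{C}$ does its real work: since $\conD\in\parts(\conD)\subseteq\mathscr{C}$, whether a $\Jnter$-element lies in $\conD^\Jnter$ is recorded in its $\mathscr{C}$-type, hence in $\val$ of any word matching it. For the ``$\Jnter\Rightarrow\Knter$'' direction of the existential case, a $\rolU$-successor $\delem'$ of $\delem$ in $\Jnter$ yields a domino $d=\tuple{\val(\selem),\mathscr{R},\mathscr{B}}\in\DO$ with $\roleval{\mathscr{R}}{\rolU}$ (by the auxiliary fact) and $\mathscr{B}=\{\conC'\in\mathscr{C}\mid\delem'\in\conC'^\Jnter\}$; then $\selem\cdot d$ is a $\rolU$-successor of $\selem$ in $\Knter$ that matches $\delem'$, and the induction hypothesis for $\conE$ transfers $\conE$-membership from $\delem'$ to $\selem\cdot d$. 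For the ``$\Knter\Rightarrow\Jnter$'' direction, a $\rolU$-successor $\selem'$ of $\selem$ in $\Knter$ is adjacent to $\selem$; choosing any pair in $\Jnter$ that realizes the connecting domino produces a $\Jnter$-element with the same $\mathscr{C}$-type as $\delem$ --- hence, as $\conD\in\mathscr{C}$, also in $\conD^\Jnter$ --- together with a $\rolU$-successor of it matching $\selem'$, so the induction hypothesis for $\conE$ closes the argument; the two orientations of adjacency are symmetric (the prefix case uses $\Inv$). The universal cases are dual: for ``$\Jnter\Rightarrow\Knter$'' one checks every $\Knter$-successor $\selem\cdot d$ of $\selem$ by realizing $d$ and invoking $\delem\in(\forall\rolU.\conE)^\Jnter$ on a same-type element; for ``$\Knter\Rightarrow\Jnter$'' one checks every $\Jnter$-successor $\delem'$ of $\delem$ by attaching the appropriate domino to $\selem$. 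I expect this bundle of quantifier cases to be the main obstacle, precisely because the obvious map $\selem\mapsto$``some element with $\mathscr{C}$-type $\val(\selem)$'' is \emph{not} a $p$-morphism (it need not respect which dominoes are realized), so one must carry the matching $\Jnter$-element along explicitly and lean on the fact that quantified subconcepts of $\conC$ have themselves been put into $\mathscr{C}$.

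It remains to derive the proposition. Every $\selem\in\Delta^\Knter$ has a matching $\Jnter$-element: its last letter is a domino realized by some pair $\tuple{\delem,\delem'}$ of $\Jnter$, and $\delem'$ has $\mathscr{C}$-type $\val(\selem)$; so if $\conC^\Jnter=\Delta^\Jnter$ then $\delem'\in\conC^\Jnter$ and the claim gives $\selem\in\conC^\Knter$, whence $\Jnter\models\conC$ implies $\Knter\models\conC$. Conversely, every $\delem\in\Delta^\Jnter$ has a matching word, namely the single letter equal to the domino of the pair $\tuple{\delem,\delem}$, whose $\val$ is the $\mathscr{C}$-type of $\delem$; so if $\conC^\Knter=\Delta^\Knter$ then that word lies in $\conC^\Knter$ and the claim gives $\delem\in\conC^\Jnter$, whence $\Knter\models\conC$ implies $\Jnter\models\conC$ (the case $\Delta^\Jnter=\emptyset$, where also $\Delta^\Knter=\emptyset$, is vacuous). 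Finally, for the ``especially'' part, every axiom $\conC\in\mathscr{T}$ satisfies $\parts(\conC)\subseteq\parts(\mathscr{T})=\mathscr{C}$, so $\Jnter\models\mathscr{T}$ iff $\Jnter\models\conC$ for all $\conC\in\mathscr{T}$ iff $\Knter\models\conC$ for all such $\conC$ iff $\Knter\models\mathscr{T}$, with $\Knter=\Inter(\domp_{\parts(\mathscr{T})}(\Jnter))$.
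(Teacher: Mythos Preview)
Your proposal is correct and follows essentially the same route as the paper: a pointwise claim that type-matched elements $\selem$ and $\delem$ agree on all concepts $\conD$ with $\parts(\conD)\subseteq\mathscr{C}$, proved by structural induction, with the quantifier cases handled exactly as in the paper (extend $\selem$ by a domino in one direction, realize the connecting domino in $\Jnter$ and exploit $\conD\in\mathscr{C}$ in the other). Your write-up is in fact more explicit than the paper's in several places --- the auxiliary fact relating $\rolU^\Inter$ to $\roleval{\mathscr{R}}{\rolU}$, the role of restrictedness in confining $\rolU$-edges to adjacent words, and the concrete construction of matching elements (last letter's realizing pair for $\selem$, the diagonal domino $\tuple{\delem,\delem}$ for $\delem$) --- where the paper simply asserts these hold.
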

\proof
Consider some $\conC\in\lang{C}$ as in the claim.
We first show the following: given any $\Jnter$-individual $\delem$ and $\Knter$-individual $\selem$ such that $\val(\selem) = \{\conD\in\mathscr{C}\mid \delem\in\conD^{\Jnter}\}$, we find that $\selem\in\conC^{\Knter}$ iff $\delem\in\conC^{\Jnter}$. Clearly, the overall claim follows from that statement using the observation that a suitable $\delem\in\Delta^{\Jnter}$ must exist for all $\selem\in\Delta^{\Knter}$ and vice versa. We proceed by induction over the structure of $\conC$, noting that $\parts(\conC)\subseteq\mathscr{C}$ implies $\parts(\conD)\subseteq\mathscr{C}$ for any subconcept $\conD$ of $\conC$.

The base case $\conC\in\connames$ is immediately satisfied by our assumption on the relationship of $\delem$ and $\selem$, since $\conC\in\parts(\conC)$. For the induction step, we first note that the case $\conC\in\{\top,\bot\}$ is also trivial. For $\conC = \neg\conD$ and $\conC = \conD\sqcap\conD'$ as well as $\conC = \conD\sqcup\conD'$, the claim follows immediately from the induction hypothesis for $\conD$ and $\conD'$.

Next consider the case $\conC=\exists\rolU.\conD$, and assume that $\delem\in\conC^{\Jnter}$. Hence there is some $\delem'\in\Delta^{\Jnter}$ such that $\tuple{\delem,\delem'}\in\rolU^{\Jnter}$ and $\delem'\in\conD^{\Jnter}$. Then the pair $\tuple{\delem,\delem'}$ generates a domino $\tuple{\mathscr{A},\mathscr{R},\mathscr{B}}$ and $\Delta^{\Knter}$ contains $\selem' = \selem\tuple{\mathscr{A},\mathscr{R},\mathscr{B}}$. $\tuple{\delem,\delem'}\in\rolU^{\Jnter}$ implies $\roleval{\mathscr{R}}{\rolU}$ (by definition of $\vdash$ and due to the fact that $\mathscr{R}$ contains exactly those $\rolR\in \lang{R}$ with $\tuple{\delem,\delem'}\in \rolR^\Jnter$), and hence $\tuple{\selem,\selem'}\in\rolU^{\Knter}$. Applying the induction hypothesis to $\conD$, we conclude $\selem'\in\conD^{\Knter}$. Now $\selem\in\conC^{\Knter}$ follows from the construction of $\Knter$.

For the converse, assume that $\selem\in\conC^{\Knter}$. Hence there is some $\selem'\in\Delta^{\Knter}$ such that $\tuple{\selem,\selem'}\in\rolU^{\Knter}$ and $\selem'\in\conD^{\Knter}$. By the definition of $\Knter$, there are two possible cases:
\begin{iteMize}{$\bullet$}
\item $\selem' = \selem\tuple{\val(\selem),\mathscr{R},\val(\selem')}$ and $\roleval{\mathscr{R}}{\rolU}$: Consider the two $\Jnter$-individuals $\tuple{\delem',\delem''}$ generating the domino $\tuple{\val(\selem),\mathscr{R},\val(\selem')}$. From $\selem'\in\conD^{\Knter}$ and the induction hypothesis, we obtain $\delem''\in\conD^\Jnter$. Together with $\tuple{\delem',\delem''}\in\rolU^{\Jnter}$ this implies $\delem'\in \conC^{\Jnter}$. Since $\conC = \exists\rolU.\conD \in \mathscr{C}$, we also have $\conC \in \val(\selem)$ and thus $\delem\in\conC^{\Jnter}$ as claimed.
\item $\selem = \selem'\tuple{\val(\selem'),\mathscr{R},\val(\selem)}$ and $\roleval{\Inv(\mathscr{R})}{\rolU}$: This case is similar to the first case, merely exchanging the order of $\tuple{\delem',\delem''}$ and using $\Inv(\mathscr{R})$ instead of $\mathscr{R}$.
\end{iteMize}

Finally, the case $\conC=\forall\rolU.\conD$ is dual to the case $\conC=\exists\rolU.\conD$, and we will omit the repeated argument. Note, however, that this case does not follow from the semantic equivalence of $\forall\rolU.\conD$ and $\neg\exists\rolU.\neg\conD$, since the proof hinges upon the fact that $\neg\conD$ is contained in $\mathscr{C}$ which is not given directly.
\qed

\fancypicture{Figure~\ref{fig:DominoRec} provides an intuition about the correspondence just proven: given a model, its domino projection can be used to ``reconstruct'' a model in a generic way. In the general case, however, the constructed model is structurally different from the original one. In particular, it is tree-like.
\begin{figure}[bht]
\begin{center}
\includegraphics[width=0.8\textwidth]{figures/executiveModelReconstructionCut}
\caption{Reconstructing models from domino projections.\label{fig:DominoRec}}
\end{center}
\end{figure}}

\subsection{Constructing Domino Sets}

As shown in the previous section, the domino projection of a model of an \ALCIb{} TBox can contain enough information for reconstructing a model. This observation can be the basis for designing an algorithm that decides TBox satisfiability. Usually (especially in tableau-based algorithms), checking satisfiability amounts to the attempt to construct a (representation of a) model. As we have seen, in our case it suffices to try to construct just a model's domino projection. If this can be done, we know that there is a model, if not, there is none.

In what follows, we first describe the iterative construction of such a domino set from a given TBox, and then show that it is indeed a decision procedure for TBox satisfiability.

\begin{algorithm}[t]
  \begin{algorithmic} [1]
    \REQUIRE \hspace*{1.5ex} ${\mathscr{T}}$ an \ALCIb{} TBox, $\mathscr{C}=\parts(\FLAT({\mathscr{T}}))$
    \ENSURE the canonical domino set $\DO_{\mathscr{T}}$ of ${\mathscr{T}}$\\
    \STATE initialize $\DO_0$ as the set of all dominoes $\tuple{\mathscr{A},\mathscr{R},\mathscr{B}}$ over $\mathscr{C}$ satisfying:
\STATE \hspace{7mm} for all $\conC\in\FLAT({\mathscr{T}})$, the GCI $\bigsqcap_{\conD\in\mathscr{A}}\conD\sqcap \bigsqcap_{\conD\in\mathscr{C}\setminus\mathscr{A}}\neg\conD\ssb\conC$ is a tautology\footnote{Please note that the formulae in $\FLAT({\mathscr{T}})$ and in $\mathscr{A}\subseteq\mathscr{C}$ are such that this can easily be checked by evaluating the Boolean operators in $\conC$ as if $\mathscr{A}$ was a set of true propositional variables.}\hfill\textbf{(kb)}
\STATE \hspace{7mm} for all $\exists\rolU.\conA \in \mathscr{C}$ with $\conA\in\mathscr{B}$ and $\roleval{\mathscr{R}}{\rolU}$, we have $\exists\rolU.\conA \in \mathscr{A}$,\hfill\textbf{(ex)}
\STATE \hspace{7mm} for all $\forall\rolU.\conA \in \mathscr{C}$ with $\forall\rolU.\conA \in \mathscr{A}$ and $\roleval{\mathscr{R}}{\rolU}$, we have $\conA\in\mathscr{B}$.\hfill\textbf{(uni)}
    \STATE i := 0
\REPEAT
 \STATE i := i+1
 \STATE determine $\DO_{i}$ as the set of all dominoes $\tuple{\mathscr{A},\mathscr{R},\mathscr{B}}\in\DO_{i-1}$ satisfying:
 \STATE\hspace{3mm} for all $\exists\rolU.\conA \in \mathscr{A}$, there is some $\tuple{\mathscr{A},\mathscr{R}',\mathscr{B}'}\in\DO_{i-1}$ with $\roleval{\mathscr{R}'}{\rolU}$ and $\conA\in\mathscr{B}'$,\hfill\textbf{(delex)}
 \STATE\hspace{3mm} for all $\forall\rolU.\conA \in \mathscr{C} \setminus \mathscr{A}$, there is some $\tuple{\mathscr{A},\mathscr{R}',\mathscr{B}'}\in\DO_{i-1}$ with $\roleval{\mathscr{R}'}{\rolU}$ but $\conA\notin\mathscr{B}'$,\hfill\textbf{(deluni)}
 \STATE\hspace{3mm} $\tuple{\mathscr{B},\Inv(\mathscr{R}),\mathscr{A}}\in\DO_{i-1}$.\hfill\textbf{(sym)}~\mbox{}%
\UNTIL $\DO_{i} = \DO_{i-1}$
\STATE $\DO_{\mathscr{T}}:=\DO_{i}$
\RETURN $\DO_{\mathscr{T}}$
  \end{algorithmic}
  \caption{Computing the canonical domino set $\DO_{\mathscr{T}}$ of a TBox ${\mathscr{T}}$}
  \label{def:builddoms}
\end{algorithm}

Algorithm~\ref{def:builddoms} describes the construction of the canonical domino set $\DO_{\mathscr{T}}$ of an \ALCIb{} TBox $\mathscr{T}$.\footnotetext[7]{Please note that the formulae in $\FLAT({\mathscr{T}})$ and in $\mathscr{A}\subseteq\mathscr{C}$ are such that this can easily be checked by evaluating the Boolean operators in $\conC$ as if $\mathscr{A}$ was a set of true propositional variables.}
Thereby, roughly speaking, condition \textbf{kb} ensures that all the concept parts $\mathscr{A}$ and $\mathscr{B}$ of the constructed domino set abide by the axioms of the considered TBox. The condition \textbf{ex} guarantees that, in every domino $\tuple{\mathscr{A},\mathscr{R},\mathscr{B}}$, the concept set $\mathscr{A}$ must contain all the existential concepts for which $\mathscr{R}$ and $\mathscr{B}$ serve as witnesses. Conversely, \textbf{uni} makes sure that every universally quantified concept recorded in $\mathscr{A}$ is appropriately propagated to $\mathscr{B}$, given a suitable $\mathscr{R}$. Once enforced, the conditions \textbf{kb}, \textbf{ex}, and \textbf{uni} remain valid even if the domino set is reduced further, hence they need to be taken care of only at the beginning of the algorithm. In contrast, the conditions \textbf{delex}, \textbf{deluni}, and \textbf{sym} may be invalidated again by removing dominoes from the set, thus they need to be applied in an iterated way until a fixpoint is reached. Condition \textbf{delex} removes all dominoes with the concept set $\mathscr{A}$ if $\mathscr{A}$ contains an existential concept for which no appropriate ``witness'' domino (in the above sense) can be found in the set. Likewise, \textbf{deluni} removes all dominoes with the concept set $\mathscr{A}$ if $\mathscr{A}$ does \emph{not} contain a universal concept which should hold given all the remaining dominoes. Finally, \textbf{sym} ensures that the domino set contains only dominoes that do have a ``symmetric partner'', i.e., one that is created by swapping $\mathscr{A}$ with $\mathscr{B}$ and inverting all of $\mathscr{R}$.

Given that every domino $\langle\mathscr{A},\mathscr{R},\mathscr{B}\rangle$ satisfies $\mathscr{A},\mathscr{B}\subseteq \mathscr{C}$ and $\mathscr{R} \subseteq \mathbf{R}$, and that both $\mathscr{C}$ and $\mathbf{R}$ are linearly bounded by the size of $\mathscr{T}$, $\DO_0$ is exponential in the size of the TBox, hence the iterative deletion of dominoes must terminate after at most exponentially many steps. Below we will show that this procedure is indeed sound and complete for checking TBox satisfiability. Before that, we will show a canonicity result for $\DO_{\mathscr{T}}$.

\begin{lemma}\label{lemma:domgreatest}
Consider an \ALCIb{} terminology $\mathscr{T}$ and an arbitrary model $\Inter$ of $\mathscr{T}$. Then the domino projection $\domp_{\parts(\FLAT(\mathscr{T}))}(\Inter)$ is contained in $\DO_{\mathscr{T}}$.
\end{lemma}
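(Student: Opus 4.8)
The plan is to show that the whole domino projection $\DO \defeq \domp_{\mathscr{C}}(\Inter)$, where $\mathscr{C} \defeq \parts(\FLAT(\mathscr{T}))$, survives every round of Algorithm~\ref{def:builddoms}, i.e.\ $\DO\subseteq\DO_i$ for all $i$, and hence $\DO\subseteq\DO_{\mathscr{T}}$. Since only the signature of $\FLAT(\mathscr{T})$ is relevant for $\domp_{\mathscr{C}}$ and the concept names freshly introduced by flattening do not occur in $\mathscr{T}$, we may assume without loss of generality that $\Inter$ is in fact a model of $\FLAT(\mathscr{T})$: any model of $\mathscr{T}$ extends to one by the construction given in the proof of Proposition~\ref{prop:flateq}. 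Throughout I will use the basic fact, already exploited in the proof of Proposition~\ref{prop:idompi}, that whenever $\tuple{\delem,\delem'}$ generates a domino $\tuple{\mathscr{A},\mathscr{R},\mathscr{B}}\in\DO$ (so $\mathscr{R}=\{\rolR\in\lang{R}\mid\tuple{\delem,\delem'}\in\rolR^\Inter\}$), then for every restricted role expression $\rolU$ we have $\tuple{\delem,\delem'}\in\rolU^\Inter$ iff $\roleval{\mathscr{R}}{\rolU}$ — this is just the observation that $\vdash$ evaluates the Boolean role expression against the set of roles actually relating $\delem$ to $\delem'$.

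\textbf{Base case $\DO\subseteq\DO_0$.} Fix $\tuple{\mathscr{A},\mathscr{R},\mathscr{B}}\in\DO$ generated by $\delem,\delem'$, so by Definition~\ref{def:domp} $\mathscr{A}=\{\conD\in\mathscr{C}\mid\delem\in\conD^\Inter\}$ and likewise for $\mathscr{B}$ with $\delem'$. Condition \textbf{(kb)}: each $\conC\in\FLAT(\mathscr{T})$ is, by definition of $\parts$, a $\sqcap,\sqcup$-combination of the concepts in $\parts(\conC)\subseteq\mathscr{C}$; the left-hand side of the GCI in \textbf{(kb)} pins down membership in every $\conD\in\mathscr{C}$ exactly as it holds at $\delem$, so the propositional value of $\conC$ forced by it coincides with $\delem\in\conC^\Inter$, which is $\true$ because $\Inter\models\conC$; hence the GCI is a tautology. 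Condition \textbf{(ex)}: if $\exists\rolU.\conA\in\mathscr{C}$, $\conA\in\mathscr{B}$ and $\roleval{\mathscr{R}}{\rolU}$, then $\delem'\in\conA^\Inter$ and $\tuple{\delem,\delem'}\in\rolU^\Inter$ by the basic fact, so $\delem\in(\exists\rolU.\conA)^\Inter$, i.e.\ $\exists\rolU.\conA\in\mathscr{A}$. Condition \textbf{(uni)} is dual: $\forall\rolU.\conA\in\mathscr{A}$ and $\roleval{\mathscr{R}}{\rolU}$ give $\delem\in(\forall\rolU.\conA)^\Inter$ and $\tuple{\delem,\delem'}\in\rolU^\Inter$, whence $\delem'\in\conA^\Inter$, i.e.\ $\conA\in\mathscr{B}$.

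\textbf{Inductive step.} Assume $\DO\subseteq\DO_{i-1}$ and fix $\tuple{\mathscr{A},\mathscr{R},\mathscr{B}}\in\DO$ generated by $\delem,\delem'$; I check that it passes the three deletion tests of the $i$-th round. \textbf{(delex)}: if $\exists\rolU.\conA\in\mathscr{A}$ then $\delem\in(\exists\rolU.\conA)^\Inter$, so some $\delem''$ has $\tuple{\delem,\delem''}\in\rolU^\Inter$ and $\delem''\in\conA^\Inter$; the pair $\tuple{\delem,\delem''}$ generates a domino $\tuple{\mathscr{A},\mathscr{R}',\mathscr{B}'}\in\DO\subseteq\DO_{i-1}$ (with first component again $\mathscr{A}$, since the first individual is again $\delem$) satisfying $\roleval{\mathscr{R}'}{\rolU}$ and $\conA\in\mathscr{B}'$, as required. \textbf{(deluni)} is analogous: for $\forall\rolU.\conA\in\mathscr{C}\setminus\mathscr{A}$ pick $\delem''$ with $\tuple{\delem,\delem''}\in\rolU^\Inter$ and $\delem''\notin\conA^\Inter$; the domino it generates with $\delem$ lies in $\DO_{i-1}$ and witnesses the condition. \textbf{(sym)}: the pair $\tuple{\delem',\delem}$ generates the domino $\tuple{\mathscr{B},\mathscr{R}'',\mathscr{A}}$ with $\mathscr{R}''=\{\rolR\mid\tuple{\delem',\delem}\in\rolR^\Inter\}$, and $\rolR\in\mathscr{R}''$ iff $\tuple{\delem',\delem}\in\rolR^\Inter$ iff $\Inv(\rolR)\in\mathscr{R}$, so $\mathscr{R}''=\Inv(\mathscr{R})$ and $\tuple{\mathscr{B},\Inv(\mathscr{R}),\mathscr{A}}\in\DO\subseteq\DO_{i-1}$. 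Hence $\DO\subseteq\DO_i$, and since $\DO_{\mathscr{T}}$ is a stage $\DO_i$, we conclude $\domp_{\parts(\FLAT(\mathscr{T}))}(\Inter)=\DO\subseteq\DO_{\mathscr{T}}$.

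\textbf{Expected difficulty.} This is in essence a loop-invariant maintenance argument, so there is no deep obstacle; the points needing genuine care are (i) legitimising the statement by passing to a model of $\FLAT(\mathscr{T})$ over the enlarged signature, (ii) the propositional observation underlying \textbf{(kb)} — namely that flattening guarantees each $\conC\in\FLAT(\mathscr{T})$ depends only on membership in $\mathscr{C}$ — and (iii) checking $\mathscr{R}''=\Inv(\mathscr{R})$ for \textbf{(sym)}. The universal cases would normally be delicate because $\neg\conD$-information is only available indirectly (as in the remark after Proposition~\ref{prop:idompi}), but here this is harmless, since \textbf{(uni)}/\textbf{(deluni)} are phrased directly in terms of $\conA\in\mathscr{B}$ versus $\conA\notin\mathscr{B}$ rather than via negated concepts.
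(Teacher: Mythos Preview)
Your proof is correct and takes essentially the same inductive approach as the paper: show $\DO\subseteq\DO_0$ by verifying \textbf{kb}/\textbf{ex}/\textbf{uni}, then maintain $\DO\subseteq\DO_i$ through \textbf{delex}/\textbf{deluni}/\textbf{sym}. If anything you are more careful than the paper, which silently assumes $\delem\in\conC^\Inter$ for all $\conC\in\FLAT(\mathscr{T})$ in the \textbf{kb} check without justifying the passage from $\mathscr{T}$ to $\FLAT(\mathscr{T})$, and which leaves the equality $\mathscr{R}''=\Inv(\mathscr{R})$ in \textbf{sym} implicit.
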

\begin{proof}
The claim is shown by a simple induction over the construction of $\DO_{\mathscr{T}}$. In the following, we use $\tuple{\mathscr{A},\mathscr{R},\mathscr{B}}$ to denote an arbitrary domino of $\domp_{\parts(\FLAT(\mathscr{T}))}(\Inter)$.
For the base case, we must show that $\domp_{\parts(\FLAT(\mathscr{T}))}(\Inter)\subseteq\DO_0$. Let $\tuple{\mathscr{A},\mathscr{R},\mathscr{B}}$ to denote an arbitrary domino of $\domp_{\parts(\FLAT(\mathscr{T}))}(\Inter)$ which was generated from elements $\tuple{\delem,\delem'}$. Then $\tuple{\mathscr{A},\mathscr{R},\mathscr{B}}$ satisfies condition \textbf{kb}, since $\delem\in\conC^{\Inter}$ for any $\conC\in\FLAT(\mathscr{T})$. 
The conditions \textbf{ex} and \textbf{uni} are obviously satisfied.

For the induction step, assume that $\domp_{\parts(\FLAT(\mathscr{T}))}(\Inter)\subseteq\DO_i$, and let $\tuple{\mathscr{A},\mathscr{R},\mathscr{B}}$ again denote an arbitrary domino of $\domp_{\parts(\FLAT(\mathscr{T}))}(\Inter)$ which was generated from elements $\tuple{\delem,\delem'}$.

\begin{iteMize}{$\bullet$}
\item For \textbf{delex}, note that $\exists\rolU.\conA\in\mathscr{A}$ implies $\delem\in(\exists\rolU.\conA)^{\Inter}$. Thus there is an individual $\delem''$ such that $\tuple{\delem,\delem''}\in\rolU^{\Inter}$ and $\delem''\in\conA^{\Inter}$. Clearly, the domino generated by $\tuple{\delem,\delem''}$ satisfies the conditions of \textbf{delex}.
\item For \textbf{deluni}, note that $\forall\rolU.\conA\not\in\mathscr{A}$ implies $\delem\notin(\forall\rolU.\conA)^{\Inter}$. Thus there is an individual $\delem''$ such that $\tuple{\delem,\delem''}\in\rolU^{\Inter}$ and $\delem''\notin\conA^{\Inter}$. Clearly, the domino generated by $\tuple{\delem,\delem''}$ satisfies the conditions of \textbf{deluni}.
\item The condition of \textbf{sym} for $\tuple{\mathscr{A},\mathscr{R},\mathscr{B}}$ is clearly satisfied by the domino generated from $\tuple{\delem',\delem}$.
\rightqed
\end{iteMize}
Therefore, the considered domino $\tuple{\mathscr{A},\mathscr{R},\mathscr{B}}$ must be contained in $\DO_{i+1}$ as well.
\end{proof}

Note that, in contrast to tableau procedures, the presented algorithm starts with a large set of dominoes and successively deletes undesired dominoes. Indeed, we will soon show that the constructed domino set is the largest such set from which a domino model can be obtained. The algorithm thus may seem to be of little practical use. In Section~\ref{sec:boolfunc}, we therefore refine the above algorithm to employ Boolean functions as implicit representations of domino sets, such that the efficient computational methods of OBDDs can be exploited. In the meantime, however, domino sets will serve us well for showing the required correctness properties.

An important property of domino interpretations constructed from canonical domino sets is that the (semantic) concept membership of an individual can typically be (syntactically) read from the domino it has been constructed of.

\begin{lemma}\label{lemma:domsound}
Consider an \ALCIb{} TBox ${\mathscr{T}}$ with nonempty canonical domino set $\DO_{\mathscr{T}}$, and define  $\mathscr{C}\defeq \parts(\FLAT({\mathscr{T}}))$ and $\Inter=\tuple{\Delta^\Inter,\cdot^{\Inter}}\defeq\Inter(\DO_{\mathscr{T}})$. Then, for all $\conC\in\mathscr{C}$ and $\selem\in\Delta^\Inter$, we have that $\selem\in\conC^\Inter$ iff $\conC\in\val(\selem)$. Moreover, $\Inter\models\FLAT({\mathscr{T}})$.
\end{lemma}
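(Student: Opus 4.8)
The plan is to prove the two claims in sequence, with the first (the ``syntactic readability'' of concept membership) doing essentially all the work and the second ($\Inter\models\FLAT(\mathscr{T})$) following as an easy corollary. For the first claim I would proceed by induction over the structure of $\conC\in\mathscr{C}$, exactly as in the proof of Proposition~\ref{prop:idompi}, but now exploiting the stronger properties guaranteed by the fixpoint conditions of Algorithm~\ref{def:builddoms}. The base case $\conC\in\connames$ is immediate from clause~(2) of Definition~\ref{def:domint}, and the cases $\conC\in\{\top,\bot\}$ and the Boolean cases $\neg\conD$, $\conD\sqcap\conD'$, $\conD\sqcup\conD'$ follow from the induction hypothesis together with condition \textbf{kb}: since $\mathscr{C}=\parts(\FLAT(\mathscr{T}))$ is closed under the relevant subconcepts, membership of $\conC$ in $\val(\selem)$ is forced to behave Boolean-compatibly with membership of its immediate subconcepts (this is where flatness and the tautology check in step~2 of the algorithm are used).

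The interesting cases are $\conC=\exists\rolU.\conA$ and $\conC=\forall\rolU.\conA$ with $\conA$ atomic (flatness guarantees this shape). For $\exists\rolU.\conA$: if $\exists\rolU.\conA\in\val(\selem)$, then since $\DO_{\mathscr{T}}$ is a fixpoint it satisfies \textbf{delex}, so there is a domino $\tuple{\val(\selem),\mathscr{R}',\mathscr{B}'}\in\DO_{\mathscr{T}}$ with $\roleval{\mathscr{R}'}{\rolU}$ and $\conA\in\mathscr{B}'$; appending this domino to $\selem$ yields a $\rolU$-successor in $\Inter$ satisfying $\conA$ (using the induction hypothesis for the atom $\conA$), so $\selem\in(\exists\rolU.\conA)^\Inter$. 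Conversely, if $\selem\in(\exists\rolU.\conA)^\Inter$ there is a $\rolU$-successor $\selem'$ with $\selem'\in\conA^\Inter$; by Definition~\ref{def:domint}(3) this edge comes from appending some domino to $\selem$ (with $\roleval{\mathscr{R}}{\rolU}$) or to $\selem'$ (with $\roleval{\Inv(\mathscr{R})}{\rolU}$), and in either case condition \textbf{ex} (applied to the relevant domino, using \textbf{sym} in the second case to land the domino in the orientation \textbf{ex} expects) forces $\exists\rolU.\conA\in\val(\selem)$. The case $\conC=\forall\rolU.\conA$ is dual, using \textbf{uni} in place of \textbf{ex} and \textbf{deluni} in place of \textbf{delex}; as in Proposition~\ref{prop:idompi} this must be argued directly rather than by reduction to $\neg\exists\rolU.\neg\conA$, since $\neg\conA$ need not lie in $\mathscr{C}$.

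For the second claim, given any $\selem\in\Delta^\Inter$ I would take the last letter $\tuple{\mathscr{A},\mathscr{R},\mathscr{B}}$ of $\selem$ (or, if $\selem$ is a single letter, that letter read as $\tuple{\mathscr{A},\mathscr{R},\mathscr{B}}$); it lies in $\DO_{\mathscr{T}}\subseteq\DO_0$, so condition \textbf{kb} tells us that for every $\conC\in\FLAT(\mathscr{T})$ the GCI $\bigsqcap_{\conD\in\val(\selem)}\conD\sqcap\bigsqcap_{\conD\in\mathscr{C}\setminus\val(\selem)}\neg\conD\ssb\conC$ is a tautology, hence $\conC\in\val(\selem)$, hence by the first claim $\selem\in\conC^\Inter$. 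Since $\selem$ was arbitrary, $\conC^\Inter=\Delta^\Inter$ for every $\conC\in\FLAT(\mathscr{T})$, i.e.\ $\Inter\models\FLAT(\mathscr{T})$. The main obstacle is the bookkeeping in the $\exists$-case: one must carefully track the two orientations in Definition~\ref{def:domint}(3) and check that \textbf{sym} together with \textbf{ex} covers the ``incoming edge'' case, since \textbf{ex} as stated only constrains the source component $\mathscr{A}$ of a domino; getting this orientation argument exactly right (and noting the nonemptiness of $\DO_{\mathscr{T}}$ is what makes $\Delta^\Inter$ nonempty, so the statement is not vacuous) is the only delicate point.
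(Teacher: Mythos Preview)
Your overall strategy matches the paper's proof, and your treatment of the quantifier cases (using \textbf{ex}/\textbf{sym} one way and \textbf{delex} the other, dually for $\forall$) is exactly what the paper does. Two points need correction, however.

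First, you have misread what $\mathscr{C}=\parts(\FLAT(\mathscr{T}))$ contains. By definition, $\parts$ discards the Boolean structure and collects only atomic concepts and quantified subconcepts; after flattening, the latter all have the shape $\exists\rolU.\conA$ or $\forall\rolU.\conA$ with $\conA$ atomic. So there are no cases $\neg\conD$, $\conD\sqcap\conD'$, $\conD\sqcup\conD'$ to handle, and no genuine structural induction is needed---just the atomic base case and the two quantifier cases (this is exactly how the paper proceeds). Your remark that $\mathscr{C}$ is ``closed under the relevant subconcepts'' and that \textbf{kb} forces Boolean compatibility is therefore beside the point. Also note that to invoke \textbf{delex} for $\val(\selem)$ you first need a domino whose \emph{first} component is $\val(\selem)$; the last letter of $\selem$ has $\val(\selem)$ as its \emph{second} component, so one application of \textbf{sym} is required here too (the paper makes this explicit).

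Second, your argument for the final claim has a type error. An axiom $\conC\in\FLAT(\mathscr{T})$ is a $\sqcap,\sqcup$-combination of literals and quantified atoms; it is generally \emph{not} an element of $\mathscr{C}$, so ``hence $\conC\in\val(\selem)$'' is meaningless and the first claim cannot be applied to $\conC$. The correct route (which the paper takes) is: apply \textbf{sym} to the last letter of $\selem$ to obtain a domino with first component $\val(\selem)$; then \textbf{kb} gives the tautology with antecedent $\bigsqcap_{\conD\in\val(\selem)}\conD\sqcap\bigsqcap_{\conD\in\mathscr{C}\setminus\val(\selem)}\neg\conD$; now invoke the first claim on each $\conD\in\mathscr{C}$ (not on $\conC$) to see that $\selem$ satisfies this antecedent in $\Inter$, whence $\selem\in\conC^\Inter$.
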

\proof
First note that the domain of $\Inter$ is nonempty whenever $\DO_{\mathscr{T}}$ is. Now if $\conC\in\connames$ is an atomic concept, the first claim follows directly from the definition of $\Inter$. The remaining cases that may occur in $\parts(\FLAT({\mathscr{T}}))$ are $\conC=\exists\rolU.\conA$ and $\conC=\forall\rolU.\conA$.

First consider the case $\conC=\exists\rolU.\conA$, and assume that $\selem\in\conC^\Inter$. Thus there is $\selem'\in\Delta^\Inter$ with $\tuple{\selem,\selem'}\in\rolU^{\Inter}$ and $\selem' \in\conA^\Inter$. The construction of the domino model admits two possible cases:
\begin{iteMize}{$\bullet$}
\item $\selem'=\selem\tuple{\val(\selem),\mathscr{R},\val(\selem')}$ with
$\roleval{\mathscr{R}}{\rolU}$ and $\conA\in\val(\selem')$. Since $\DO_{\mathscr{T}}\subseteq\DO_0$, we find that $\tuple{\val(\selem),\mathscr{R},\val(\selem')}$ satisfies condition \textbf{ex}, and thus $\conC\in\val(\selem)$ as required.
\item $\selem=\selem'\tuple{\val(\selem'),\mathscr{R},\val(\selem)}$ with $\roleval{\Inv(\mathscr{R})}{\rolU}$ and $\conA\in\val(\selem')$. By condition \textbf{sym}, $\DO_{\mathscr{T}}$ also contains the domino $\tuple{\val(\selem),\Inv(\mathscr{R}),\val(\selem')}$, and we can again invoke \textbf{ex} to conclude $\conC\in\val(\selem)$.
\end{iteMize}
For the other direction, assume $\exists\rolU.\conA \in \val(\selem)$. Thus $\DO_{\mathscr{T}}$ must contain some domino $\tuple{\mathscr{A},\mathscr{R},\val(\selem)}$, and by \textbf{sym} also the domino $\tuple{\val(\selem),\Inv(\mathscr{R}),\mathscr{A}}$. By condition \textbf{delex}, the latter implies that $\DO_{\mathscr{T}}$ contains a domino $\tuple{\val(\selem),\mathscr{R}',\mathscr{A}'}$. According to \textbf{delex}, we find that $\selem' = \selem\tuple{\val(\selem),\mathscr{R}',\mathscr{A}'}$ is an $\Inter$-individual such that $\tuple{\selem,\selem'}\in\rolU^{\Inter}$ and $\selem'\in\conA^\Inter$. Thus $\selem\in (\exists\rolU.\conA)^{\Inter}$ as claimed.

For the second case, consider $\conC=\forall\rolU.\conA$ and assume that $\selem\in\conC^\Inter$. Then $\DO_{\mathscr{T}}$ contains some domino $\tuple{\mathscr{A},\mathscr{R},\val(\selem)}$, and by \textbf{sym} also the domino $\tuple{\val(\selem),\Inv(\mathscr{R}),\mathscr{A}}$. For a contradiction, suppose that $\forall\rolU.\conA\not\in\val(\selem)$. By condition \textbf{deluni}, the latter implies that $\DO_{\mathscr{T}}$ contains a domino $\tuple{\val(\selem),\mathscr{R}',\mathscr{A}'}$. According to \textbf{deluni}, we find that $\selem' = \selem\tuple{\val(\selem),\mathscr{R}',\mathscr{A}'}$ is an $\Inter$-individual such that $\tuple{\selem,\selem'}\in\rolU^{\Inter}$ and $\selem'\notin\conD^\Inter$. But then $\selem\notin (\forall\rolU.\conA)^{\Inter}$, yielding the required contradiction.

For the other direction, assume that $\forall\rolU.\conA \in \val(\selem)$. According to the construction of the domino model, there are two possible cases for elements $\selem'$ with $\tuple{\selem,\selem'}\in\rolU^{\Inter}$:
\begin{iteMize}{$\bullet$}
\item $\selem'=\selem\tuple{\val(\selem),\mathscr{R},\val(\selem')}$ with
$\roleval{\mathscr{R}}{\rolU}$. Since $\DO_{\mathscr{T}}\subseteq\DO_0$, $\tuple{\val(\selem),\mathscr{R},\val(\selem')}$ must satisfy condition \textbf{uni}, and thus $\conA\in\val(\selem')$.
\item $\selem=\selem'\tuple{\val(\selem'),\mathscr{R},\val(\selem)}$ with $\roleval{\Inv(\mathscr{R})}{\rolU}$. By condition \textbf{sym}, $\DO_{\mathscr{T}}$ also contains the domino $\tuple{\val(\selem),\Inv(\mathscr{R}),\val(\selem')}$, and we can again invoke \textbf{uni} to conclude $\conA\in\val(\selem')$.
\end{iteMize}
Thus, $\conA\in\val(\selem')$ for all $\rolU$-successors $\selem'$ of $\selem$, and hence $\selem\in(\forall\rolU.\conA)^{\Inter}$ as claimed.\medskip

For the rest of the claim, note that any domino $\tuple{\mathscr{A},\mathscr{R},\mathscr{B}}$ must satisfy condition \textbf{kb}. Using condition \textbf{sym}, we conclude that for any $\selem\in\Delta^\Inter$, the axiom $\bigsqcap_{\conD\in\val(\selem)}\conD\ssb\conC$ is a tautology for all $\conC\in\FLAT({\mathscr{T}})$. As shown above, $\selem\in\conD^{\Inter}$ for all $\conD\in\val(\selem)$, and thus $\selem\in\conC$. Hence every individual of $\Inter$ is an instance of each concept of $\FLAT({\mathscr{T}})$ as required.
\qed

The previous lemma shows soundness of our decision algorithm. Conversely, completeness is shown by the following lemma.

\begin{lemma}\label{lemma:domcomplete}
Consider an \ALCIb{} TBox ${\mathscr{T}}$. If ${\mathscr{T}}$ is satisfiable, then its canonical domino set $\DO_{\mathscr{T}}$ is nonempty.
\end{lemma}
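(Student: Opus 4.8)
The plan is to obtain nonemptiness of $\DO_{\mathscr{T}}$ as an almost immediate corollary of the canonicity result, Lemma~\ref{lemma:domgreatest}. Assume ${\mathscr{T}}$ is satisfiable. Then there is a model of ${\mathscr{T}}$, and by (the second half of the proof of) Proposition~\ref{prop:flateq} this model can be turned into a model $\Inter$ of $\FLAT({\mathscr{T}})$ by interpreting each fresh concept name $\conF$ introduced during flattening as $(\quantor\rolU.\conD)^\Inter$ for the subconcept $\quantor\rolU.\conD$ it abbreviates; in particular $\Inter$ is still a model of ${\mathscr{T}}$. I would then apply Lemma~\ref{lemma:domgreatest} to $\Inter$, obtaining that the domino projection $\domp_{\parts(\FLAT({\mathscr{T}}))}(\Inter)$ is contained in $\DO_{\mathscr{T}}$.

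It remains to note that $\domp_{\parts(\FLAT({\mathscr{T}}))}(\Inter)$ is itself nonempty: since interpretation domains are nonempty by convention, pick any $\delem\in\Delta^\Inter$; by Definition~\ref{def:domp} the pair $\tuple{\delem,\delem}$ generates a triple $\tuple{\mathscr{A},\mathscr{R},\mathscr{B}}$ that belongs to $\domp_{\parts(\FLAT({\mathscr{T}}))}(\Inter)$. Hence $\emptyset\neq\domp_{\parts(\FLAT({\mathscr{T}}))}(\Inter)\subseteq\DO_{\mathscr{T}}$, so $\DO_{\mathscr{T}}$ is nonempty, as claimed.

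There is no genuine obstacle here; the only point that needs a little care is that Lemma~\ref{lemma:domgreatest} must be fed a model of $\FLAT({\mathscr{T}})$ rather than merely of ${\mathscr{T}}$ — its proof verifies condition \textbf{kb} via $\delem\in\conC^\Inter$ for all $\conC\in\FLAT({\mathscr{T}})$ — which is exactly what the flattening construction from Proposition~\ref{prop:flateq} provides. Everything else is a one-line consequence: the canonical domino set is obtained by \emph{deleting} dominoes, and Lemma~\ref{lemma:domgreatest} guarantees that no domino coming from a genuine model is ever deleted, so as long as some model exists the set cannot be emptied.
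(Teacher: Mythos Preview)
Your proof is correct and follows the same route as the paper, which also derives the result as an immediate corollary of Lemma~\ref{lemma:domgreatest} together with nonemptiness of the domino projection. Your extra care in first passing to a model of $\FLAT({\mathscr{T}})$ is in fact warranted---the paper's proof of Lemma~\ref{lemma:domgreatest} tacitly relies on $\delem\in\conC^{\Inter}$ for every $\conC\in\FLAT({\mathscr{T}})$, which requires the fresh concept names to be interpreted suitably---so you have spotted and closed a small imprecision in the paper's presentation.
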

\proof
This is a straightforward consequence of Lemma~\ref{lemma:domgreatest}: given a model $\Inter$ of ${\mathscr{T}}$, the domino projection $\domp_{\parts(\FLAT(\mathscr{T}))}(\Inter)$ is nonempty and (by Lemma~\ref{lemma:domgreatest}) contained in $\DO_{\mathscr{T}}$. Hence $\DO_{\mathscr{T}}$ is nonempty.
\qed

\fancypicture{%
\begin{figure}[tbp]
\begin{center}
\includegraphics[width=0.8\textwidth]{figures/executiveAlgorithmCut}
\caption{The type-elimination algorithm at a glance.\label{fig:DominoAlg}}
\end{center}
\end{figure}
Figure~\ref{fig:DominoAlg} displays an executive summary of the overall algorithm.} We now are ready to establish our main result on checking TBox satisfiability and the complexity of the given algorithm:

\begin{theorem}\label{theo:domcorrect}
An \ALCIb{} TBox ${\mathscr{T}}$ is satisfiable iff its canonical domino set $\DO_{\mathscr{T}}$ is nonempty. Algorithm~\ref{def:builddoms} thus describes a decision procedure for satisfiability of \ALCIb{} TBoxes. Moreover, the algorithm runs in exponential time and hence is worst-case optimal.
\end{theorem}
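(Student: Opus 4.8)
The plan is to combine the three lemmas already established into the equivalence, and then argue the complexity bound separately. For the correctness part, one direction is immediate: if $\DO_{\mathscr{T}}$ is nonempty, then Lemma~\ref{lemma:domsound} tells us that the induced domino interpretation $\Inter(\DO_{\mathscr{T}})$ is a model of $\FLAT({\mathscr{T}})$, and by Proposition~\ref{prop:flateq} (equisatisfiability of $\kb$ and $\FLAT(\kb)$) the original TBox ${\mathscr{T}}$ is satisfiable as well. For the converse, if ${\mathscr{T}}$ is satisfiable, then Lemma~\ref{lemma:domcomplete} directly yields that $\DO_{\mathscr{T}}$ is nonempty. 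Hence the biconditional holds, and since Algorithm~\ref{def:builddoms} terminates (the monotonically shrinking sequence $\DO_0\supseteq\DO_1\supseteq\cdots$ over a finite set stabilizes) and returns $\DO_{\mathscr{T}}$, emptiness of the returned set decides unsatisfiability.

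For the complexity claim I would argue as follows. Let $n$ be the size of ${\mathscr{T}}$. Flattening produces $\FLAT({\mathscr{T}})$ of size polynomial in $n$, so $\mathscr{C}=\parts(\FLAT({\mathscr{T}}))$ has cardinality linear in $n$, and likewise the set $\lang{R}$ of atomic roles is linear in $n$. A domino $\tuple{\mathscr{A},\mathscr{R},\mathscr{B}}$ is then a triple of subsets of sets of linear size, so there are at most $2^{O(n)}$ dominoes; thus $|\DO_0|$ is at most exponential, and it can be computed in exponential time since checking conditions \textbf{kb}, \textbf{ex}, \textbf{uni} for a single domino is polynomial (per the footnote, \textbf{kb} amounts to propositional evaluation). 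Each iteration of the repeat-loop checks conditions \textbf{delex}, \textbf{deluni}, \textbf{sym} for each of the at-most-exponentially-many dominoes, each check scanning the current set — so one iteration costs exponential time. Since at least one domino is removed per iteration (otherwise the fixpoint is reached), there are at most exponentially many iterations, giving an overall exponential running time. For worst-case optimality I would invoke the known \ExpTime-hardness of satisfiability for \ALC{} TBoxes (already hard without inverse or Boolean role constructors), which carries over as a lower bound to \ALCIb{}; since \ALCIb{} TBox satisfiability is also in \ExpTime{} by the algorithm, it is \ExpTime-complete and the algorithm is worst-case optimal.

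I do not anticipate a genuine obstacle here — the substantive work has been done in Lemmas~\ref{lemma:domgreatest}, \ref{lemma:domsound}, \ref{lemma:domcomplete} and Proposition~\ref{prop:flateq}. The only points requiring a little care are: (i) making sure the reduction to $\FLAT({\mathscr{T}})$ is invoked correctly in the ``nonempty implies satisfiable'' direction, since Lemma~\ref{lemma:domsound} only gives a model of $\FLAT({\mathscr{T}})$, not of ${\mathscr{T}}$ directly; and (ii) stating the \ExpTime{} lower bound with an appropriate citation rather than re-proving it. A minor subtlety in the complexity analysis is that the loop-termination bound ``at most exponentially many iterations'' already follows from $|\DO_0|$ being exponential and the sequence being strictly decreasing until stabilization, so no separate argument is needed.
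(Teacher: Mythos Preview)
Your proposal is correct and follows essentially the same approach as the paper: the biconditional is obtained by combining Lemma~\ref{lemma:domsound}, Proposition~\ref{prop:flateq}, and Lemma~\ref{lemma:domcomplete}, and the complexity argument proceeds by bounding the number of dominoes exponentially, bounding each iteration by exponential time, and bounding the number of iterations by the size of $\DO_0$. The only cosmetic difference is that the paper cites \ExpTime-hardness via \SHIQb{} (ultimately tracing back to Schild's correspondence result for \ALC{} with general TBoxes), whereas you cite \ALC{} directly---both are valid.
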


\proof
The first proposition of the theorem is a direct consequence of Lemma~\ref{lemma:domsound}, Proposition~\ref{prop:flateq} (page~\pageref{prop:flateq}), and Lemma~\ref{lemma:domcomplete}.

For worst-case optimality, recall that \SHIQb{} is \ExpTime{}-complete (see \citealp{RKH:Jelia-08}, where \ExpTime{}-hardness already directly follows from the results by \citealp{schild91:correspondence}). Now, considering the presented algorithm, we find that the set $\mathscr{C} = \parts(\FLAT({\mathscr{T}}))$ is linearly bounded by the size of ${\mathscr{T}}$, whence the size of the set of all dominoes is exponentially bounded by $|{\mathscr{T}}|$. Applying the conditions \textbf{kb}, \textbf{ex}, and \textbf{uni} to obtain $\DO_0$ can be done by subsequently checking every domino, each check taking at most $O(|{\mathscr{T}}|)$ time, hence the overall time for that step is exponentially bounded. Now, consider the iterated application of the \textbf{delex}, \textbf{deluni}, and \textbf{sym} conditions. By the same argumentation as for \textbf{kb}, \textbf{ex}, and \textbf{uni}, one iteration takes exponential time. On the other hand, each iteration step reduces the domino set by at least one domino (otherwise, the termination criterion would be satisfied) which gives us a bound of exponentially many steps. Finally note that exponentially many exponentially long steps still yield a procedure that is overall exponentially bounded.
\qed

\section{Sets as Boolean Functions}\label{sec:boolfunc}

The algorithm of the previous section may seem to be of little practical use, since it requires computations on an exponentially large set of dominoes. The required computation steps, however, can also be accomplished with an indirect representation of the possible dominoes based on Boolean functions. Indeed, every propositional logic formula represents a set of propositional interpretations for which the function evaluates to $\true$. Using a suitable encoding, each propositional interpretation can be understood as a domino, and a propositional formula can represent a domino set.

As a representation of propositional formulae well-proven in other contexts, we use binary decision diagrams (BDDs). These data structures have been used to represent complex Boolean functions in model-checking (see, e.g., \citealp{burchetal}). A particular optimization of these structures are ordered BDDs (OBDDs) that use a dynamic precedence order of propositional variables to obtain compressed representations. We provide a first introduction to OBDDs below. A more detailed exposition and further literature pointers are given by \cite{huth}.

\subsection{Boolean Functions and Operations}
We first explain how sets can be represented by means of Boolean functions. This will enable us, given a fixed finite base set $S$, to represent every family of sets $\mathbb{S}\subseteq 2^S$ by a single Boolean function.

A \define{Boolean function} on a set $\Var$ of variables is a function $\varphi: 2^\Var \to \{\true,\false\}$. The underlying intuition is that $\varphi(V)$ computes the truth value of a Boolean formula based on the assumption that exactly the variables of $V$ are set to $\true$. A simple example are the functions $\tobf{\true}$ and $\tobf{\false}$, that map every input to $\true$ or $\false$, respectively. Another example are so-called \define{characteristic functions} of the form $\charf{v}$ for some $v\in\Var$, which are defined as $\charf{v}(V)\coloneqq\true$ iff $v\in V$.

Boolean functions over the same set of variables can be combined and modified in several ways. Especially, there are the obvious Boolean operators for negation, conjunction, disjunction, and implication. By slight abuse of notation, we will use the common (syntactic) operator symbols $\neg$, $\wedge$, $\vee$, and $\to$ to also represent such (semantic) operators on Boolean functions. Given, e.g., Boolean functions $\varphi$ and $\psi$, we find that $(\varphi\wedge\psi)(V)=\true$ iff $\varphi(V)=\true$ and $\psi(V)=\true$. Note that the result of the application of $\wedge$ results in another Boolean function, and is not to be understood as a syntactic logical formula.

Another operation on Boolean functions is existential quantification over a set of variables $V\subseteq\Var$, written as $\exists V.\varphi$ for some function $\varphi$. Given an input set $W\subseteq\Var$ of variables, we define $(\exists V.\varphi)(W)=\true$ iff \emph{there is some} $V'\subseteq V$ such that $\varphi(V'\cup (W\setminus V) )=\true$. In other words, there must be a way to set truth values of variables in $V$ such that $\varphi$ evaluates to $\true$. Universal quantification is defined analogously, and we thus have $\forall V.\varphi \coloneqq \neg\exists V.\neg\varphi$ as usual. Mark that our use of $\exists$ and $\forall$ overloads notation, and should not be confused with role restrictions in DL expressions.

\subsection{Ordered Binary Decision Diagrams}

Binary Decision Diagrams (BDDs), intuitively speaking, are a generalization of decision trees that allows for the reuse of nodes. Structurally, BDDs are directed acyclic graphs whose nodes are labeled by variables from some set $\Var$. The only exception are two \define{terminal} nodes that are labeled by $\true$ and $\false$, respectively. Every non-terminal node has two outgoing edges, corresponding to the two possible truth values of the variable.

\begin{definition}
A \define{BDD} is a tuple $\mathbb{O}=\tuple{N, \nroot, \ntrue, \nfalse, \low, \high, \Var, \lambda}$ where
\begin{iteMize}{$\bullet$}
\item $N$ is a finite set called \define{nodes},
\item $\nroot \in N$ is called the \define{root node},
\item $\ntrue,\nfalse \in N$ are called the \define{terminal nodes},
\item $\low,\high: N\setminus\{\ntrue,\nfalse\} \to N$ are two
\define{child functions} assigning to every non-terminal node a \define{low} and a \define{high} child node. Furthermore the graph obtained by iterated application has to be acyclic, i.e., for no node $n$ exists a sequence of applications of $\low$ and $\high$ resulting in $n$ again.
\item $\Var$ is a finite set of \define{variables}.
\item $\lambda: N\setminus\{\ntrue,\nfalse\} \to \Var$ is the \define{labeling function} assigning to every non-terminal node a variable from $\Var$.
\end{iteMize}
\end{definition}

OBDDs are a particular realization of BDDs where a certain ordering is imposed on variables to achieve more efficient representations. We will not require to consider the background of this optimization in here. Every BDD based on a variable set $\Var=\{x_1,\ldots,x_n\}$ represents an $n$-ary Boolean function $\varphi:2^\Var\to \{\true,\false\}$.

\begin{definition}\label{def:obddsem}
Given a BDD $\mathbb{O}=\tuple{N,\nroot,\ntrue,\nfalse,\low,\high,\Var,\lambda}$ the Boolean function $\varphi_\mathbb{O}:2^\Var\to \{\true,\false\}$ is defined recursively as follows:
$$\varphi_\mathbb{O}\coloneqq \varphi_{\nroot} \qquad
\varphi_{\ntrue} = \tobf{\true}  \qquad \varphi_{\nfalse} = \tobf{\false}$$
$$\varphi_{n} =
\Big(\neg\charf{\lambda(n)}\wedge\varphi_{\low(n)}\Big) \vee \Big(\charf{\lambda(n)}\wedge\varphi_{\high(n)}\Big) \quad\text{ for } n\in N\setminus\{\ntrue,\nfalse\}$$
\end{definition}

In other words, the value $\varphi(V)$ for some $V\subseteq \Var$ is determined by traversing the BDD, starting from the root node: at a node labeled with $v\in\Var$, the evaluation proceeds with the node connected by the $\high$-edge if $v\in V$, and with the node connected by the $\low$-edge otherwise. If a terminal node is reached, its label is returned as a result.

%

BDDs for some Boolean formulas might be exponentially large in general (compared to $|\Var|$), but often there is a representation which allows for BDDs of manageable size. Finding the optimal representation is NP-complete, but heuristics have shown to yield good approximate solutions \citep{Wegener04bdds}. Hence (O)BDDs are often conceived as efficiently compressed representations of Boolean functions. In addition, many operations on Boolean functions -- such as the aforementioned negation, conjunction, disjunction, implication as well as propositional quantification -- can be performed directly on the corresponding OBDDs by fast algorithms.

\subsection{Translating Dominos into Boolean Functions}

To apply the above machinery to DL reasoning, consider a flattened \ALCIb{} TBox $\mathscr{T}=\FLAT(\mathscr{T})$. A set of propositional variables $\Var$ is defined as $\Var \coloneqq \lang{R} \cup \big(\parts(\mathscr{T}) \times \{1,2\}\big)$. We thus obtain a bijection between dominoes over the set $\parts(\mathscr{T})$ and sets $V\subseteq\Var$ given by $\tuple{\mathscr{A},\mathscr{R},\mathscr{B}} \mapsto (\mathscr{A}\times\{1\}) \cup \mathscr{R} \cup (\mathscr{B}\times\{2\})$. Hence, any Boolean function over $\Var$ represents a domino set as the collection of all variable sets for which it evaluates to $\true$. \fancypicture{Figure~\ref{fig:DominoEnc} illustrates this correspondence.
\begin{figure}[bht]
\begin{center}
\includegraphics[width=0.6\textwidth]{figures/executiveOBDDencoding}
\caption{Encoding of Domino sets into Boolean functions.\label{fig:DominoEnc}}
\end{center}
\end{figure}}
We can use this observation to rephrase the construction of $\DO_{\mathscr{T}}$ in Algorithm~\ref{def:builddoms} into an equivalent construction of a function $\tobf{\mathscr{T}}$.

We first represent DL concepts $\conC$ and role expressions $\rolU$ by characteristic Boolean functions over $\Var$ as follows.

\[ \tobf{\conC} \coloneqq \left\{\begin{array}{ll}
        \neg\tobf{\conD}                 & \text{ if } \conC = \neg\conD \\
        \tobf{\conD} \wedge \tobf{\conE} & \text{ if } \conC = \conD\sqcap\conE \\
        \tobf{\conD} \vee \tobf{\conE}   & \text{ if } \conC = \conD\sqcup\conE \\
        \charf{\tuple{\conC,1}}           & \text{ if } \conC\in\parts(\mathscr{T}) \\
\end{array}\right.
\qquad
\tobf{\rolU}\coloneqq\left\{\begin{array}{ll}
        \neg \tobf{\rolV}                & \text{ if } \rolU = \neg\rolV \\
        \tobf{\rolV} \wedge \tobf{\rolW} & \text{ if } \rolU = \rolV\sqcap\rolW \\
        \tobf{\rolV} \vee \tobf{\rolW}   & \text{ if } \rolU = \rolV\sqcup\rolW \\
        \charf{\rolU}                     & \text{ if } \rolU\in\lang{R} \\
\end{array}\right.
\]

\newcommand{\ghost}[1]{\raisebox{0pt}[0pt][0pt]{\makebox[0pt][l]{#1}}}

\begin{algorithm}[t]
  \begin{algorithmic} [1]
    \REQUIRE \hspace*{1.5ex} ${\mathscr{T}}$ an \ALCIb{} TBox, $\mathscr{C}=\parts(\FLAT({\mathscr{T}}))$
    \ENSURE the canonical domino set of ${\mathscr{T}}$, represented as Boolean function $\tobf{\mathscr{T}}$\\

    \STATE \makebox[0pt][l]{$\varphi^\mathbf{kb}$}\phantom{$\tobf{\mathscr{T}}_0$}$\displaystyle{}:=\bigwedge_{\conC\in\mathscr{T}}\tobf{\conC}$
    \STATE \makebox[0pt][l]{$\varphi^\mathbf{uni}$}\phantom{$\tobf{\mathscr{T}}_0$}$\displaystyle{}:= \bigwedge_{\ghost{$\scriptstyle\forall \rolU.\conC \in \parts(\mathscr{T})$}\phantom{Hack}}\charf{\tuple{\forall\rolU.\conC, 1}} \wedge \tobf{\rolU} \to \charf{\tuple{\conC, 2}}$
    \STATE \makebox[0pt][l]{$\varphi^\mathbf{ex}$}\phantom{$\tobf{\mathscr{T}}_0$}$\displaystyle{}:=\bigwedge_{\ghost{$\scriptstyle\exists \rolU.\conC \in \parts(\mathscr{T})$}\phantom{Hack}}\charf{\tuple{\conC,2}}
    \wedge \tobf{\rolU} \to \charf{\tuple{\exists\rolU.\conC,1}}$\\[1ex]
    \STATE $\tobf{\mathscr{T}}_0 \coloneqq \varphi^\mathbf{kb} \wedge \varphi^\mathbf{uni} \wedge \varphi^\mathbf{ex}$
    \STATE i := 0
    \REPEAT
    \STATE i := i+1\\[1ex]
    \STATE \makebox[0pt][l]{$\varphi^{\mathbf{delex}}_i$}\phantom{$\varphi^{\mathbf{sym}}_i(V)$}$\displaystyle{}:=\bigwedge_{\ghost{$\scriptstyle\exists\rolU.\conC\in\parts(\mathscr{T})$}\phantom{Hack}} \charf{\tuple{\exists\rolU.\conC,1}} \to
    \exists \big(\mathbf{R}\cup \mathscr{C}\!\times\!\{2\}\big).\big(\tobf{\mathscr{T}}_{i-1}
    \wedge \tobf{\rolU} \wedge \charf{\tuple{\conC,2}}\big)$
    \STATE
    \makebox[0pt][l]{$\varphi^{\mathbf{deluni}}_i$}\phantom{$\varphi^{\mathbf{sym}}_i(V)$}$\displaystyle{}:=\bigwedge_{\ghost{$\scriptstyle\forall\rolU.\conC\in\parts(\mathscr{T})$}\phantom{Hack}}\charf{\tuple{\forall\rolU.\conC,1}} \to \neg\exists \big(\mathbf{R}\cup \mathscr{C}\!\times\!\{2\}\big).\big(\tobf{\mathscr{T}}_{i-1}
    \wedge \tobf{\rolU} \wedge \neg\charf{\tuple{\conC,2}}\big)$
    \STATE
    $\displaystyle\varphi^{\mathbf{sym}}_i(V):= \tobf{\mathscr{T}}_{i-1}\Big(  \big\{ \tuple{\conD,1}\mid \tuple{\conD,2}\in V \big\} \cup
    \big\{ \Inv(\rolR)\mid \rolR\in V \big\} \cup \big\{ \tuple{\conD,2}\mid \tuple{\conD,1}\in V\big \} \Big)$\\[1ex]
    \STATE
    $\tobf{\mathscr{T}}_{i} \coloneqq \tobf{\mathscr{T}}_{i-1} \wedge \varphi^{\mathbf{delex}}_{i} \wedge \varphi^{\mathbf{deluni}}_{i} \wedge \varphi^{\mathbf{sym}}_{i}$
     \UNTIL $\tobf{\mathscr{T}}_{i} \equiv \tobf{\mathscr{T}}_{i-1}$
     \STATE $\tobf{\mathscr{T}}\coloneqq\tobf{\mathscr{T}}_i$
     \RETURN $\tobf{\mathscr{T}}$
  \end{algorithmic}
  \caption{Computing the boolean representation $\tobf{\mathscr{T}}$ of the canonical domino set $\DO_{\mathscr{T}}$ of a TBox}
  \label{def:builddomsbf}
\end{algorithm}

$\left.\right.$\\
We can now define a decision procedure based on Boolean functions, as displayed in Algorithm~\ref{def:builddomsbf}. This algorithm is an accurate translation of Algorithm~\ref{def:builddoms}, where the intermediate Boolean functions $\varphi^\mathbf{kb},\varphi^\mathbf{ex},\varphi^\mathbf{uni},\varphi_i^\mathbf{delex},\varphi_i^\mathbf{deluni},\varphi_i^\mathbf{sym}$ represent domino sets containing all dominoes satisfying the respective conditions from Algorithm~\ref{def:builddoms}. By computing their conjunction with each other (and, for the latter three, with the Boolean function representing the domino set from the previous iteration) we intersect the respective domino sets which results in their successive pruning as described in Algorithm~\ref{def:builddoms}.
The algorithm is a correct procedure for checking consistency of \ALCIb{} TBoxes as unsatisfiability of $\mathscr{T}$ coincides with $\tobf{\mathscr{T}}\equiv \false$. Note that all necessary computation steps can indeed be implemented algorithmically: Any Boolean function can be evaluated for a fixed variable input $V$, and equality of two functions can (naively) be checked by comparing the results for all possible input sets (which are finitely many since $\Var$ is finite). The algorithm terminates since the sequence is decreasing w.r.t.\ $\{V\mid \tobf{\mathscr{T}}_i(V)=\true\}$, and since there are only finitely many Boolean functions over $\Var$.


\begin{proposition}\label{prop:obdddo}
For any \ALCIb{} TBox $\mathscr{T}$ and variable set $V\in\Var$ as above, we find that $\tobf{\mathscr{T}}(V)=\true$ iff $V$ represents a domino in $\DO_{\mathscr{T}}$ as defined in Definition~\ref{def:builddoms}.
\end{proposition}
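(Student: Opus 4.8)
The plan is to prove the stronger claim that for every iteration index $i$ the Boolean function $\tobf{\mathscr{T}}_i$ built by Algorithm~\ref{def:builddomsbf} represents exactly the domino set $\DO_i$ of Algorithm~\ref{def:builddoms}, i.e.\ $\tobf{\mathscr{T}}_i(V)=\true$ iff the variable set $V$ encodes a domino of $\DO_i$ under the bijection $\tuple{\mathscr{A},\mathscr{R},\mathscr{B}}\mapsto(\mathscr{A}\times\{1\})\cup\mathscr{R}\cup(\mathscr{B}\times\{2\})$, and then to read the proposition off the two fixpoints. First I would dispose of two routine structural inductions on expressions: for a concept $\conC$ over $\parts(\mathscr{T})$ one has $\tobf{\conC}(V)=\true$ iff $\conC$ evaluates to true when $\mathscr{A}$ is read as the set of true atoms (equivalently, the GCI $\bigsqcap_{\conD\in\mathscr{A}}\conD\sqcap\bigsqcap_{\conD\in\mathscr{C}\setminus\mathscr{A}}\neg\conD\ssb\conC$ is a tautology), and for a restricted role expression $\rolU$ one has $\tobf{\rolU}(V)=\true$ iff $\roleval{\mathscr{R}}{\rolU}$; both are immediate once one checks that the Boolean connectives in the definitions of $\tobf{\cdot}$ mirror the connectives in the expressions and that the base cases match the definition of $\charf{\cdot}$ and the bijection. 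With these in hand the base case $i=0$ is direct: $\varphi^\mathbf{kb}(V)=\true$ iff the domino of $V$ satisfies condition \textbf{kb}; $\varphi^\mathbf{uni}(V)=\true$ iff it satisfies \textbf{uni} (the implication $\charf{\tuple{\forall\rolU.\conC,1}}\wedge\tobf{\rolU}\to\charf{\tuple{\conC,2}}$ says precisely that $\forall\rolU.\conC\in\mathscr{A}$ and $\roleval{\mathscr{R}}{\rolU}$ force $\conC\in\mathscr{B}$); $\varphi^\mathbf{ex}(V)=\true$ iff it satisfies \textbf{ex}; hence $\tobf{\mathscr{T}}_0$ represents $\DO_0$.

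For the inductive step I would assume $\tobf{\mathscr{T}}_{i-1}$ represents $\DO_{i-1}$ and exploit the semantics of the propositional existential quantifier: evaluated at a variable set $V$ whose $\{1\}$-tagged part encodes $\mathscr{A}$, the formula $\exists(\mathbf{R}\cup\mathscr{C}\times\{2\}).\psi$ keeps only the variables of $\mathscr{A}\times\{1\}$ fixed and lets the quantifier range over all choices of the role variables and the $\{2\}$-tagged variables, i.e.\ over all pairs $\mathscr{R}',\mathscr{B}'$; so by the induction hypothesis and the two structural facts, $\exists(\mathbf{R}\cup\mathscr{C}\times\{2\}).(\tobf{\mathscr{T}}_{i-1}\wedge\tobf{\rolU}\wedge\charf{\tuple{\conC,2}})$ is true at $V$ exactly when $\DO_{i-1}$ contains a domino $\tuple{\mathscr{A},\mathscr{R}',\mathscr{B}'}$ with $\roleval{\mathscr{R}'}{\rolU}$ and $\conC\in\mathscr{B}'$. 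This makes $\varphi^\mathbf{delex}_i(V)=\true$ coincide with condition \textbf{delex}; the analogous reading of $\varphi^\mathbf{deluni}_i$, whose inner matrix uses $\neg\charf{\tuple{\conC,2}}$ so that the witness is a $\DO_{i-1}$-domino lacking $\conC$ on the right, gives condition \textbf{deluni} (for each $\forall\rolU.\conC\in\mathscr{C}\setminus\mathscr{A}$ such a witness must exist); and $\varphi^\mathbf{sym}_i(V)$ is by construction $\tobf{\mathscr{T}}_{i-1}$ evaluated on the encoding of $\tuple{\mathscr{B},\Inv(\mathscr{R}),\mathscr{A}}$, so it captures \textbf{sym}. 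Conjoining these with $\tobf{\mathscr{T}}_{i-1}$ shows $\tobf{\mathscr{T}}_i$ represents exactly the dominoes of $\DO_{i-1}$ surviving \textbf{delex}, \textbf{deluni} and \textbf{sym}, i.e.\ $\DO_i$.

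Since the domino set of $\tobf{\mathscr{T}}_i$ equals $\DO_i$ at every stage, the termination tests $\tobf{\mathscr{T}}_i\equiv\tobf{\mathscr{T}}_{i-1}$ and $\DO_i=\DO_{i-1}$ succeed simultaneously, so the returned $\tobf{\mathscr{T}}$ represents $\DO_{\mathscr{T}}$, as claimed. The main obstacle I anticipate is purely the bookkeeping around the propositional quantifier: making precise that quantifying over $\mathbf{R}\cup\mathscr{C}\times\{2\}$ while retaining the $\{1\}$-tagged input variables is the same as letting $\mathscr{R}'$ and $\mathscr{B}'$ vary with $\mathscr{A}$ held fixed, and that the occurrence of $\tobf{\mathscr{T}}_{i-1}$ under the quantifier is therefore evaluated on the encoding of $\tuple{\mathscr{A},\mathscr{R}',\mathscr{B}'}$. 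Everything else is a line-by-line correspondence between the set operations of Algorithm~\ref{def:builddoms} and their Boolean images in Algorithm~\ref{def:builddomsbf}.
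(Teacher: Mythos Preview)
Your proposal is correct and follows exactly the approach the paper intends: the paper's own proof is a single sentence (``the Boolean operations used in constructing $\tobf{\mathscr{T}}$ directly correspond to the set operations in Definition~\ref{def:builddoms}''), and your argument simply spells out that correspondence as an induction on the iteration index $i$, matching each of $\varphi^{\mathbf{kb}},\varphi^{\mathbf{ex}},\varphi^{\mathbf{uni}},\varphi^{\mathbf{delex}}_i,\varphi^{\mathbf{deluni}}_i,\varphi^{\mathbf{sym}}_i$ to its set-theoretic counterpart. There is no genuinely different idea here---you are just being explicit where the paper is terse.
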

\proof
It is easy to see that the Boolean operations used in constructing $\tobf{\mathscr{T}}$ directly correspond to the set operations in Definition~\ref{def:builddoms}, such that $\tobf{\mathscr{T}}(V)=\true$ iff $V$ represents a domino in $\DO_{\kb}$.
\qed

All required operations and checks are provided by standard OBDD implementations, and thus can be realized in practice. 

\medskip

In the remainder of this section, we illustrate the above algorithm by an extended example to which we will also come back to explain the later extensions of the inference algorithm. Therefore, consider the following \ALCIb{} knowledge base $\mathscr{KB}$.
\[\begin{array}{rll}
\dlname{PhDStudent} & \ssb & \exists \dlname{has}.\dlname{Diploma}\\
\dlname{Diploma} & \ssb & \forall \dlname{has}^-.\dlname{Graduate}\\
\dlname{Diploma} \sqcap \dlname{Graduate} & \ssb & \bot\\
\dlname{Diploma}(\dlname{laureus}) & & \dlname{PhDStudent}(\dlname{laureus})\\
\end{array}\]
For now, we are only interested in the terminological axioms, the consistency of which we would like to establish. 
As a first transformation step, all TBox axioms are transformed into the following universally valid concepts in negation normal form:
\[ \neg\dlname{PhDStudent} \sqcup \exists \dlname{has}.\dlname{Diploma}\quad
  \neg\dlname{Diploma} \sqcup \forall \dlname{has}^-.\dlname{Graduate}\quad
  \neg\dlname{Diploma} \sqcup \neg\dlname{Graduate} \]
The flattening step can be skipped since all concepts are already flat. Now the relevant concept expressions for describing dominoes are given by the set \[\parts(\mathscr{T})=\{\exists \dlname{has}.\dlname{Diploma}, \forall \dlname{has}^-\!.\dlname{Graduate}, \dlname{Diploma}, \dlname{Graduate}, \dlname{PhDStudent}\}.\] We thus obtain the following set $\Var$ of Boolean variables (although $\Var$ is just a set, our presentation follows the domino intuition):

\[
\begin{array}{|l|l|l|}\hline
\tuple{\exists \dlname{has}.\dlname{Diploma},1} & \dlname{has} & \tuple{\exists \dlname{has}.\dlname{Diploma},2}\\
\tuple{\forall \dlname{has}^-.\dlname{Graduate},1} & \dlname{has}^- & \tuple{\forall \dlname{has}^-.\dlname{Graduate},2} \\
\tuple{\dlname{Diploma},1} & & \tuple{\dlname{Diploma},2} \\
\tuple{\dlname{Graduate},1} & & \tuple{\dlname{Graduate},2} \\
\tuple{\dlname{PhDStudent},1} & & \tuple{\dlname{PhDStudent},2}
\\\hline
\end{array}
\]

\begin{figure}[t]
%
%
\includegraphics[width=0.8\textwidth]{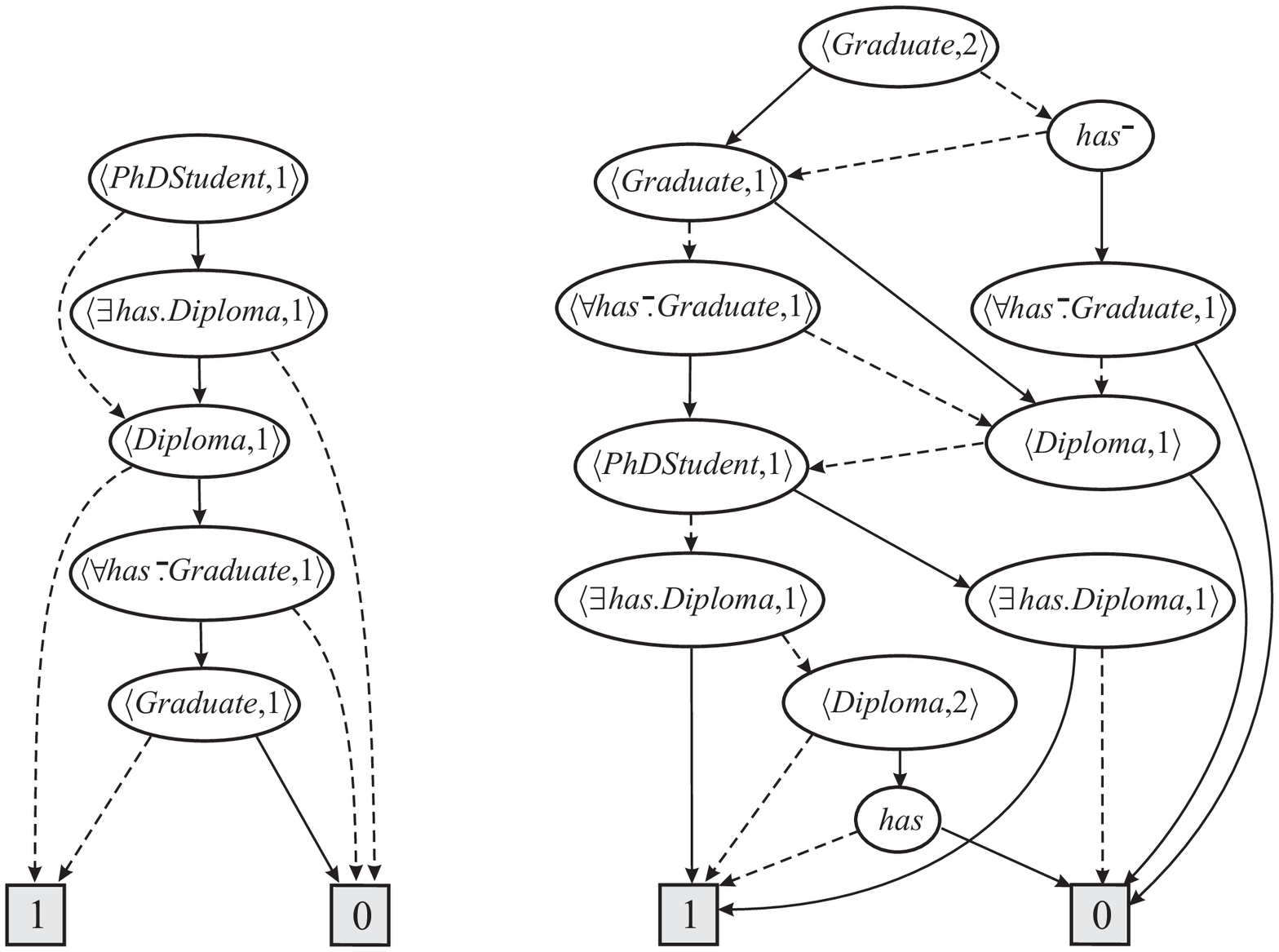}
\hfill~ \caption{OBDDs arising when processing the terminology of $\mathscr{KB}$; following traditional BDD notation, solid arrows indicate $\high$ successors, dashed arrows indicate $\low$ successors, and the topmost node is the root\label{fig:obdd1}}
\end{figure}%

\vspace{2ex}

We are now ready to construct the OBDDs as described. Figure~\ref{fig:obdd1} (left) displays an OBDD corresponding to the following Boolean function:
\[
\begin{array}{rl}
\varphi^\mathbf{kb} \coloneqq & (\neg\charf{\tuple{\dlname{PhDStudent},1}}\vee
\charf{\tuple{\exists \dlname{has}.\dlname{Diploma},1}})\\
 & \wedge (\neg \charf{\tuple{\dlname{Diploma},1}}
\vee\charf{\tuple{\forall
\dlname{has}^-.\dlname{Graduate},1}})\\
 & \wedge(\neg \charf{\tuple{\dlname{Diploma},1}}
\vee\neg\charf{\tuple{\dlname{Graduate},1}})
\end{array}
\]
and Fig.~\ref{fig:obdd1} (right) shows the OBDD representing the function $\tobf{\mathscr{T}}_0$ obtained from $\varphi^{\mathbf{kb}}$ by conjunctively adding
$$\begin{array}{rl} \varphi^\mathbf{ex}\quad =\quad &
 \neg\charf{\tuple{\dlname{Diploma},2}}\vee \neg\charf{\dlname{has}} \vee \charf{\tuple{\exists\dlname{has}.\dlname{Diploma},1}}
\quad\text{and}\\
\varphi^\mathbf{uni}\quad =\quad &
  \neg\charf{\tuple{\forall\dlname{has}^-.\dlname{Graduate},1}}\vee \neg\charf{\dlname{has}^-} \vee \charf{\tuple{\dlname{Graduate},2}}.\\
\end{array}
$$
\begin{figure}[b]
\includegraphics[width=0.78\textwidth]{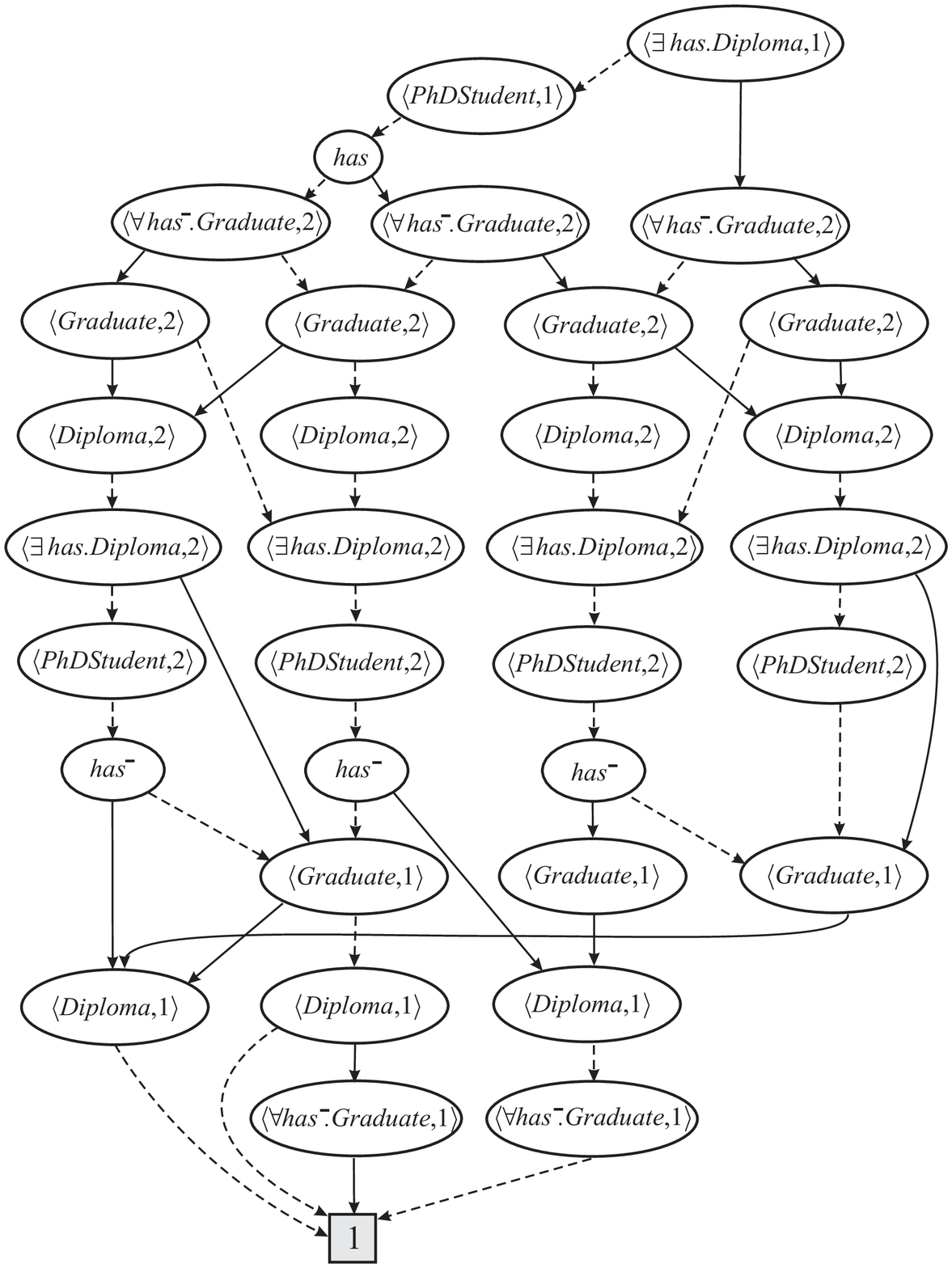}
\caption{Final OBDD obtained when processing $\mathscr{KB}$, using notation as in Fig.~\ref{fig:obdd1}; arrows to the $0$ node have been omitted for better readability\label{fig:obdd3}}
\end{figure}
\noindent Then, after the first iteration of the algorithm, we arrive at an OBDD representing $\tobf{\mathscr{T}}_1$ which is displayed in Fig.~\ref{fig:obdd3}. This OBDD turns out to be the final result $\tobf{\mathscr{T}}$. The input TBox is derived to be consistent since there is a path from the root node to $1$.
\pagebreak

\section{Reasoning with ABox and DL-Safe Rules via Disjunctive Datalog}\label{sec:abox}

The above algorithm does not yet take any assertional information about individuals into account, nor does it cover DL-safe rules. The proof of Theorem~\ref{theo:domcorrect} hinges upon the fact that the constructed domino set $\DO_{\mathscr{T}}$ induces a model of the terminology $\mathscr{T}$, and Lemma~\ref{lemma:domgreatest} states that this is indeed the \emph{greatest} model in a certain sense. This provides some first intuition of the problems arising when ABoxes are to be added to the knowledge base: \ALCIb{} knowledge bases with ABoxes do generally not have a greatest model.

We thus employ \emph{disjunctive Datalog} (see \citealp{disjDatalog}) as a paradigm that allows us to incorporate ABoxes into the reasoning process. The basic idea is to forge a Datalog program that -- depending on two given individuals $a$ and $b$ -- describes possible dominoes that may connect $a$ and $b$ in models of the knowledge base. There might be various, irreconcilable such dominoes in different models, but disjunctive Datalog supports such choice since it admits multiple minimal models. As long as the knowledge base has some model, there is at least one possible domino for every pair of individuals (possibly without connecting roles) -- only if this is not the case, the Datalog program will infer a contradiction. Another reason for choosing disjunctive Datalog is that it allows for the straightforward incorporation of DL-safe rules.


We use the OBDD computed from  the terminology as a kind of pre-compiled version of the relevant terminological information. ABox information is then considered as an incomplete specification of dominoes that must be accepted by the OBDD, and the Datalog program simulates the OBDD's evaluation for each of those.


\begin{definition}\label{def:datalog}
Consider an extended \ALCIb{} knowledge base $\kb=\tuple{\mathscr{T},\mathscr{P}}$, and an OBDD
$\mathbb{O}=\tuple{N, \nroot, \ntrue, \nfalse, \low, \high, \Var, \lambda}$ that represents the function $\tobf{\mathscr{T}}$ as defined by Algorithm~\ref{def:builddomsbf}.
A disjunctive Datalog program $\Prog(\kb)$ is defined as follows. $\Prog(\kb)$ uses the following predicates:
\begin{iteMize}{$\bullet$}
\item a unary predicate $S\!_\conC$ for every concept expression $\conC\in\parts(\FLAT(\mathscr{T}))$,
\item a binary predicate $S\!_\rolR$ for every atomic role $\rolR\in\rolnames$,
\item a binary predicate $A_n$ for every OBDD node $n\in N$,
\item the equality predicate $\approx$.
\end{iteMize}
The constants in $\Prog(\kb)$ are the individual names used in $\mathscr{P}$. The disjunctive Datalog rules of $\Prog(\kbrb)$ are defined as follows:\footnote{Note that we use disjunctive Datalog with equality. However, every disjunctive Datalog program with equality can be reduced to one without equality in linear time, as equality can be axiomatized (see, e.g., \citealp{Fitting}).}
\begin{enumerate}[(1)]
\item\label{item:rb-rules}
For every DL-safe rule $B\to H$ from $\rb$, $\Prog(\kb)$ contains the rule obtained from $B\to H$ by replacing all $\conC(x)$ by $S\!_\conC(x)$ and all $\rolR(x,y)$ by $S\!_\rolR(x,y)$.
\item\label{item:topbot-rules}
$\Prog(\kb)$ contains rules $\to A_{\nroot}(x,y)$ and $A_{\nfalse}(x,y)\to$.
\item\label{item:cone-rules}
If $n\in N$ with $\lambda(n) = \tuple{\conC,1}$ then $\Prog(\kb)$ contains rules\\
$S\!_\conC(x) \wedge A_n(x,y) \to A_{\high(n)}(x,y)$ and $A_n(x,y) \to A_{\low(n)}(x,y) \vee S\!_\conC(x)$.
\item\label{item:ctwo-rules}
If $n\in N$ with $\lambda(n) = \tuple{\conC,2}$ then $\Prog(\kb)$ contains rules\\
$S\!_\conC(y) \wedge A_n(x,y) \to A_{\high(n)}(x,y)$ and $A_n(x,y) \to A_{\low(n)}(x,y) \vee S\!_\conC(y)$.
\item\label{item:role-rules}
If $n\in N$ with $\lambda(n) = \rolR$ for some $\rolR\in\rolnames$ then $\Prog(\kb)$ contains rules\\
$S\!_\rolR(x,y) \wedge A_n(x,y) \to A_{\high(n)}(x,y)$ and $A_n(x,y) \to A_{\low(n)}(x,y) \vee S\!_\rolR(x,y)$.
\item\label{item:rinv-rules}
If $n\in N$ with $\lambda(n) = \rolR^-$ for some $\rolR\in \rolnames$ then $\Prog(\kb)$ contains rules\\
$S\!_\rolR(y,x) \wedge A_n(x,y) \to A_{\high(n)}(x,y)$ and $A_n(x,y) \to A_{\low(n)}(x,y) \vee S\!_\rolR(y,x)$.
\end{enumerate}
\end{definition}

\fancypicture{\begin{figure}[bht]
\begin{center}
\includegraphics[width=1\textwidth]{figures/executiveOBDDtoDatalog}
\caption{Translation of an OBDD into a disjunctive Datalog program.\label{fig:DominoDatalog}}
\end{center}
\end{figure}
Figure~\ref{fig:DominoDatalog} visualizes the translation strategy.} Note that the arity of predicates in $\Prog(\kb)$ is bounded by $2$. Hence, the number of ground atoms is quadratic with respect to the number of constants (individual names), whence the worst-case complexity for satisfiability checking is \NP{} w.r.t.\ the number of individuals (and especially w.r.t.\ the number of facts), as opposed to the \NExpTime{} complexity of disjunctive Datalog in general (\citealp{DBLP:journals/csur/DantsinEGV01}). Note that, of course, $\Prog(\kb)$ may still be exponential in the size of $\kb$ in the worst case: $\Prog(\kb)$ is linear in the size of the underlying OBDD which in turn may have exponential size compared to the set of propositional variables used in the represented Boolean functions. Finally the number of these variables is linearly bounded by the size of $\kb$. It remains to show the correctness of the Datalog translation.

\begin{lemma}\label{lemma:ddsound}
Given an extended \ALCIb{} knowledge base \kb{} such that $\Inter$ is a model of $\kb$, there is a model $\Jnter$ of $\Prog(\kb)$ such that
\begin{iteMize}{$\bullet$}
\item
$\Inter\models \conC(a)$ iff $\Jnter\models S_{\!\conC}(a)$,
\item
$\Inter\models \rolR(a,b)$ iff $\Jnter\models S_{\!\rolR}(a,b)$, and
\item
$\Inter\models a\approx b$ iff $\Jnter\models a\approx b$.
\end{iteMize}
for any $a,b\in\indnames$, $\conC\in\connames$, and $\rolR\in\rolnames$.
\end{lemma}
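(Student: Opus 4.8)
The statement asserts soundness of the Datalog translation: every model $\Inter$ of the extended knowledge base $\kb = \tuple{\mathscr{T},\mathscr{P}}$ gives rise to a model $\Jnter$ of $\Prog(\kb)$ that agrees with $\Inter$ on atomic ground facts over $\indnames$. The plan is to construct $\Jnter$ explicitly from $\Inter$ and then verify that all six groups of rules in Definition~\ref{def:datalog} are satisfied. I would take the domain of $\Jnter$ to be the individual names $\indnames$ (or, more carefully, the equivalence classes induced by $\approx^\Inter$), and interpret the predicates as follows: $S_{\!\conC}(a)$ holds iff $a^\Inter \in \conC^\Inter$ (for $\conC\in\parts(\FLAT(\mathscr{T}))$), $S_{\!\rolR}(a,b)$ holds iff $\tuple{a^\Inter,b^\Inter}\in\rolR^\Inter$, and $a\approx b$ holds iff $a^\Inter = b^\Inter$. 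The three bulleted conditions of the lemma are then immediate from this definition (the first two essentially by construction; the $\approx$ condition needs the interpretation of $\approx$ to be the genuine equality on the chosen domain, so one should quotient to make the Datalog $\approx$ a congruence — this is the standard axiomatization-of-equality step referenced in the footnote). The crux is defining the interpretation of the node predicates $A_n$.

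For the node predicates, the key idea is that $A_n(a,b)$ should hold exactly when the OBDD evaluation path for the domino induced by the pair $\tuple{a^\Inter,b^\Inter}$ passes through node $n$. Concretely, for $a,b\in\indnames$ let $V_{a,b}\subseteq\Var$ be the variable set encoding the domino $\tuple{\mathscr{A},\mathscr{R},\mathscr{B}}$ with $\mathscr{A} = \{\conC\in\parts(\mathscr{T})\mid a^\Inter\in\conC^\Inter\}$, $\mathscr{R} = \{\rolR\in\lang{R}\mid \tuple{a^\Inter,b^\Inter}\in\rolR^\Inter\}$, $\mathscr{B} = \{\conC\in\parts(\mathscr{T})\mid b^\Inter\in\conC^\Inter\}$ — i.e.\ the relevant slice of $\domp_{\parts(\mathscr{T})}(\Inter)$. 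I then set $\Jnter\models A_n(a,b)$ iff $n$ lies on the (unique) root-to-terminal traversal path of $\mathbb{O}$ determined by $V_{a,b}$. With this definition, rules of type (2)–(6) are checked node by node: rule $\to A_{\nroot}(x,y)$ holds because every path starts at the root; $A_{\nfalse}(x,y)\to$ holds because the traversal never reaches $\nfalse$ — here we need that $V_{a,b}$ is accepted by $\mathbb{O}$, which follows from Proposition~\ref{prop:obdddo} together with Lemma~\ref{lemma:domgreatest}, since $V_{a,b}$ represents a domino of $\domp_{\parts(\FLAT(\mathscr{T}))}(\Inter)\subseteq\DO_{\mathscr{T}}$. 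For rules (3)–(6), at a node $n$ with, say, $\lambda(n)=\tuple{\conC,1}$: if the path reaches $n$ and $\conC\in\mathscr{A}$ (i.e.\ $S_{\!\conC}(a)$ holds, i.e.\ $\tuple{\conC,1}\in V_{a,b}$), the traversal follows the $\high$ edge, so $A_{\high(n)}(a,b)$ holds — matching the first rule; and the disjunctive rule $A_n(x,y)\to A_{\low(n)}(x,y)\vee S_{\!\conC}(x)$ holds because if $\conC\notin\mathscr{A}$ the path goes to $\low(n)$. The cases $\lambda(n)=\tuple{\conC,2}$, $\lambda(n)=\rolR$, $\lambda(n)=\rolR^-$ are entirely analogous, using $\mathscr{B}$, $\mathscr{R}$, and $\Inv$ of $\mathscr{R}$ respectively; here one must be careful that the variable $\rolR^-$ in $\Var$ corresponds to $\tuple{b^\Inter,a^\Inter}\in\rolR^\Inter$, matching the body atom $S_{\!\rolR}(y,x)$ in rule (6). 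Finally, rule group (1) — the DL-safe rules — is satisfied because $\Inter$ itself satisfies every rule $B\to H$ of $\mathscr{P}$, the variable assignments in $\Jnter$ range over individual-name denotations exactly as DL-safe variable assignments do, and the $S_{\!\conC}$/$S_{\!\rolR}$ predicates faithfully mirror $\conC^\Inter$/$\rolR^\Inter$ on those denotations.

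The main obstacle, I expect, is the bookkeeping around equality and well-definedness of the node-predicate interpretation: if we keep the domain as raw individual names, $\Jnter\models a\approx b$ can hold without $a$ and $b$ being literally equal, so one must verify that $V_{a,b}$, and hence $A_n(a,b)$, depends only on the denotations $a^\Inter,b^\Inter$ — which it manifestly does by the definitions above — and that the $\approx$-rules forced by the equality axiomatization are respected. Passing instead to the quotient domain $\{a^\Inter\mid a\in\indnames\}$ sidesteps this cleanly. A secondary point requiring care is invoking the right earlier results in the correct order: one needs $\FLAT$ to relate $\parts(\mathscr{T})$ and $\parts(\FLAT(\mathscr{T}))$ consistently (Proposition~\ref{prop:flateq} guarantees $\Inter$ can be taken w.r.t.\ the flattened TBox without loss), and one needs that the pair-slices of $\domp_{\parts(\FLAT(\mathscr{T}))}(\Inter)$ genuinely lie in $\DO_{\mathscr{T}}$ (Lemma~\ref{lemma:domgreatest}) so that Proposition~\ref{prop:obdddo} certifies $\tobf{\mathscr{T}}(V_{a,b})=\true$, which is exactly what makes the $A_{\nfalse}(x,y)\to$ constraint satisfiable. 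Everything else is a routine case distinction over the node label $\lambda(n)$.
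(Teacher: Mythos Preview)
Your proposal is correct and takes essentially the same approach as the paper: the paper constructs $\Jnter$ on the domain $\{a^\Inter\mid a\in\indnames\}$, defines $S_{\!\conC}$ and $S_{\!\rolR}$ exactly as you do, defines $A_n^\Jnter$ by the inductive clause ``reachable from $\nroot$ along the edge dictated by $V_{\delem_1,\delem_2}$'' (which is equivalent to your ``$n$ lies on the unique traversal path''), and verifies the $A_{\nfalse}$ constraint via Proposition~\ref{prop:obdddo} and Lemma~\ref{lemma:domgreatest} just as you outline. Your remark about needing to pass to a model of the flattened TBox (so that the fresh concept names introduced by $\FLAT$ are meaningfully interpreted when forming $V_{a,b}$) is a point the paper glosses over, so you are in fact slightly more careful there.
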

\proof
Let $\kb=\tuple{\mathscr{T},\mathscr{P}}$. We define an interpretation $\Jnter$ of $\Prog(\kb)$. The domain of $\Jnter$ contains the named individuals from $\Inter$, i.e., $\Delta^\Jnter=\{a^\Inter \mid a\in\indnames\}$. For individuals $a$, we set $a^\Jnter \defeq a^\Inter$. The interpretation of predicate symbols is now defined as follows (note that $A_n^\Jnter$ is defined inductively on the path length from $\nroot$ to $n$):
\begin{iteMize}{$\bullet$}
\item $\delem\in S\!_\conC^\Jnter$ iff $\delem \in \conC^\Inter$
\item $\tuple{\delem_1,\delem_2}\in S\!_\rolR^\Jnter$ iff $\tuple{\delem_1,\delem_2}\in \rolR^\Inter$
\item $\tuple{\delem_1,\delem_2}\in A_{\nroot}^{\Jnter}$ for all $\delem_1,\delem_2\in\Delta^{\Jnter}$
\item $\tuple{\delem_1,\delem_2}\in A_n^\Jnter$ for $n\neq\nroot$ if there is a node $n'$ such that $\tuple{\delem_1,\delem_2}\in A_{n'}^\Jnter$, and one of the following is the case:
\begin{iteMize}{$-$}
\item $\lambda(n')=\tuple{\conC,i}$, for some $i\in\{1,2\}$, and $n = \low(n')$ and $\delem_i \not\in \conC^\Inter$
\item $\lambda(n')=\tuple{\conC,i}$, for some $i\in\{1,2\}$, and $n = \high(n')$ and $\delem_i \in \conC^\Inter$
\item $\lambda(n')=\rolR$ and $n = \low(n')$ and $\tuple{\delem_1,\delem_2} \not\in \rolR^\Inter$
\item $\lambda(n')=\rolR$ and $n = \high(n')$ and $\tuple{\delem_1,\delem_2} \in \rolR^\Inter$
\end{iteMize}
\end{iteMize}
Mark that, in the last two items, $\rolR$ is any role expression from $\Var$, i.e., a role name or its inverse. Also note that due to the acyclicity of $\mathbb{O}$, the interpretation of the $A$-predicates is indeed well-defined. We now show that $\Jnter$ is a model of $\Prog(\kb)$. To this end, first note that the extensions of predicates $S\!_\conC$ and $S\!_\rolR$ in $\Jnter$ were defined to coincide with the extensions of $\conC$ and $\rolR$ on the named individuals of $\Inter$. Since $\Inter$ satisfies $\mathscr{P}$, all rules introduced in item \eqref{item:rb-rules} of Definition~\ref{def:datalog} are satisfied by $\Jnter$. The restriction of DL-safe rules to named individuals can be discarded here since $\Delta^\Jnter$ contains only named individuals from $\Delta^\Inter$.

Similarly, we find that the rules of cases \eqref{item:cone-rules}--\eqref{item:rinv-rules} are satisfied by $\Jnter$. Consider the first rule of \eqref{item:cone-rules}, $S\!_\conC(x) \wedge A_n(x,y) \to A_{\high(n)}(x,y)$, and assume that $\delem_1\in S_{\conC}^\Jnter$ and $\tuple{\delem_1,\delem_2}\in A_n^{\Jnter}$. Thus $\delem_1\in\conC^{\Inter}$. Using the preconditions of \eqref{item:cone-rules} and the definition of $\Jnter$, we conclude that $\tuple{\delem_1,\delem_2}\in A_{\high(n)}^{\Jnter}$. The second rule of case \eqref{item:cone-rules} covers the analogous negative case. All other cases can be treated similarly.

Finally, for case \eqref{item:topbot-rules}, we need to show that $A_{\nfalse}^{\Jnter} = \emptyset$. For that, we first explicate the correspondence between domain elements of $\Inter$ and sets of variables of $\mathbb{O}$. Given elements $\delem_1,\delem_2 \in \Delta^\Inter$ we define $V_{\delem_1,\delem_2}\defeq \{\tuple{\conC,n}\mid \conC\in \parts(\FLAT(\mathscr{T})),\delem_n\in\conC^\Inter\} \cup\{\rolR\mid\tuple{\delem_1,\delem_2}\in\rolR^\Inter\}$, the set of variables corresponding to the $\Inter$-domino between $\delem_1$ and $\delem_2$.


Now $A_{\nfalse}^{\Jnter} = \emptyset$ clearly is a consequence of the following claim: for all $\delem_1,\delem_2\in\Delta^\Inter$ and all $n\in N$, we find that $\tuple{\delem_1,\delem_2}\in A_n$ implies $\varphi_n(V_{\delem_1,\delem_2})=\true$ (using the notation of Definition~\ref{def:obddsem}). The proof proceeds by induction. For the case $n=\nroot$, we find that $\varphi_{\nroot} = \tobf{\mathscr{T}}$. Since $V_{\delem_1,\delem_2}$ represents a domino of $\Inter$, the claim thus follows by combining Proposition~\ref{prop:obdddo} and Lemma~\ref{lemma:domgreatest}.

For the induction step, let $n$ be a node such that $\tuple{\delem_1,\delem_2}\in A_n$ follows from the inductive definition of $\Jnter$ based on some predecessor node $n'$ for which the claim has already been established. Note that $n'$ may not be unique. The cases in the definition of $\Jnter$ must be considered individually. Thus assume $n'$, $n$, and $\delem_1$ satisfy the first case, and that $\tuple{\delem_1,\delem_2}\in A_n$. By induction hypothesis, $\varphi_{n'}(V_{\delem_1,\delem_2})=\true$, and by Definition~\ref{def:obddsem} the given case yields $\varphi_{n}(V_{\delem_1,\delem_2})=\true$ as well. The other cases are similar.
\qed

\begin{lemma}\label{lemma:ddcomp}
Given an \ALCIb{} knowledge base \kb{} such that $\Jnter$ is a model of $\Prog(\kb)$, there is a model $\Inter$ of $\kb$ such that
\begin{iteMize}{$\bullet$}
\item
$\Inter\models \conC(a)$ iff $\Jnter\models S_{\!\conC}(a)$,
\item
$\Inter\models \rolR(a,b)$ iff $\Jnter\models S_{\!\rolR}(a,b)$, and
\item
$\Inter\models a\approx b$ iff $\Jnter\models a\approx b$,
\end{iteMize}
for any $a,b\in\indnames$, $\conC\in\connames$, and $\rolR\in\rolnames$.
\end{lemma}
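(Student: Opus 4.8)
The plan is to run the construction of Lemma~\ref{lemma:ddsound} in reverse. Given a model $\Jnter$ of $\Prog(\kb)$ with $\kb=\tuple{\mathscr{T},\mathscr{P}}$, I would first read off, for every pair of named individuals $a,b$, the ``observed domino'' whose left/right concept component is $\{\conC\in\parts(\FLAT(\mathscr{T}))\mid\Jnter\models S\!_\conC(a)\}$ resp.\ $\{\conC\mid\Jnter\models S\!_\conC(b)\}$ and whose role component collects the atomic roles and inverses relating $a$ to $b$ according to the predicates $S\!_\rolR$. The crucial claim is that each such observed domino in fact belongs to $\DO_{\mathscr{T}}$. I would then assemble the required model $\Inter$ by grafting onto the named individuals the tree-shaped anonymous part familiar from $\Inter(\DO_{\mathscr{T}})$ --- essentially letting every named individual act as a length-one word over $\DO_{\mathscr{T}}$ --- and verify that this $\Inter$ satisfies $\mathscr{T}$ (via an analogue of Lemma~\ref{lemma:domsound}) and $\mathscr{P}$ (because DL-safe rules only constrain named individuals, where $\Inter$ mirrors $\Jnter$).

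\textbf{Extracting the dominoes.} Since $\Prog(\kb)$ is disjunctive Datalog with equality, $\approx^\Jnter$ is a congruence on the named individuals, so I would work with $\approx^\Jnter$-classes $a,b,\dots$; write $\mathscr{A}_a$ for the associated concept component, $\mathscr{R}_{a,b}$ for the role component of $(a,b)$, and $V_{a,b}$ for the corresponding set of Boolean variables under the bijection of Section~\ref{sec:boolfunc}. I would prove $\tobf{\mathscr{T}}(V_{a,b})=\true$ by tracing the evaluation path of the OBDD $\mathbb{O}$ on input $V_{a,b}$ starting at $\nroot$: using rule $\to A_{\nroot}(x,y)$ and an induction along the path based on the node rules~\eqref{item:cone-rules}--\eqref{item:rinv-rules} of Definition~\ref{def:datalog}, every node $n$ on the path satisfies $\Jnter\models A_n(a,b)$ (at a node labelled $\tuple{\conC,1}$, say, the path branches to $\high(n)$ exactly when $\Jnter\models S\!_\conC(a)$, and the two rules of~\eqref{item:cone-rules} propagate $A_n$ to the chosen successor accordingly; the cases $\tuple{\conC,2}$, $\rolR$, $\rolR^-$ use~\eqref{item:ctwo-rules}, \eqref{item:role-rules}, \eqref{item:rinv-rules}, noting $\rolR^-\in V_{a,b}$ iff $\Jnter\models S\!_\rolR(b,a)$). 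The path thus reaches a terminal node; rule $A_{\nfalse}(x,y)\to{}$ forbids it to be $\nfalse$, so it is $\ntrue$, whence $\tobf{\mathscr{T}}(V_{a,b})=\varphi_{\nroot}(V_{a,b})=\true$ by Definition~\ref{def:obddsem}, and Proposition~\ref{prop:obdddo} yields $d_{a,b}\defeq\tuple{\mathscr{A}_a,\mathscr{R}_{a,b},\mathscr{A}_b}\in\DO_{\mathscr{T}}$.

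\textbf{Building $\Inter$.} I would let $\Delta^\Inter$ consist of the $\approx^\Jnter$-classes of individual names together with all finite words $a\,\tuple{\mathscr{A}_1,\mathscr{R}_1,\mathscr{A}_2}\tuple{\mathscr{A}_2,\mathscr{R}_2,\mathscr{A}_3}\cdots$ over $\DO_{\mathscr{T}}$ whose consecutive letters match as in Definition~\ref{def:domint} and whose first letter has left component $\mathscr{A}_a$; set $a^\Inter\defeq a$; extend $\val$ by $\val(a)\defeq\mathscr{A}_a$; put $\selem\in\conA^\Inter$ iff $\conA\in\val(\selem)$ for atomic $\conA\in\parts(\FLAT(\mathscr{T}))$ (other, TBox-irrelevant concept names are interpreted on named individuals to match $\Jnter$); and interpret roles by $\tuple{a,b}\in\rolR^\Inter$ iff $\Jnter\models S\!_\rolR(a,b)$ between classes, and between a word and its one-letter extensions exactly as in Definition~\ref{def:domint}, with no other role pairs. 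By construction $\tuple{a,b}\in\rolV^\Inter$ iff $\roleval{\mathscr{R}_{a,b}}{\rolV}$ for every restricted role expression $\rolV$ (by definition of $\vdash$, since $\mathscr{R}_{a,b}$ lists exactly the atomic roles relating $a$ to $b$), and $\Inter$ agrees with $\Jnter$ on $S\!_\conC/\conC$, $S\!_\rolR/\rolR$ and $\approx$ over the named individuals --- which is precisely the three displayed equivalences.

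\textbf{Verifying $\Inter\models\kb$, and the main obstacle.} For the rule base, since DL-safe rules only quantify over named individuals and $\Inter$ mirrors $\Jnter$ there while $\Jnter$ satisfies the translated rules of~\eqref{item:rb-rules}, we get $\Inter\models\mathscr{P}$. For $\mathscr{T}$ I would prove the analogue of Lemma~\ref{lemma:domsound}: $\selem\in\conC^\Inter$ iff $\conC\in\val(\selem)$ for all $\conC\in\parts(\FLAT(\mathscr{T}))$, by induction on $\conC$. For words the argument is verbatim that of Lemma~\ref{lemma:domsound}; the genuinely new situations arise when $\selem$ is a class $a$ whose relevant role-successor is another class $b$: if $\forall\rolU.\conA\in\val(a)$ and $\roleval{\mathscr{R}_{a,b}}{\rolU}$, condition~\textbf{uni} applied to $d_{a,b}\in\DO_{\mathscr{T}}\subseteq\DO_0$ forces $\conA\in\mathscr{A}_b=\val(b)$; if $\exists\rolU.\conA$ has a witness $b$ among the classes, condition~\textbf{ex} applied to $d_{a,b}$ forces $\exists\rolU.\conA\in\mathscr{A}_a$; and an existential $\exists\rolU.\conA\in\val(a)$ always obtains a witness by appending a fresh domino from $\DO_{\mathscr{T}}$, available by condition~\textbf{delex} applied to $d_{a,a}$. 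From this, conditions~\textbf{sym} and~\textbf{kb} give $\Inter\models\FLAT(\mathscr{T})$ exactly as in the last paragraph of the proof of Lemma~\ref{lemma:domsound}, hence $\Inter\models\mathscr{T}$ by the ``flat implies unflat'' direction of Proposition~\ref{prop:flateq}. I expect this last readability argument at the named individuals to be the main obstacle: there the role-successor structure is partly dictated by $\Jnter$ and the OBDD rather than chosen freely, so everything hinges on having established $d_{a,b}\in\DO_{\mathscr{T}}$ for all named pairs and on the observation that existential restrictions at named individuals never require a named witness --- all remaining role relationships (the anonymous trees) behave precisely as in the pure-TBox case.
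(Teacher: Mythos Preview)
Your proposal is correct and follows essentially the same route as the paper: establish that the ``observed domino'' $\tuple{\mathscr{A}_a,\mathscr{R}_{a,b},\mathscr{A}_b}$ of every pair of named individuals lies in $\DO_{\mathscr{T}}$ (the paper's auxiliary claim~(\dag)), build $\Inter$ by grafting the tree-shaped domino interpretation onto the named individuals, and then replay the proof of Lemma~\ref{lemma:domsound} with the extra named-to-named cases handled via \textbf{ex} and \textbf{uni} on $d_{a,b}$.

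Two small presentational differences are worth noting. First, to prove $\tobf{\mathscr{T}}(V_{a,b})=\true$ you trace the unique evaluation path of $\mathbb{O}$ on input $V_{a,b}$ forward from $\nroot$, propagating $A_n(a,b)$ along it via the rules of Definition~\ref{def:datalog} and excluding $\nfalse$ at the end; the paper instead proves the stronger invariant ``$\Jnter\models A_n(a,b)$ implies $\varphi_n(V_{a,b})=\true$'' by induction from the leaves upward. Both arguments are valid and equally short; yours is perhaps the more transparent reading of the Datalog rules as an OBDD evaluator. Second, you explicitly quotient by $\approx^\Jnter$ before building $\Inter$, which makes the equality clause of the lemma immediate; the paper sets $a^\Inter\defeq a$ and asserts the equality correspondence without further comment, so your treatment is the more careful one here. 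Finally, where you invoke \textbf{delex} on $d_{a,a}$ to produce a tree witness for an existential at a named individual, the paper takes any $d_{b,a}$, applies \textbf{sym}, and then \textbf{delex}; your shortcut is fine since $d_{a,a}\in\DO_{\mathscr{T}}$ already has left component $\val(a)$. The \textbf{deluni} case at a named individual (needed for the direction $\forall\rolU.\conA\notin\val(a)\Rightarrow a\notin(\forall\rolU.\conA)^\Inter$) is not spelled out in your sketch but is, as you say, covered verbatim by the Lemma~\ref{lemma:domsound} argument once one has a domino with left component $\val(a)$.
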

\proof
Let $\kb=\tuple{\mathscr{T},\mathscr{P}}$.
%
We construct an interpretation $\Inter$ whose domain $\Delta^\Inter$ consists of all sequences starting with an individual name followed by a (possibly empty) sequence of dominoes from $\DO_{\mathscr{T}}$ such that, for every $\selem\in\Delta^\Inter$,
\begin{iteMize}{$\bullet$}
\item if $\selem$ begins with $a\tuple{\mathscr{A},\mathscr{R},\mathscr{B}}$, then $\{\conC \mid \conC \in \parts(\FLAT(\mathscr{T})), a^\Jnter \in S\!_\conC^\Jnter\} = \mathscr{A}$, and
\item if $\selem$ contains subsequent letters $\tuple{\mathscr{A},\mathscr{R},\mathscr{B}}$ and $\tuple{\mathscr{A}',\mathscr{R}',\mathscr{B}'}$, then $\mathscr{B}=\mathscr{A}'$.
\end{iteMize}
For a sequence $\selem = a\tuple{\mathscr{A}_1,\mathscr{R}_1,\mathscr{A}_2} \tuple{\mathscr{A}_2,\mathscr{R}_2,\mathscr{A}_3} \ldots \tuple{\mathscr{A}_{i-1},\mathscr{R}_{i-1},\mathscr{A}_i}$, we define $\val(\selem)\defeq \mathscr{A}_i$, whereas for a $\selem = a$ we define $\val(\selem)\defeq\{\conC \mid \conC \in
\parts(\FLAT(\mathscr{T})), a^\Jnter \in S\!_\conC^\Jnter\}$. Now
the mappings of $\Inter$ are defined as follows:
\begin{iteMize}{$\bullet$}
\item for $a\in\indnames$, we have $a^\Inter\defeq a$,
\item for $\conA\in\connames$, we have $\selem \in \conA^\Inter$ iff $\conA\in \val(\selem)$,
\item for $\rolR\in\rolnames$, we have $\tuple{\selem_1,\selem_2}\in\rolR^\Inter$ if one of the following holds
    \begin{iteMize}{$-$}
    \item $\selem_1= a \in \indnames$ and $\selem_2= b \in \indnames$ and $\tuple{a,b} \in S\!_\rolR^\Jnter$, or
    \item $\selem_2= \selem_1 \tuple{\mathscr{A},\mathscr{R},\mathscr{B}} \text{ with }\rolR \in \mathscr{R}$, or
    \item $\selem_1= \selem_2 \tuple{\mathscr{A},\mathscr{R},\mathscr{B}} \text{ with } \Inv(\rolR) \in \mathscr{R}$.
    \end{iteMize}
\end{iteMize}
Thus, intuitively, $\Inter$ is constructed by extracting the named individuals as well their concept (and mutual role) memberships from $\Jnter$, and appending an appropriate domino-constructed tree model to each of those named individuals. We proceed by showing that $\Inter$ is indeed a model of $\kb$.

First note that the definition of $\Inter$ ensures that, for all individual names $a,b\in\indnames$, we indeed have $\Inter\models \conC(a)$ iff $\Jnter\models S_{\conC}(a)$, $\Inter\models \rolR(a,b)$ iff $\Jnter\models S_{\rolR}(a,b)$, and $\Inter\models a\approx b$ iff $\Jnter\models a\approx b$. Therefore, the validity of the rules introduced via case \eqref{item:rb-rules} ensures that $\Inter$ is a model of $\mathscr{P}$.

For showing that the TBox is also satisfied, we begin with the following auxiliary observation: for every two individual names $a, b\in\indnames$, and $\mathscr{R}_{ab}\defeq\{\rolR\mid \tuple{a^\Jnter,b^\Jnter}\in S\!_\rolR^\Jnter\}\cup\{\Inv(\rolR)\mid \tuple{b^\Jnter,a^\Jnter}\in S\!_\rolR^\Jnter\}$, the domino $\tuple{\val(a),\mathscr{R}_{ab},\val(b)}$ is contained in $\DO_{\mathscr{T}}$ (Claim \dag). Using Proposition~\ref{prop:obdddo}, it suffices to show that the Boolean function $\tobf{\mathscr{T}}$ if applied to $V_{a,b}\defeq\{\val(a)\times\{1\} \cup\mathscr{R}_{ab}\cup\val(b)\times\{2\}\}$ yields $\true$. Since $\tobf{\mathscr{T}}=\varphi_{\nroot}$, this is obtained by showing the following: for any $a,b\in\indnames$, we find that $\tuple{a^\Jnter,b^\Jnter} \in A_n^\Jnter$ implies $\varphi_n(V_{a,b})=\true$. Indeed, (\dag) follows since we have $\tuple{a^\Jnter,b^\Jnter} \in A_{\nroot}^\Jnter$ due to the first rule of \eqref{item:topbot-rules} in Definition~\ref{def:datalog}. We proceed by induction, starting at the leafs of the OBDD. The case $\tuple{a,b} \in A_{\ntrue}^\Inter$ is immediate, and $\tuple{a,b} \in A_{\nfalse}^\Inter$ is excluded by the second rule of \eqref{item:topbot-rules}. For the induction step, consider nodes $n, n'\in N$ such that either $\lambda(n)\in V_{a,b}$ and $n'=\high(n)$, or $\lambda(n)\notin V_{a,b}$ and $n'=\low(n)$. We assume that $\tuple{a^\Jnter,b^\Jnter} \in A_n^\Jnter$, and, by induction, that the claim holds for $n'$. If $\lambda_n=\tuple{\conC,1}$, then one of the rules of case \eqref{item:cone-rules} applies to $a^\Jnter$ and $b^\Jnter$. In both cases, we can infer $\tuple{a^\Jnter,b^\Jnter} \in A_{n'}^\Jnter$, and hence $\varphi_{n'}(V_{a,b})=\true$. Together with the assumptions for this case, Definition~\ref{def:obddsem} implies $\varphi_{n}(V_{a,b})=\true$, as required. The other cases are analogous.
This shows (\dag).

Now we can proceed to show that all individuals of $\Inter$ are contained in the extension of each concept expression of $\FLAT(\mathscr{T})$. To this end, we first show that $\selem\in\conC^\Inter$ iff $\conC\in\val(\selem)$ for all $\conC\in\parts(\FLAT(\mathscr{T}))$. If $\conC\in\connames$ is atomic, this follows directly from the definition of $\Inter$. The remaining cases that may occur in $\parts(\FLAT(\mathscr{T}))$ are $\conC=\exists\rolU.\conA$ and $\conC=\forall\rolU.\conA$.

First consider the case $\conC=\exists\rolU.\conA$ and assume that $\selem\in\conC^\Inter$. Thus there is $\selem'\in\Delta^\Inter$ with $\tuple{\selem,\selem'}\in\rolU^{\Inter}$ and $\selem'\in\conA^\Inter$. The construction of the domino model admits three possible cases:
\begin{iteMize}{$\bullet$}
\item $\selem,\selem'\in \indnames$ and $\roleval{\mathscr{R}_{\selem\selem'}}{\rolU}$ and $\conA\in\val(\selem')$. Now by (\dag), the domino $\tuple{\val(\selem),\mathscr{R}_{\selem\selem'},\val(\selem')}$ satisfies condition \textbf{ex} of Algorithm~\ref{def:builddoms}, and thus $\conC\in\val(\selem)$ as required.
\item $\selem'=\selem\tuple{\val(\selem),\mathscr{R},\val(\selem')}$ with $\roleval{\mathscr{R}}{\rolU}$ and $\conA\in\val(\selem')$. Since $\DO_{\mathscr{T}}\subseteq\DO_0$, we find that $\tuple{\val(\selem),\mathscr{R},\val(\selem')}$ satisfies condition \textbf{ex}, and thus $\conC\in\val(\selem)$ as required.
\item $\selem=\selem'\tuple{\val(\selem'),\mathscr{R},\val(\selem)}$ with $\roleval{\Inv(\mathscr{R})}{\rolU}$ and $\conA\in\val(\selem')$. By condition \textbf{sym}, $\DO_{\mathscr{T}}$ contains the domino $\tuple{\val(\selem),\Inv(\mathscr{R}),\val(\selem')}$, and again we use \textbf{ex} to conclude $\conC\in\val(\selem)$.
\end{iteMize}
For the converse, assume that $\exists\rolU.\conA \in \val(\selem)$. So $\DO_{\mathscr{T}}$ contains a domino $\tuple{\mathscr{A},\mathscr{R},\val(\selem)}$. This is obvious if the sequence $\selem$ ends with a domino. If $\selem=a\in\indnames$, then it follows by applying (\dag) to $a$ with the first individual being arbitrary. By \textbf{sym} $\DO_{\mathscr{T}}$ also contains the domino $\tuple{\val(\selem),\mathscr{R},\mathscr{A}}$. By condition \textbf{delex}, the latter implies that $\DO_{\mathscr{T}}$ contains a domino $\tuple{\val(\selem),\mathscr{R}',\mathscr{A}'}$ such that $\roleval{\mathscr{R}'}{\rolU}$ and $\conA\in\mathscr{A}'$. Thus $\selem' = \selem\tuple{\val(\selem),\mathscr{R}',\mathscr{A}'}$ is an $\Inter$-individual such that $\tuple{\selem,\selem'}\in\rolU^{\Inter}$ and $\selem'\in\conA^\Inter$, and we obtain $\selem\in (\exists\rolU.\conA)^{\Inter}$ as claimed.

For the second case, consider $\conC=\forall\rolU.\conA$ and assume that $\selem\in\conC^\Inter$. As above, we find that $\DO_{\mathscr{T}}$ contains some domino $\tuple{\mathscr{A},\mathscr{R},\val(\selem)}$, where (\dag) is needed if $\selem\in\indnames$. By \textbf{sym} we find a domino $\tuple{\val(\selem),\mathscr{R},\mathscr{A}}$. For a contradiction, suppose that $\forall\rolU.\conA\not\in\val(\selem)$. By condition \textbf{deluni}, the latter implies that $\DO_{\mathscr{T}}$ contains a domino $\tuple{\val(\selem),\mathscr{R}',\mathscr{A}'}$ such that $\roleval{\mathscr{R}'}{\rolU}$ and $\conA\notin\mathscr{A}'$. Thus $\selem' = \selem\tuple{\val(\selem),\mathscr{R}',\mathscr{A}'}$ is an $\Inter$-individual such that $\tuple{\selem,\selem'}\in\rolU^{\Inter}$ and $\selem'\notin\conA^\Inter$. But then $\selem\notin (\forall\rolU.\conA)^{\Inter}$, which is the required contradiction.

For the other direction, assume that $\forall\rolU.\conA \in \val(\selem)$. According to the construction of $\Inter$, for all elements $\selem'$ with $\tuple{\selem,\selem'}\in\rolU^{\Inter}$, there are three possible cases:
\begin{iteMize}{$\bullet$}
\item $\selem,\selem'\in \indnames$ and $\roleval{\mathscr{R}_{\selem\selem'}}{\rolU}$. Now by (\dag), the domino $\tuple{\val(\selem),\mathscr{R}_{\selem\selem'},\val(\selem')}$ satisfies condition \textbf{uni}, whence $\conA\in\val(\selem')$.
\item $\selem'=\selem\tuple{\val(\selem),\mathscr{R},\val(\selem')}$ with $\roleval{\mathscr{R}}{\rolU}$. Since $\DO_{\mathscr{T}}\subseteq\DO_0$, $\tuple{\val(\selem),\mathscr{R},\val(\selem')}$ must satisfy condition \textbf{uni}, and thus $\conA\in\val(\selem')$.
\item $\selem=\selem'\tuple{\val(\selem'),\mathscr{R},\val(\selem)}$ with $\roleval{\Inv(\mathscr{R})}{\rolU}$. By condition \textbf{sym}, $\DO_{\mathscr{T}}$ also contains the domino $\tuple{\val(\selem),\Inv(\mathscr{R}),\val(\selem')}$, and we can again use \textbf{uni} to conclude $\conA\in\val(\selem')$.
\end{iteMize}
Thus, $\conA\in\val(\selem')$ for all $\rolU$-successors $\selem'$ of $\selem$, and hence $\selem\in(\forall\rolU.\conA)^{\Inter}$ as claimed.

To finish the proof, note that any domino $\tuple{\mathscr{A},\mathscr{R},\mathscr{B}}\in\DO_{\mathscr{T}}$ satisfies condition \textbf{kb}. Using \textbf{sym}, we have that for any $\selem\in\Delta^\Inter$, the axiom $\bigsqcap_{\conD\in\val(\selem)}\conD\ssb\conC$ is a tautology for all $\conC\in\FLAT(\mathscr{T})$. As shown above, $\selem\in\conD^{\Inter}$ for all $\conD\in\val(\selem)$, and thus $\selem\in\conC^\Inter$. 
Hence every individual of $\Inter$ is an instance of each concept of $\FLAT(\mathscr{T})$ as required.
\qed

Lemmas~\ref{lemma:ddsound} and \ref{lemma:ddcomp} give rise to the following theorem which finishes the technical development of this section by showing that $\Prog(\kbrb)$ faithfully captures both positive and negative ground conclusions of $\kb$, and in particular that $\Prog(\kbrb)$ and $\kb$ are equisatisfiable.

\begin{theorem}\label{theo:aboxalcib}
For every extended \ALCIb{} knowledge base \kb{} hold
\begin{iteMize}{$\bullet$}\item
$\kb$ and $\Prog(\kb)$ are equisatisfiable,
\item
$\kb\models \conC(a)$ iff $\Prog(\kb)\models S_{\!\conC}(a)$,
\item
$\kb\models \rolR(a,b)$ iff $\Prog(\kb)\models S_{\!\rolR}(a,b)$, and
\item
$\kb\models a\approx b$ iff $\Prog(\kb)\models a\approx b$,
\end{iteMize}
for any $a,b\in\indnames$, $\conC\in\connames$, and $\rolR\in\rolnames$.
\end{theorem}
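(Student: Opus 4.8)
The plan is to obtain Theorem~\ref{theo:aboxalcib} as an essentially immediate corollary of the two preceding model-correspondence lemmas. Lemmas~\ref{lemma:ddsound} and \ref{lemma:ddcomp} together say that models of $\kb$ and models of $\Prog(\kb)$ can be transformed into one another while preserving the interpretation of the named individuals on all atomic statements of interest, so nothing new needs to be constructed; one only has to repackage these two transfer results into statements about (un)satisfiability and about entailment of ground facts.

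First I would settle equisatisfiability. If $\kb$ has a model $\Inter$, then Lemma~\ref{lemma:ddsound} yields a model $\Jnter$ of $\Prog(\kb)$, so $\Prog(\kb)$ is satisfiable. Conversely, if $\Prog(\kb)$ has a model $\Jnter$, then Lemma~\ref{lemma:ddcomp} yields a model $\Inter$ of $\kb$, so $\kb$ is satisfiable. Hence the two are equisatisfiable.

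For the three entailment equivalences I would argue by contraposition in both directions, uniformly over the three kinds of ground fact, writing $\alpha$ for a fact of the form $\conC(a)$, $\rolR(a,b)$, or $a\approx b$, and $\alpha'$ for its Datalog counterpart $S_{\!\conC}(a)$, $S_{\!\rolR}(a,b)$, or $a\approx b$. Suppose $\kb \not\models \alpha$; pick a model $\Inter$ of $\kb$ with $\Inter \not\models \alpha$, and let $\Jnter$ be the model of $\Prog(\kb)$ provided by Lemma~\ref{lemma:ddsound}. Since that lemma guarantees $\Inter\models\alpha$ iff $\Jnter\models\alpha'$, we obtain $\Jnter\not\models\alpha'$, hence $\Prog(\kb)\not\models\alpha'$. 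Symmetrically, if $\Prog(\kb)\not\models\alpha'$, pick a model $\Jnter$ of $\Prog(\kb)$ with $\Jnter\not\models\alpha'$, apply Lemma~\ref{lemma:ddcomp} to obtain a model $\Inter$ of $\kb$ with the same agreement, so $\Inter\not\models\alpha$ and thus $\kb\not\models\alpha$. Combining the two directions yields all three biconditionals.

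The only point requiring a moment of care — and hence the closest thing to an obstacle — is the reading of $\models$ on the disjunctive Datalog side. Since $\Prog(\kb)$ is treated as a first-order theory whose models are exactly those considered in Lemmas~\ref{lemma:ddsound} and \ref{lemma:ddcomp}, the argument above goes through verbatim for entailment understood as truth in all models; moreover, because the queried atoms are positive ground atoms, this coincides with cautious entailment over minimal models, as any model extends a minimal one and positive ground atoms are preserved under adding facts. For $a\approx b$ one should additionally recall, as noted when $\Prog(\kb)$ was defined, that equality is axiomatised explicitly, so $\approx$ behaves as an ordinary binary predicate and is covered by the same reasoning. No further computation is needed.
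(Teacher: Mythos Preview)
Your proposal is correct and takes essentially the same approach as the paper, which simply states that the theorem is immediate from Lemmas~\ref{lemma:ddsound} and~\ref{lemma:ddcomp}. You have merely spelled out the routine contraposition argument that the paper leaves implicit; your additional remarks on minimal-model versus classical semantics and on the axiomatisation of equality are sound but go beyond what the paper deems necessary, since it treats $\Prog(\kb)$ under ordinary first-order model semantics throughout.
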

\proof
Immediate from Lemma~\ref{lemma:ddsound} and Lemma~\ref{lemma:ddcomp}.
\qed

Coming back to our example knowledge base $\mathscr{KB}$ from Section~\ref{sec:boolfunc}, the corresponding disjunctive Datalog program $\Prog(\mathscr{KB})$ contains 70 rules: two rules for each of the 33 labeled nodes from the OBDD displayed in Fig.~\ref{fig:obdd3}, the two rules $\to A_{\nroot}(x,y)$ and $A_{\nfalse}(x,y)\to$ as well as the two rules $\to S\!_{\mathit{Diploma}}(laureus)$ and $\to S\!_{\mathit{PhDStudent}}(laureus)$ introduced by conceiving the two ABox statements as DL-safe rules and translating them accordingly. The program turns out to be unsatisfiable, witnessed by the unsatisfiable subprogram displayed in Fig.~\ref{fig:subprog}.

\begin{figure}[h]
$$\begin{array}{rl@{\quad}rl}
\to & S\!_{\mathit{Diploma}}(laureus) &
\to & S\!_{\mathit{PhDStudent}}(laureus) \\
\to & A_0(x,y) & & \\
A_0(x,y) \wedge S_{\!\exists has.Diploma}(x) \to & A_5(x,y) &
A_0(x,y) \to & A_1(x,y) \vee  S_{\!\exists has.Diploma}(x) \\
A_1(x,y) \wedge S_{\!PhDStudent}(x) \to & A_{\mathsf{false}}(x,y) & & \\
A_5(x,y) \wedge S_{\!\forall has^-.Graduate}(y) \to & A_9(x,y) &
A_5(x,y) \to & A_8(x,y) \vee S_{\!\forall has^-.Graduate}(y) \\
A_8(x,y) \wedge S_{\!Graduate}(y) \to & A_{13}(x,y) &
A_8(x,y) \to & A_{12}(x,y) \vee S_{\!Graduate}(y) \\
A_9(x,y) \wedge S_{\!Graduate}(y) \to & A_{13}(x,y) &
A_9(x,y) \to & A_{16}(x,y) \vee S_{\!Graduate}(y) \\
A_{12}(x,y) \wedge S_{\!Diploma}(y) \to & A_{\mathsf{false}}(x,y) & & \\
A_{13}(x,y) \wedge S_{\!Diploma}(y) \to & A_{\mathsf{false}}(x,y) & & \\
A_{16}(x,y) \wedge S_{\!\exists has.Diploma}(y) \to & A_{\mathsf{false}}(x,y) &
A_{16}(x,y) \to & A_{20}(x,y) \vee  S_{\!\exists has.Diploma}(y) \\
A_{20}(x,y) \wedge S_{\!PhDStudent}(y) \to & A_{\mathsf{false}}(x,y) & & \\
A_{\mathsf{false}}(x,y)\to & & & \\
\end{array}$$
\caption{Unsatisfiable subprogram of $\Prog(\mathscr{KB})$ witnessing unsatisfiability of $\mathscr{KB}$}\label{fig:subprog}
\end{figure}

\section{Polynomial Transformation from \SHIQb{} to \ALCIb{}}
\label{sec:reduction}

In this section, we present a stepwise satisfiability-preserving transformation from the description logic \SHIQb{} to the more restricted \ALCIb{}. This transformation is necessary as our type-elimination method applies directly only to the latter.

\subsection{Unravelings}

For our further considerations, we will use a well-known model transformation technique which will come handy for showing equisatisfiability of knowledge base transformations introduced later on (for an introductory account on unravelings in a DL setting cf., e.g., \cite{DBLP:conf/rweb/Rudolph11}). Essentially, the transformation takes an arbitrary model of a \SHIQb{} knowledge base and converts it into a model that is ``tree-like''. We start with some preliminary definitions. The first one exploits that role subsumption on non-simple roles can be decided by an easy syntactic check that takes only role hierarchy axioms into account.


\begin{definition}
Based on a fixed \SHIQb{} knowledge base $\kb$, we define $\ssb^*$ as the smallest binary relation on the non-simple atomic roles $\Rlang_\mathrm{n}$ such that:
\begin{iteMize}{$\bullet$}
\item $\rolR\ssb^*\rolR$ for every atomic role $\rolR$,
\item $\rolR\ssb^*\rolS$ and $\Inv(\rolR)\ssb^*\Inv(\rolS)$ for every RBox axiom $\rolR\ssb\rolS$, and
\item $\rolR\ssb^*\rolT$ whenever $\rolR\ssb^*\rolS$ and $\rolS\ssb^*\rolT$ for some atomic role $\rolS$.
\end{iteMize}
Furthermore, we write $\rolR\sqsubset^*\rolS$ whenever $\rolR\ssb^*\rolS$ and $\rolS\not\ssb^*\rolR$.
\end{definition}

The next definition introduces a standard model transformation technique that is often used to show variants of the tree model property of a logic. We adopt the definition of \cite{DBLP:conf/ijcai/GlimmHLS07}.

\begin{definition}\label{def:unraveling}
Let \kb{} be a consistent extended \SHIQb{} knowledge base, and let $\Inter = \tuple{\Delta^\Inter, \cdot^\Inter}$ be a model for \kb.

The \emph{unraveling} of $\Inter$ is an interpretation that is obtained from $\Inter$ as follows. We define the set $S \subseteq (\Delta^\Inter)^{\ast}$ of \emph{sequences} to be the smallest set such that
  \begin{iteMize}{$\bullet$}
  \item for every $a \in \indnames$, $a^\Inter$ is a sequence;
  \item $\delem_1 \cdots \delem_n \cdot \delem_{n+1}$ is a sequence, if
    \begin{iteMize}{$-$}
    \item $\delem_1 \cdots \delem_n$ is a sequence,
    \item\label{paths:inverses} $\delem_{i+1} \neq \delem_{i-1}$ for all $i=2,\ldots,n$,
    \item $\tuple{\delem_n,\delem_{n+1}}\in \rolR^\Inter$ for some $\rolR \in \rolnames$.
      \end{iteMize}
  \end{iteMize}
For each $\selem = \delem_1 \cdots \delem_n \in S$, set $\tail{\selem} \defeq \delem_n$. Now, we define the unraveling of $\Inter$ as the interpretation $\Jnter = \tuple{\Delta^\Jnter, \cdot^\Jnter}$ with $\Delta^\Jnter = S$ and we define the interpretation of concept and role names as follows (where $\selem,\selem' \in \Delta^\Jnter$ are arbitrary sequences in $\Delta^\Jnter$):
  \begin{enumerate}[(a)]
  \item \label{it:nom} for each $a \in \indnames$, set $a^\Jnter \defeq a^\Inter$;
  \item \label{it:C} for each concept name $A\in\connames$, set $\selem\in A^\Jnter$ iff $\tail{\selem} \in A^\Inter$;
  \item \label{it:r} for each role name $\rolR \in \rolnames$, set $\tuple{\selem,\selem'} \in \rolR^\Jnter$ iff
  \begin{iteMize}{$\bullet$}
  \item
  $\selem' = \selem\delem$ for some $\delem\in\Delta^\Inter$ and $\tuple{\tail{\selem}, \tail{\selem'}} \in \rolR^\Inter$ or
  \item
  $\selem = \selem'\delem$ for some $\delem\in\Delta^\Inter$ and $\tuple{\tail{\selem}, \tail{\selem'}} \in \rolR^\Inter$ or
  \item
  $\selem = a^\Inter$, $\selem' = b^\Inter$ for some $a,b\in\indnames$ and $\tuple{a^\Inter, b^\Inter} \in \rolR^\Inter$.
  \end{iteMize}
  \end{enumerate}
\end{definition}

Unraveling a model of an extended \SHIQb{} knowledge base results in an interpretation that still satisfies most of the knowledge base's axioms, except for transitivity axioms. The following definition provides a ``repair strategy'' for unravelings such that also the transitivity conditions are again satisfied. The presented definition is inspired by a similar one by \cite{Motik:PhD}.

\begin{definition}\label{def:completion} 
Given an interpretation $\Inter$ and a knowledge base $\kb$, we define the \define{completion} of $\Inter$ with respect to $\kb$ as the new interpretation $\Jnter = \tuple{\Delta^\Jnter,\cdot^\Jnter}$ as follows:
\begin{iteMize}{$\bullet$}
\item
$\Delta^\Jnter\defeq \Delta^\Inter$,
\item
$a^\Jnter \defeq a^\Inter$ for every $a\in \indnames$,
\item
$\conA^\Jnter \defeq \conA^\Inter$ for every $\conA\in \connames$,
\item
for all simple roles $\rolR$, we set $\rolR^\Jnter \defeq \rolR^\Inter$,
\item
for all non-simple roles $\rolR$, $\rolR^\Jnter$ is set to the transitive closure of $\rolR^\Inter$ if $\Trans(\rolR)\in\kb$, otherwise $\rolR^\Jnter\defeq \rolR^\Inter\cup\bigcup_{\rolS\sqsubset^*\rolR\mbox{ with }\Trans(\rolS)\in\kb \mbox{ or }\Trans(\Inv(\rolS))\in\kb}(\rolS^\Inter)^*$, where $(\rolS^\Inter)^*$ denotes the transitive closure of $\rolS^\Inter$.
\end{iteMize}
\end{definition}

Having the above tools at hand, we are now ready to show that unraveling and subsequently completing a model of an extended knowledge base will result in a model. This correspondence will be helpful for showing the completeness of the knowledge base transformation steps introduced below.


\begin{lemma}\label{lemma:unrav}
Let $\kbrb$ be an extended \SHIQb{} knowledge base and let $\Inter$ be a model of $\kbrb$. Moreover, let $\Jnter$ be the unraveling of $\Inter$ and let $\Knter$ be the completion of $\Jnter$. Then the following hold:
\begin{enumerate}[\em(1)]
\item
$\Jnter$ satisfies all axioms of $\kbrb$ that are not transitivity axioms.
\item
For all sequences $\selem_1,\selem_2,\ldots,\selem_{n-1},\selem_n$ with $n>3$ and $\tuple{\selem_i,\selem_{i+1}}\in \rolR^\Jnter$ for $1\leq i\leq n$, and where $\selem_1,\selem_n\in \{a^\Jnter\mid a\in\indnames\}$ and $\selem_2,\ldots,\selem_{n-1}\not\in \{a^\Jnter\mid a\in\indnames\}$, we have $\selem_1 = \selem_n$ and $\selem_2=\selem_{n-1}$.
\item
$\Knter$ is a model of $\kbrb$.
\end{enumerate}
\end{lemma}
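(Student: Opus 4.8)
The plan is to prove the three claims essentially in the order given, since (3) builds on (1) and (2). For claim~(1), I would go through the axiom types of an extended \SHIQb{} knowledge base one by one. Concept names are interpreted in $\Jnter$ via $\tail{\cdot}$, so one first shows by induction on concept expressions that for every \emph{simple} role expression $\rolU$ and every sequence $\selem$, we have $\selem\in\conC^\Jnter$ iff $\tail{\selem}\in\conC^\Inter$; the role-restriction cases $\forall\rolU.\conD$, $\exists\rolU.\conD$ and the number restrictions use that the $\rolR^\Jnter$-neighbours of $\selem$ project down onto a subset of the $\rolR^\Inter$-neighbours of $\tail{\selem}$ that is in bijection with them (edges in the unraveling are never duplicated, by the $\delem_{i+1}\neq\delem_{i-1}$ condition), hence neighbour counts are preserved — this is exactly where simplicity of $\rolU$ is needed, so that $\rolU^\Jnter$ is determined locally. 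GCIs $\conC\ssb\conD$ then follow immediately, and role inclusion axioms $\rolU\ssb\rolV$ follow from the definition of $\rolR^\Jnter$ together with the observation that a non-simple-role edge $\tuple{\selem,\selem'}$ always arises from a single $\Inter$-edge $\tuple{\tail\selem,\tail{\selem'}}$ (or a named-individual edge), which $\Inter$ already respects; DL-safe rules are preserved because $\Jnter$ restricted to the named individuals is isomorphic to $\Inter$ restricted to them, by clauses \eqref{it:nom}--\eqref{it:r}. Transitivity axioms are explicitly exempted.

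For claim~(2), the point is a purely combinatorial fact about the shape of the unraveling. A sequence in $S$ is a path in $\Inter$ that starts either at a named individual or is a singleton, never immediately backtracks, and — crucially — the only way to get an $\rolR^\Jnter$-edge between two \emph{named} individuals is via the third bullet of clause~\eqref{it:r}, i.e. $\selem=a^\Inter$, $\selem'=b^\Inter$ both of length one. So if $\selem_1,\ldots,\selem_n$ is an $\rolR^\Jnter$-path with $\selem_1,\selem_n$ named and $\selem_2,\ldots,\selem_{n-1}$ not named, then $\selem_1$ and $\selem_n$ are length-one sequences, while $\selem_2$ must (by the first two bullets of \eqref{it:r} applied to the edge $\tuple{\selem_1,\selem_2}$) be of the form $\selem_1\delem$ for some $\delem$, and dually $\selem_{n-1}$ must be of the form $\selem_n\delem'$. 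Tracing the non-backtracking path forward from $\selem_1$ and backward from $\selem_n$ and using that each intermediate $\selem_i$ is strictly longer than the named endpoints until it comes back down, one sees the only consistent possibility with $n>3$ is $\selem_1=\selem_n$ and $\selem_2=\selem_{n-1}$ (the path goes out one step and comes straight back). I would write this as a short induction on $n$, or argue directly on word lengths: lengths go $1, 2, 3, \ldots$ and to return to a length-one named sequence the path must descend symmetrically, and non-backtracking forbids any genuine detour.

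For claim~(3), I would use (1) and (2) to check that the completion $\Knter$ satisfies \emph{all} axioms, the new content being the transitivity axioms and the preservation of everything else under taking (partial) transitive closures. Since $\Knter$ and $\Jnter$ agree on the domain, on individual names, on concept names, and on all simple roles, any GCI or DL-safe rule or number restriction — whose roles are simple — is satisfied in $\Knter$ exactly because it is in $\Jnter$ by~(1); and $\forall\rolU.\conC$ with $\rolU$ simple is likewise untouched. For a transitivity axiom $\Trans(\rolR)$, $\rolR^\Knter$ is by definition the transitive closure of $\rolR^\Jnter$, so it is transitive. For a role inclusion $\rolU\ssb\rolV$: if both are simple this is inherited from $\Jnter$; if $\rolU$ is non-simple then so is $\rolV$, and one checks $\rolU^\Jnter\subseteq\rolV^\Jnter$ plus the extra pairs added to $\rolU^\Knter$ (transitive closures of $\rolS^\Jnter$ for $\rolS\sqsubset^*\rolU$ with $\rolS$ or $\Inv(\rolS)$ transitive) are also in $\rolV^\Knter$, which holds because $\rolS\sqsubset^*\rolU\ssb^*\rolV$ and $\rolV^\Knter$ is closed under exactly those transitive closures by its defining formula. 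The one genuinely delicate case is the interaction of \emph{non-simple} universal restrictions $\forall\rolU.\conC$ (with $\rolU$ a non-simple atomic role) with the new edges: this is the standard subtlety that makes transitivity plus value restrictions hard, and it is where claim~(2) does its work. A new $\rolR^\Knter$-edge from $\selem$ to $\selem'$ comes from a $\rolR^\Jnter$- (or $\rolS^\Jnter$-) path; I would need to argue that any such path, after the unraveling, is confined to the subtree hanging off a single named individual \emph{or} is one of the trivial $\selem_1\to\selem_2\to\selem_1$ excursions between named individuals ruled on by~(2), and then invoke the fact that $\Inter$ itself was a model (so value restrictions propagated correctly along transitive roles in $\Inter$) together with the local-projection property from~(1). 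I expect this last point — verifying that non-simple $\forall$-restrictions survive completion, handling the named-individual "loops" via claim~(2) — to be the main obstacle; the rest is bookkeeping over the axiom types.
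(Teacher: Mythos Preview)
Your plan for claims~(1) and~(2) is essentially the paper's argument, but with one conceptual slip in~(1): you say ``this is exactly where simplicity of $\rolU$ is needed, so that $\rolU^\Jnter$ is determined locally.'' That is false. In the unraveling $\Jnter$ \emph{every} atomic role---simple or not---is interpreted purely locally by clause~\eqref{it:r} of Definition~\ref{def:unraveling}; no transitive closure is taken until the completion $\Knter$. Hence the neighborhood isomorphism via $\tail{\cdot}$ works uniformly for all role names, and the induction $\selem\in\conC^\Jnter \Leftrightarrow \tail{\selem}\in\conC^\Inter$ goes through for $\forall\rolR.\conD$ and $\exists\rolR.\conD$ regardless of whether $\rolR$ is simple. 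Since \SHIQb{} does allow non-simple roles in universal and existential restrictions, restricting your induction to simple roles would leave an actual gap in~(1).

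For claim~(3) your route diverges from the paper's and is harder than necessary. You flag the non-simple $\forall\rolR.\conC$ case as the main obstacle and plan to handle it via claim~(2), dealing with ``named-individual loops'' separately. The paper does \emph{not} use claim~(2) in proving~(3) at all. Its key step is instead a single ``stand-in'' observation: for every new edge $\tuple{\selem,\selem'}\in\rolR^\Knter\setminus\rolR^\Jnter$ there is already some $\selem^*$ with $\tuple{\selem,\selem^*}\in\rolR^\Jnter$ that is indistinguishable from $\selem'$ with respect to both role memberships from $\selem$ and concept memberships. (Reason: the transitive-closure path from $\selem$ to $\selem'$ projects via $\tail{\cdot}$ to a path in $\Inter$, where transitivity holds, yielding $\tuple{\tail\selem,\tail{\selem'}}\in\rolR^\Inter$; a one-step $\Jnter$-neighbor of $\selem$ with tail $\tail{\selem'}$ then serves as $\selem^*$.) This immediately gives that concept extensions are unchanged from $\Jnter$ to $\Knter$, so all TBox axioms carry over from~(1); the ban on non-simple roles in number restrictions is what makes indistinguishable neighbors sufficient. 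For DL-safe rules (which \emph{can} contain non-simple role atoms, contrary to your parenthetical), the paper observes that $\Jnter$ and $\Knter$ coincide on named individuals because $\Inter$ already satisfies transitivity there, so completion adds nothing among them. Claim~(2) is deployed only later, in the proof of Proposition~\ref{prop:boxpushing}, to control the DL-safe rules introduced by the $\Es$ transformation.
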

\proof
For the first claim, we investigate all the possible axiom types. First, as $\Inter$ and $\Jnter$ coincide w.r.t.\ concept and role memberships of all named individuals (i.e., individuals $\selem$ for which $\selem=a^\Inter$ for some $a\in \indnames$), they satisfy the same DL-safe rules.

For role hierarchy axioms $\rolU\sqsubseteq\rolV$ with $\rolU, \rolV$ restricted, suppose for a contradiction that $\Jnter$ does not satisfy $\rolU\sqsubseteq\rolV$, i.e., that there are two elements $\selem,\selem'\in \Delta^\Jnter$ such that $\tuple{\selem,\selem'}\in \rolU^\Jnter$ but $\tuple{\selem,\selem'}\not\in \rolV^\Jnter$. As $\rolU$ is restricted, either both $\selem$ and $\selem'$ are named individuals or $\selem'=\selem\delem$ or $\selem=\selem'\delem$. Therefore we know that $\tuple{\tail{\selem},\tail{\selem'}}\in \rolU^\Inter$ but $\tuple{\tail{\selem},\tail{\selem'}}\notin \rolV^\Inter$ which would violate $\rolU\sqsubseteq\rolV$ and hence, gives a contradiction.

Next, we consider TBox axioms (remember that we assume them to be normalized into axioms $\top\sqsubseteq C$ with $C$ in negation normal form). By induction on the role depth, we will show that for every concept $D$ it holds that $\selem\in D^\Jnter$ iff $\tail{\selem}\in D^\Inter$. The satisfaction of $\top\sqsubseteq C$ in $\Jnter$ then directly follows via $\Delta^\Jnter = \{\selem\in\Delta^\Jnter \mid \tail{\selem}\in \Delta^\Inter\} = \{\selem\in\Delta^\Jnter \mid \tail{\selem}\in C^\Inter\} = C^\Jnter$.

As base case, note that for $D\in \connames$, the claim follows by definition, while for $D=\top$ and $D=\bot$ the claim trivially holds. For the induction steps, note that (i) the claimed correspondence transfers immediately from concepts to their Boolean combinations and (ii) that for every $\selem\in\Delta^\Jnter$, the function $\tail{\cdot}$ gives rise to an isomorphism $\varphi$ between the neighborhood of $\selem$ in $\Jnter$ and the neighborhood of $\tail{\selem}$ in $\Inter$.
More precisely, $\varphi$ maps
$\{\selem'\in\Delta^\Jnter \mid \tuple{\selem,\selem'}\in \rolR^\Jnter \text{ for some }\rolR\in\lang{R}\}$ to $\{\delem'\in\Delta^\Inter \mid \tuple{\tail{\selem},\delem'}\in \rolR^\Inter \mbox{ for some } \rolR\in\lang{R}\}$ such that $\tuple{\selem,\selem'}\in \rolS^\Jnter$ iff $\tuple{\tail{\selem},\varphi(\selem')}\in \rolS^\Inter$ for all roles $\rolS\in\rolnames$ as well as $\selem'\in E^\Jnter$ iff $\varphi(\selem')\in E^\Inter$ for concepts $E$ that have a smaller role depth than $D$ (by induction hypothesis). Thereby, the claimed correspondence transfers to existential, universal, and cardinality restrictions as well.


For the second claim, we observe that by the definition of the unraveling, no individual $\selem=\delem_1\ldots\delem_k$ can be directly connected by some role to an individual $\selem'=\delem'_1\ldots\delem'_l$ with $\delem_1\neq\delem'_1$ unless $k=l=1$ in which case both individuals would be named by construction. On the other hand, every role chain starting from some named individual $\delem$ and not containing any other named individual contains only individuals of the form $\delem w$ with $w\in (\Delta^\Inter)^*$. Thus, we conclude that $\selem_1 = \selem_n$. Now, suppose $\selem_2\neq\selem_{n-1}$. By construction we have $\selem_2 = \selem_1 \delem$ and $\selem_{n-1} = \selem_n \delem' = \selem_1 \delem'$ with $\delem \neq \delem'$. However, then by construction, every role path from $\selem_2$ to $\selem_{n-1}$ must contain $\selem_1$ which is named and hence contradicts the assumption. Therefore $\selem_2 = \selem_{n-1}$.


Considering the third claim, we easily find that all transitivity axioms as well as role hierarchy statements are satisfied by construction.
For the TBox axioms, the argumentation is similar to the one used to prove the first claim but it has to be extended by the following observation: By construction, for all new role instances $\tuple{\selem,\selem'}\in \rolR^\Knter \setminus \rolR^\Jnter$ introduced by the completion, there is already a $\selem^*$ with $\tuple{\selem,\selem^*}\in \rolR^\Jnter$ such that $\tuple{\selem,\selem^*}\in \rolS^\Jnter$ iff $\tuple{\selem,\selem'}\in \rolS^\Inter$ for all roles $\rolS\in\rolnames$ as well as $\selem^*\in E^\Jnter$ iff $\selem'\in E^\Inter$ for concepts $E$. Therefore (and since non-simple roles are forbidden in cardinality constraints) the concept extensions do not change in $\Knter$ compared to $\Jnter$. Finally, the DL-safe rules are valid: Due to the first claim they hold in $\Jnter$. Then, they also hold in $\Knter$ since, by construction $\Knter$ and $\Jnter$ coincide when restricted to named individuals. In order to see the latter, note that $\Jnter$ also coincides with $\Inter$ w.r.t. named individuals and $\Inter$ satisfies all transitivity axioms, thus the completion does not introduce new role instances, as far as named individuals are concerned.
\qed

\subsection{From \SHIQb{} to \ALCHIQb{}}

As observed by \cite{RKH:Jelia-08}, a slight generalization of results by \cite{Motik:PhD} yields that any \SHIQbs{} knowledge base $\kb$ can be transformed into an equisatisfiable \ALCHIQb{} knowledge base. For the case of extended knowledge bases, this transformation has to be adapted in order to correctly treat the entailment of ground facts $\rolR(a,b)$ for non-simple roles $\rolR$ via transitivity. We start by defining this modified transformation, whereby the ground fact entailment is taken care of by appropriate DL-safe rules.

\newcommand{\clos}{cl}

\begin{definition}\label{def:es-trafo}
Let $\clos(\kb)$ denote the smallest set of concept expressions where
\begin{iteMize}{$\bullet$}
\item $\NNF(\neg \conC \sqcup \conD) \in \clos(\kb)$ for any TBox axiom $\conC \ssb \conD$,
\item $\conD \in \clos(\kb)$ for every subexpression $\conD$ of some concept $\conC \in \clos(\kb)$,
\item $\NNF(\neg\conC) \in \clos(\kb)$ for any $\atmost{n}\rolR.\conC \in \clos(\kb)$,
\item $\forall\rolS.\conC \in \clos(\kb)$ whenever $\Trans(\rolS) \in \kb$ and $\rolS\ssb^*\rolR$ for a role $\rolR$ with $\forall \rolR.\conC \in \clos(\kb)$.
\end{iteMize}
Finally, let $\Es(\kb)$ denote the extended knowledge base obtained from $\kb$ by removing all transitivity axioms $\Trans(\rolR)$ and
\begin{iteMize}{$\bullet$}
\item
adding the axiom $\forall \rolR.\conC \ssb \forall \rolR.(\forall \rolR.\conC)$ to $\kb$ whenever $\forall \rolR.\conC \in \clos(\kb)$,
\item
adding the axiom $\exists (\rolR \sqcap \rolR^-).\top \ssb \textit{Self}_R$ to $\kb$, where $\textit{Self}_R$ is a fresh concept,
\item
adding the DL-safe rules $\textit{Self}_R(x)\to R(x,x)$ and $R(x,y),R(y,z)\to R(x,z)$ to $\kb$.
\end{iteMize}
\end{definition}

Note that the knowledge base translation defined by $\Es$ can be done in polynomial time.
We now show that the defined transformation works as expected, making use of the model transformation techniques established in the previous section. Parts of the proof are adopted from \cite{Motik:PhD}.

\begin{proposition}\label{prop:boxpushing}
Let $\kb$ be an extended \SHIQb{} knowledge base. Then $\kb$ and $\Es(\kb)$ are equisatisfiable.
\end{proposition}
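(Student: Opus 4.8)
The plan is to prove the two implications separately, handling the elimination of transitivity by the classical ``box-pushing'' argument adapted to extended knowledge bases; parts of the reasoning follow \cite{Motik:PhD}.

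For the direction ``$\kb$ satisfiable $\Rightarrow$ $\Es(\kb)$ satisfiable'' I would start from a model $\Inter$ of $\kb$ and turn it into a model of $\Es(\kb)$ by interpreting each fresh concept as $\textit{Self}_R^\Inter\defeq\{d\in\Delta^\Inter \mid \tuple{d,d}\in\rolR^\Inter\}$, leaving everything else unchanged. Every axiom of $\Es(\kb)$ that also occurs in $\kb$ is then satisfied by assumption, and the dropped transitivity axioms cause no harm. It remains to check the newly added formulae. The box-pushing GCIs $\forall\rolR.\conC\ssb\forall\rolR.(\forall\rolR.\conC)$ hold because they are logical consequences of $\kb$: in any model, a role (or a transitive subrole reached via $\ssb^*$) that is interpreted as a transitive relation propagates universal restrictions along its chains, and the role hierarchy links $\rolR$ with its transitive subroles. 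The inclusion $\exists(\rolR\sqcap\rolR^-).\top\ssb\textit{Self}_R$ holds since $\tuple{d,e},\tuple{e,d}\in\rolR^\Inter$ together with transitivity give $\tuple{d,d}\in\rolR^\Inter$; the rule $\textit{Self}_R(x)\to R(x,x)$ holds by our choice of $\textit{Self}_R^\Inter$; and $R(x,y),R(y,z)\to R(x,z)$ holds because $\rolR^\Inter$ is transitive. Hence the extended $\Inter$ models $\Es(\kb)$.

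The converse direction is the substantial one, and here I would not use a model $\Jnter$ of $\Es(\kb)$ directly but first unravel and then complete it. Since $\Es(\kb)$ contains no transitivity axioms, Lemma~\ref{lemma:unrav}(1) applied to $\Es(\kb)$ shows that its unraveling $\Jnter'$ satisfies \emph{all} of $\Es(\kb)$ --- in particular the box-pushing axioms, the $\textit{Self}_R$ inclusions, the transitivity rules, and all original axioms and rules of $\kb$ other than the transitivity axioms. Let $\Knter$ be the completion of $\Jnter'$ with respect to $\kb$ (Definition~\ref{def:completion}). I claim $\Knter\models\kb$. Transitivity and role-hierarchy axioms hold by construction; number restrictions are unaffected since they involve only simple roles, whose extension the completion leaves untouched; and the remaining TBox axioms, being in NNF, can only be endangered through $\forall$-restrictions. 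For those I would adapt the proof of Lemma~\ref{lemma:unrav}(3): a new pair $\tuple{d,e}\in\rolR^\Knter$ arises from an $\rolS$-chain $d=d_0\,\rolS\,d_1\,\rolS\cdots\rolS\,d_k=e$ in $\Jnter'$ with $\Trans(\rolS)\in\kb$ and $\rolS\ssb^*\rolR$; if $d\in(\forall\rolR.\conC)^{\Jnter'}$ then, since $\rolS^{\Jnter'}\subseteq\rolR^{\Jnter'}$, also $d\in(\forall\rolS.\conC)^{\Jnter'}$, and iterating the box-pushing axiom for $\rolS$ (available because $\forall\rolS.\conC\in\clos(\kb)$ by the closure condition) along the chain yields $d_1,\ldots,d_k\in\conC^{\Jnter'}=\conC^{\Knter}$; hence $d$ still satisfies $\forall\rolR.\conC$ in $\Knter$. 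Finally, the DL-safe rules of $\kb$ hold in $\Jnter'$ and are preserved under completion: by Lemma~\ref{lemma:unrav}(2) any role path between two named individuals that passes through unnamed individuals returns to the \emph{same} named individual, so among named individuals the completion adds only self-loops and compositions of existing named-to-named edges, and both of these are already forced in $\Jnter'$ by the axiom $\exists(\rolR\sqcap\rolR^-).\top\ssb\textit{Self}_R$ together with the rules $\textit{Self}_R(x)\to R(x,x)$ and $R(x,y),R(y,z)\to R(x,z)$. Thus $\Knter$ and $\Jnter'$ coincide on all named individuals, so the DL-safe rules carry over and $\Knter\models\kb$.

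I expect the main obstacle to be the TBox-preservation step in the second direction --- verifying that the box-pushing axioms really do ``guard'' \emph{all} the edges introduced by the completion, and in particular getting the interaction between a transitive role $\rolS$ and a non-simple super-role $\rolR$ with $\rolS\sqsubset^*\rolR$ exactly right (which is precisely why $\clos(\kb)$ is closed under propagating $\forall\rolS.\conC$ downward along $\ssb^*$). A secondary, fiddly point is the preservation of the original DL-safe rules under completion, which is exactly what the auxiliary concepts $\textit{Self}_R$, the inclusions $\exists(\rolR\sqcap\rolR^-).\top\ssb\textit{Self}_R$, and the two extra DL-safe rules are designed to make possible, via the structural property recorded in Lemma~\ref{lemma:unrav}(2).
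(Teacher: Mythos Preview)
Your proposal is correct and follows essentially the same route as the paper: extend a model of $\kb$ by interpreting $\textit{Self}_R$ appropriately for the forward direction, and for the converse unravel a model of $\Es(\kb)$, complete it with respect to $\kb$, then use the box-pushing axioms (propagated to transitive subroles via the closure $\clos(\kb)$) to recover the $\forall$-restrictions, and the $\textit{Self}_R$ axiom together with the two added DL-safe rules and Lemma~\ref{lemma:unrav}(2) to show the completion agrees with the unraveling on named individuals. The paper organizes the TBox-preservation step as an explicit structural induction establishing $\conC^\Inter\subseteq\conC^\Jnter$ for all $\conC\in\clos(\kb)$ (which in particular handles the $\atmost{n}$ case cleanly via $\NNF(\neg\conC)\in\clos(\kb)$), whereas you sketch the argument more informally; but the substance is the same.
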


\proof Obviously, every model $\Inter$ of $\kbrb$ is a model of $\Es(\kbrb)$ if we additionally stipulate $\textit{Self}_R \defeq \{\delta \mid \tuple{\delta,\delta}\in R^\Inter\}$.

For the other direction, let $\Knter$ be a model of $\Es(\kb)$. Let now $\Inter$ be the unraveling of $\Knter$ and let $\Jnter$ be the completion of $\Inter$ w.r.t.\ $\kb$. As $\Es(\kb)$ does not contain any transitivity statements, we know by Lemma~\ref{lemma:unrav} (1) that $\Inter$ is a model of $\Es(\kb)$ as well.

As a direct consequence of the definition of the completion, note that for all simple roles $\rolV$ we have $\rolV^\Jnter = \rolV^\Inter$ (fact \dag).

We now prove that $\Jnter$ is a model of $\kb$ by considering all axioms, starting with the RBox. Every transitivity axiom of $\kb$ is obviously satisfied by the definition of $\Jnter$. Moreover, every role inclusion $\rolV\ssb \rolW$ axiom is also satisfied:

If both $\rolV$ and $\rolW$ are Boolean role expressions (which by definition contain only simple roles) this is a trivial consequence of (\dag). If $\rolV$ is a Boolean role expression and $\rolW$ is a non-simple role, this follows from (\dag) and the fact that, by construction of $\Jnter$, we have $\rolR^\Inter \subseteq \rolR^\Jnter$ for every non-simple role $\rolR$.
As a remaining case, assume that both $\rolV$ and $\rolW$ are non-simple roles. If $\rolW$ is not transitive, this follows directly from the definition, otherwise we can conclude it from the fact that the transitive closure is a monotone operation w.r.t.~set inclusion.
\medskip

We proceed by examining the concept expressions $\conC \in \clos(\kb)$ and show via structural induction that $\conC^\Inter \subseteq \conC^\Jnter$. As base case, for every concept of the form $\conA$ or $\neg\conA$ for $\conA\in \connames$ this claim follows directly from the definition of $\Jnter$. We proceed with the induction steps for all possible forms of a complex concept $\conC$ (mark that all $\conC\in \clos(\kb)$ are in negation normal form):

\begin{iteMize}{$\bullet$}
\item
Clearly, if $\conD_1^\Inter \subseteq \conD_1^\Jnter$ and $\conD_2^\Inter \subseteq \conD_2^\Jnter$ by induction hypothesis, we can directly conclude $(\conD_1\sqcap\conD_2)^\Inter\subseteq(\conD_1\sqcap\conD_2)^\Jnter$ as well as $(\conD_1\sqcup\conD_2)^\Inter\subseteq(\conD_1\sqcup\conD_2)^\Jnter$.
\item
Likewise, as we have $\rolV^\Inter\subseteq\rolV^\Jnter$ for all simple role expressions and non-simple roles $\rolV$ and again $\conD^\Inter \subseteq \conD^\Jnter$ due to the induction hypothesis, we can conclude $(\exists\rolV.\conD)^\Inter\subseteq (\exists\rolV.\conD)^\Jnter$ as well as $(\atleast{n}\rolV.\conD)^\Inter\subseteq (\atleast{n}\rolV.\conD)^\Jnter$.
\item
Now, consider $\conC = \forall\rolV.\conD$. If $\rolV$ is a simple role expression, we know that $\rolV^\Jnter = \rolV^\Inter$, whence we can derive $(\forall\rolV.\conD)^\Inter\subseteq (\forall\rolV.\conD)^\Jnter$ from the induction hypothesis.

It remains to consider the case $\conC = \forall\rolR.\conD$ for non-simple roles $\rolR$. Assume $\selem \in (\forall\rolR.\conD)^\Inter$. If there is no $\selem'$ with $\tuple{\selem,\selem'} \in \rolR^\Jnter$, then $\selem \in (\forall\rolR.\conD)^\Jnter$ is trivially true. Now assume there are such $\selem'$. For each of them, we can distinguish two cases:
\begin{iteMize}{$-$}
\item $\tuple{\selem,\selem'} \in \rolR^\Inter$, implying $\selem'\in \conD^\Inter$ and, via the induction hypothesis, $\selem'\in \conD^\Jnter$,
\item $\tuple{\selem,\selem'} \not\in \rolR^\Inter$. Yet, by construction of $\Jnter$, this means that there is a role $\rolS$ with $\rolS\ssb^*\rolR$
and $\Trans(S)\in \kb$ and a sequence $\selem=\selem_0,\ldots,\selem_n=\selem'$ with $\tuple{\selem_k,\selem_{k+1}}\in \rolS^\Inter$ for all $0\leq k < n$. Then $\selem \in (\forall\rolR.\conD)^\Inter$ implies $\selem \in (\forall\rolS.\conD)^\Inter$, and hence $\selem_1\in\conD^\Inter$. By Definition~\ref{def:es-trafo}, $\Es(\kb)$ contains the axiom $\forall \rolS.\conD \ssb \forall\rolS.(\forall\rolS.\conD)$, and hence $\selem_1\in(\forall\rolS.\conD)^\Inter$. Continuing this simple induction, we find that $\selem_k\in\conD^\Inter$ for all $k=1,\ldots,n$ including $\selem_n=\selem'$.
\end{iteMize}
So we can conclude that for all such $\selem'$ we have $\selem' \in \conD^\Inter$. Via the induction hypothesis follows $\selem \in \conD^\Jnter$ and hence we can conclude $\selem \in (\forall\rolR.\conD)^\Jnter$.
\item
Finally, consider $\conC = \atmost{n}\rolR.\conD$ and assume $\selem \in (\atmost{n}\rolR.\conD)^\Inter$. From the fact that $\rolR$ must be simple follows $\rolR^\Jnter=\rolR^\Inter$. Moreover, since both $\conD$ and $\NNF(\neg\conD)$ are contained in $\clos(\kb)$ the induction hypothesis gives $\conD^\Jnter = \conD^\Inter$. Those two facts together imply $\selem \in (\atmost{n}\rolR.\conD)^\Inter$.
\end{iteMize}

Now considering an arbitrary $\kb$ TBox axiom $\conC\ssb\conD$, we find $\NNF(\neg \conC \sqcup \conD)^\Inter=\Delta^\Inter$ as $\Inter$ is a model of $\kb$. Moreover -- by the correspondence just shown -- we have $\NNF(\neg \conC\sqcup \conD)^\Inter \subseteq \NNF(\neg \conC \sqcup \conD)^\Jnter$ and hence also $\NNF(\neg \conC \sqcup \conD)^\Jnter=\Delta^\Jnter$ making $\conC\ssb\conD$ an axiom satisfied in $\Jnter$.

For showing that all DL-safe rules from $\kb$ are satisfied, we will prove that $\Inter$ and $\Jnter$ coincide on the satisfaction of all ground atoms -- satisfaction of $\kb$ in $\Jnter$ then follows from satisfaction of $\kb$ in $\Inter$. By construction, this is obviously the case for all atoms of the shape $a\approx b$, $C(a)$ and $R(a,b)$ for $a,b\in \indnames$, $C \in \connames$ and $R\in\rolnames$ simple. Moreover we have that $\Jnter\models R(a,b)$ whenever $\Inter\models R(a,b)$. To settle the other direction, suppose $R$ non-simple and $\Jnter\models R(a,b)$ but $\Inter\not\models R(a,b)$. But then, there must be a role $S\ssb^* R$ that is declared transitive and satisfies $\Jnter\models S(a,b)$ but $\Inter\not\models S(a,b)$. Let us assume that $S$ is a minimal such role w.r.t.\ $\ssb^*$. Then, by construction, there must be a sequence $a^\Inter=\selem_1,\selem_2,\ldots,\selem_{k-1},\selem_k=b^\Inter$ with $\tuple{\selem_i,\selem_{i+1}}\in \rolS^\Inter$. This sequence can be split into subsequences at elements $o^\Inter_i$ for which there is a $o_i\in \indnames$, i.e., at named individuals, leaving us with subsequences (i) of subsequent named individuals $o_i^\Inter,o_{i+1}^\Inter$ or (ii) of the shape $o_i^\Inter=\selem_{i,1},\selem_{i,2},\ldots,\selem_{i,n-1},\selem_{i,n}=o_{i+1}^\Inter$ with $\selem_{i,2},\ldots,\selem_{i,n-1}$ unnamed individuals. For case (ii), Lemma~\ref{lemma:unrav} (2) guarantees $o_i^\Inter = o_{i+1}^\Inter$ and $\selem_{i,2}=\selem_{i,n-1}$, which implies $o_i^\Inter \in (\exists(\rolR\sqcap\rolR^-).\top)^\Inter$. Then, due to the according axiom $\exists (\rolR \sqcap \rolR^-).\top \ssb \textit{Self}_R$ in $\Es(\kb)$, we obtain $o_i^\Inter \in \textit{Self}_R^{\,\Inter}$ and by the DL-safe rule $\textit{Self}_R(x)\to R(x,x)$ we have $\tuple{o_i^\Inter,o_i^\Inter}\in \rolR^\Inter$. Hence, we know that $R(o_i,o_{i+1})$ holds in $\Inter$ for all our subsequences $o_i^\Inter\ldots o_{i+1}^\Inter$. But then, a (possibly iterated) application of the DL-safe rule $R(x,y)\wedge R(y,z)\to R(x,z)$ also yields that $R(a,b)$ is valid in $\Inter$, contradicting our assumption. This finishes the proof.
\qed

\subsection{From \ALCHIQb{} to \ALCHIqb}

We now show how any extended \ALCHIQb{} knowledge base $\kb$ can be transformed into an extended \ALCHIqb{} knowledge base $\Ege(\kb)$. The difference between the two DLs is that the latter does not allow $\atleast{}$ number restrictions. This transformation (as well as the one presented in Section~\ref{sec:num2}) makes use of the Boolean role constructors and differs conceptually and technically from another method for removing qualified number restrictions from DLs described by \cite{DBLP:conf/aaai/GiacomoL94}.

Given an \ALCHIQb{} knowledge base $\kb$, the \ALCHIqb{} knowledge base $\Ege(\kb)$ is obtained by first flattening $\kb$ and then iteratively applying the following procedure to $\FLAT(\kb)$, terminating if no $\atleast{}$ restrictions are left:

\begin{iteMize}{$\bullet$}
\item
Choose an occurrence of $\atleast{n}\rolU.\conA$ in the knowledge base.
\item
Substitute this occurrence by $\exists \rolR_1.\conA \sqcap \ldots \sqcap \exists \rolR_n.\conA$, where $\rolR_1,\ldots,\rolR_n$ are fresh role names.
\item
For every $i\in\{1,\ldots,n\}$, add $ \rolR_i \ssb \rolU$ to the knowledge base's RBox.
\item
For every $1 \leq i < k \leq n$, add $\forall (\rolR_i \sqcap \rolR_k).\bot$ to the knowledge base.
\end{iteMize}

Observe that this transformation can be done in polynomial time, assuming a unary encoding of the numbers $n$. It remains to show that $\kb$ and $\Ege(\kb)$ are indeed equisatisfiable.

\begin{lemma}\label{lemma:removefirstq}
Let $\kb$ be an extended \ALCHIQb{} knowledge base. Then we have that the extended \ALCHIqb{} knowledge base $\Ege(\kb)$ and $\kb$ are equisatisfiable.
\end{lemma}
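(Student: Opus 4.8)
The plan is to reduce the statement to a single application of the substitution procedure and then to argue by induction. The procedure terminates, since each step strictly decreases the number of occurrences of $\atleast{}$-concepts in the knowledge base: the replacement introduces only $\exists$-restrictions, role inclusions, and universal restrictions with filler $\bot$, and the ABox and DL-safe rules contain only atomic concepts. Hence it suffices to show that whenever an \ALCHIQb{} knowledge base $\kb''$ is obtained from an \ALCHIQb{} knowledge base $\kb'$ by one application of the step, $\kb'$ and $\kb''$ are equisatisfiable; chaining these equivalences and invoking Proposition~\ref{prop:flateq} (equisatisfiability of $\kb$ and $\FLAT(\kb)$) then yields the lemma. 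Fix the rewritten occurrence, write $\psi \defeq \atleast{n}\rolU.\conA$ and $\psi' \defeq \exists\rolR_1.\conA\sqcap\cdots\sqcap\exists\rolR_n.\conA$, and let $\varphi$ be the TBox axiom body of $\kb'$ containing the marked occurrence, so that the corresponding axiom body $\varphi'$ of $\kb''$ is $\varphi$ with this occurrence of $\psi$ replaced by $\psi'$. Since $\FLAT(\kb)$, and hence every intermediate knowledge base, is in negation normal form, the marked occurrence is \emph{positive}: it sits below only $\sqcap$ and $\sqcup$. Consequently, in any interpretation, replacing the marked subconcept by one of smaller (resp.\ larger) extension can only shrink (resp.\ enlarge) the extension of the whole body. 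Also note that $\psi$ and the surrounding context of $\varphi$ mention only symbols already present in $\kb'$.

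For the direction ``$\kb''$ satisfiable $\Rightarrow$ $\kb'$ satisfiable'', let $\Inter\models\kb''$; I claim $\Inter\models\kb'$. All axioms and DL-safe rules of $\kb'$ other than the one with body $\varphi$ occur literally in $\kb''$, and the fresh role names $\rolR_i$ do not occur in $\kb'$, so only $\varphi$ needs checking. From $\rolR_i\ssb\rolU\in\kb''$ and $\forall(\rolR_i\sqcap\rolR_k).\bot\in\kb''$ (for $i\neq k$) one gets $\psi'^\Inter\subseteq\psi^\Inter$: if $\delta\in\psi'^\Inter$, then for each $i$ there is $f_i\in\conA^\Inter$ with $\tuple{\delta,f_i}\in\rolR_i^\Inter\subseteq\rolU^\Inter$, and $f_i\neq f_k$ whenever $i\neq k$, because otherwise $\tuple{\delta,f_i}$ would be a $(\rolR_i\sqcap\rolR_k)$-successor of $\delta$ lying in $\bot^\Inter=\emptyset$; hence $\delta$ has at least $n$ distinct $\rolU$-successors in $\conA^\Inter$, i.e.\ $\delta\in\psi^\Inter$. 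By positivity of the marked occurrence and $\psi'^\Inter\subseteq\psi^\Inter$ we obtain $\varphi'^\Inter\subseteq\varphi^\Inter$, and since $\varphi'^\Inter=\Delta^\Inter$ this gives $\varphi^\Inter=\Delta^\Inter$. Thus $\Inter\models\kb'$.

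For the converse, let $\Inter\models\kb'$. Define $\Inter^+$ to coincide with $\Inter$ on all symbols of $\kb'$ (in particular $\Delta^{\Inter^+}=\Delta^\Inter$) and, for every $\delta\in\psi^\Inter$, fix pairwise distinct $\rolU$-successors $e^\delta_1,\dots,e^\delta_n\in\conA^\Inter$ (these exist as $\delta\in(\atleast{n}\rolU.\conA)^\Inter$) and put $\tuple{\delta,e^\delta_i}\in\rolR_i^{\Inter^+}$, with no other pairs in any $\rolR_i^{\Inter^+}$. Then $\rolR_i^{\Inter^+}\subseteq\rolU^{\Inter^+}$, so each inclusion $\rolR_i\ssb\rolU$ holds; for $i\neq k$ the relations $\rolR_i^{\Inter^+}$ and $\rolR_k^{\Inter^+}$ are disjoint, since a common pair would have the form $\tuple{\delta,e^\delta_i}=\tuple{\delta,e^\delta_k}$, contradicting distinctness, so each $\forall(\rolR_i\sqcap\rolR_k).\bot$ holds; and every $\delta\in\psi^\Inter$ lies in $(\exists\rolR_i.\conA)^{\Inter^+}$ for all $i$, hence $\psi^\Inter\subseteq\psi'^{\Inter^+}$. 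Since $\varphi$ and $\psi$ do not mention the $\rolR_i$, we have $\varphi^{\Inter^+}=\varphi^\Inter=\Delta^{\Inter^+}$ and $\psi^{\Inter^+}=\psi^\Inter$; by positivity and $\psi^{\Inter^+}\subseteq\psi'^{\Inter^+}$ this yields $\varphi'^{\Inter^+}\supseteq\varphi^{\Inter^+}=\Delta^{\Inter^+}$, so $\Inter^+\models\varphi'$. Moreover, the $\rolR_i$ are fresh role names and therefore simple — no transitivity axiom mentions them and no role inclusion has them on its right-hand side — so the Boolean role expressions $\rolR_i\sqcap\rolR_k$ are well-formed and the inclusions $\rolR_i\ssb\rolU$ do not affect simplicity of any role, i.e.\ $\kb''$ is a legitimate \ALCHIqb{} knowledge base. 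All remaining axioms and DL-safe rules of $\kb'$ appear unchanged in $\kb''$, so $\Inter^+\models\kb''$.

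The argument is essentially bookkeeping once the setup is in place; the two points that deserve care are the $\bot$-filler trick, by which the universal restrictions $\forall(\rolR_i\sqcap\rolR_k).\bot$ encode pairwise disjointness of the $\rolR_i$, and the verification that the newly introduced role names remain simple, so that using them inside a Boolean role constructor does not leave the fragment \ALCHIqb. Neither is a deep obstacle, but the simplicity side condition is the kind of detail that is easy to overlook.
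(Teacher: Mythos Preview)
Your proof is correct and follows essentially the same approach as the paper: induct on single substitution steps, and for each step show $(\exists\rolR_1.\conA\sqcap\cdots\sqcap\exists\rolR_n.\conA)^\Inter\subseteq(\atleast{n}\rolU.\conA)^\Inter$ via the disjointness and inclusion axioms for one direction, and construct the fresh $\rolR_i$ from chosen witnesses for the other. One small imprecision: the reason the marked occurrence sits only below $\sqcap$ and $\sqcup$ is \emph{flatness} (quantifiers have atomic fillers), not merely negation normal form; NNF alone would still permit, e.g., $\forall\rolS.(\atleast{n}\rolU.\conA)$.
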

\proof
%
First we prove that every model of $\Ege(\kb)$ is a model of $\kb$. We do so by an inductive argument, showing that no additional models can be introduced in any substitution step of the above conversion procedure. Hence, assume $\kb''$ is an intermediate knowledge base that has a model $\Inter$, and that is obtained from $\kb'$ by eliminating the occurrence of $\atleast{n}\rolU.\conA$ as described above. Considering $\kb''$, we find due to the $\kb''$ axioms $\forall (\rolR_i \sqcap \rolR_k).\bot$ that no two individuals $\delem,\delem'\in \Delta^\Inter$ can be connected by more than one of the roles $\rolR_1,\ldots,\rolR_n$. In particular, this enforces $\delem' \neq \delem''$, whenever $\tuple{\delem,\delem'}\in \rolR_i^\Inter$ and $\tuple{\delem,\delem''}\in \rolR_j^\Inter$ for distinct $\rolR_i$ and $\rolR_j$. Now consider an arbitrary $\delem\in(\exists \rolR_1.\conA \sqcap \ldots \sqcap \exists \rolR_n.\conA)^\Inter$. This ensures the existence of individuals $\delem_1,\ldots,\delem_n$ with $\tuple{\delem,\delem_i}\in \rolR_i^\Inter$ and $\delem_i\in \conA^\Inter$ for $1\leq i \leq n$. By the above observation, all such $\delem_i$ are pairwise distinct. Moreover, the axioms $\rolR_i \ssb \rolU$ ensure $\tuple{\delem,\delem_i}\in \rolU^\Inter$ for all $i$, hence we find that $\delem \in (\atleast{n}\rolU.\conA)^\Inter$. So we know $(\exists \rolR_1.\conA \sqcap \ldots \sqcap \exists \rolR_n.\conA)^\Inter \subseteq (\atleast{n}\rolU.\conC)^\Inter$. From the fact that both of those concept expressions occur outside any negation or quantifier scope (as the transformation starts with a flattened knowledge base and does not itself introduce such nestings) in axioms $\conD''\in \kb''$ and $\conD'\in \kb'$ which are equal up to the substituted occurrence, we can derive that $\conD''^\Inter \subseteq \conD'^\Inter$. Then, from $\conD''^\Inter = \Delta^\Inter$ follows $\conD'^\Inter = \Delta^\Inter$ making $\conD'$ valid in $\Inter$. Apart from $\conD'$, all other axioms from $\kb'$ coincide with those from $\kb''$ and hence are naturally satisfied in $\Inter$. So we find that $\Inter$ is a model of $\kb'$.

At the end of our inductive chain, we finally arrive at $\FLAT(\kb)$ which is equisatisfiable to $\kb$ by Proposition~\ref{prop:flateq}.

Second, we show that $\Ege(\kb)$ has a model if $\kb$ has. By Proposition~\ref{prop:flateq}, satisfiability of $\kb$ entails the existence of a model of $\FLAT(\kb)$. Moreover, every model of $\FLAT(\kb)$ can be transformed to a model of $\Ege(\kb)$, as we will show using the same inductive strategy as above by doing iterated model transformations following the syntactic knowledge base conversions. Again, assume $\kb''$ is an intermediate knowledge base obtained from $\kb'$ by eliminating the occurrence of $\atleast{n}\rolU.\conA$ as described above, and suppose $\Inter$ is a model of $\kb'$. Based on $\Inter$, we now (nondeterministically) construct an interpretation $\Jnter$ as follows:
\begin{iteMize}{$\bullet$}
\item
$\Delta^\Jnter \defeq \Delta^\Inter$,
\item
for all $\conC\in\connames$, let $\conC^\Jnter \defeq \conC^\Inter$,
\item
for all $\rolS\in\rolnames\setminus\{\rolR_i\mid 1\leq i \leq n\}$,
 let $\rolS^\Jnter \defeq \rolS^\Inter$,
\item
 for every $\delem \in (\atleast{n}\rolU.\conA)^\Inter$, choose
 pairwise distinct $\epsilon^\delem_1, \ldots, \epsilon^\delem_n$
 with $\tuple{\delem,\epsilon^\delem_i}\in \rolU^\Inter$ and
 $\epsilon^\delem_i\in\conA^\Inter$ (their existence being ensured
 by $\delem$'s aforementioned concept membership) and let
$\rolR_i^\Jnter \defeq \{\tuple{\delem,\epsilon^\delem_i}\mid\delem \in (\atleast{n}\rolU.\conA)^\Inter \}$.
\end{iteMize}
Now, it is easy to see that $\Jnter$ satisfies all newly introduced axioms of the shape $\forall (\rolR_i \sqcap \rolR_k).\bot$, as the $\epsilon^\delem_i$ have been chosen to be distinct for every $\delem$. Moreover the axioms $\rolR_i \ssb \rolU$ are obviously satisfied by construction. Finally, for all $\delem \in (\atleast{n}\rolU.\conA)^\Inter$ the construction ensures $\delem \in (\exists \rolR_1.\conA \sqcap \ldots \sqcap \exists \rolR_n.\conA)^\Jnter$ witnessed by the respective $\epsilon^\delem_i$. So we have $(\atleast{n}\rolU.\conA)^\Inter \subseteq (\exists\rolR_1.\conA \sqcap \ldots \sqcap \exists \rolR_n.\conA)^\Jnter$. Now, again exploiting the fact that both of those concept expressions occur in negation normalized universal concept axioms $\conD'\in \kb'$ and $\conD''\in \kb''$ that are equal up to the substituted occurrence, we can derive that $\conD'^\Inter \subseteq \conD''^\Jnter$. Then, from $\conD'^\Inter = \Delta^\Inter$ follows $\conD''^\Jnter = \Delta^\Jnter$ making $\conD''$ valid in $\Jnter$. Apart from $\conD'$ (and the newly introduced axioms considered above), all other axioms from $\kb''$ coincide with those from $\kb'$ and hence are satisfied in $\Jnter$, as they do not depend on the $\rolR_i$ whose interpretations are the only ones changed in $\Jnter$ compared to $\Inter$. So we find that $\Jnter$ is a model of $\kb''$.
\qed

\subsection{From \ALCHIqb{} to \ALCIqb{}}

In the presence of restricted role expressions, role subsumption axioms can be easily transformed into TBox axioms, as the subsequent lemma shows. This allows to dispense with role hierarchies in \ALCHIqb{} thereby restricting it to \ALCIqb{}.

\begin{lemma}
For any two restricted role expressions $\rolU$ and $\rolV$, the RBox axiom $\rolU \ssb \rolV$ and the TBox axiom $\forall (U\sqcap\neg V).\bot$ are equivalent.
\end{lemma}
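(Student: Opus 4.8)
The plan is to unfold the model-theoretic semantics of the two axioms and observe that an arbitrary interpretation satisfies one precisely when it satisfies the other; this is a direct computation rather than an argument requiring any model construction.

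First I would check that the TBox axiom $\forall(\rolU\sqcap\neg\rolV).\bot$ is syntactically well-formed, i.e., that $\rolU\sqcap\neg\rolV$ is a restricted role expression (a member of $\lang{T}$), so that it may legitimately occur inside a universal restriction. Since $\rolU$ is restricted we have $\nroleval{\emptyset}{\rolU}$, and by the inductive definition of $\vdash$ for conjunctions this already yields $\nroleval{\emptyset}{\rolU\sqcap\neg\rolV}$; hence $\rolU\sqcap\neg\rolV\in\lang{T}$, and the axiom is a legal \ALCIqb{} (indeed \ALCHIqb{}) axiom.

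Next, fix an arbitrary interpretation $\Inter$. From the semantics in Table~\ref{table:SHIQb} we get $(\rolU\sqcap\neg\rolV)^\Inter = \rolU^\Inter\cap(\neg\rolV)^\Inter = \rolU^\Inter\setminus\rolV^\Inter$, and consequently, using $\bot^\Inter=\emptyset$,
\[(\forall(\rolU\sqcap\neg\rolV).\bot)^\Inter = \{x\in\Delta^\Inter \mid \text{there is no } y\in\Delta^\Inter \text{ with } \tuple{x,y}\in\rolU^\Inter\setminus\rolV^\Inter\}.\]
Therefore $\Inter\models\forall(\rolU\sqcap\neg\rolV).\bot$ holds iff this set equals $\Delta^\Inter$, iff there is no pair $\tuple{x,y}$ at all with $\tuple{x,y}\in\rolU^\Inter\setminus\rolV^\Inter$, iff $\rolU^\Inter\subseteq\rolV^\Inter$, iff $\Inter\models\rolU\ssb\rolV$.

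Since $\Inter$ was arbitrary, the two axioms have exactly the same models, which is the claim. I do not expect any genuine obstacle here — the content of the proof is this one chain of equivalences — and the only point requiring a moment's care is the well-formedness check in the first step (that $\rolU\sqcap\neg\rolV$ is restricted), which is needed only so that the equivalence is stated between two admissible axioms.
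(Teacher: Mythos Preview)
Your proposal is correct and follows essentially the same approach as the paper: both unfold the semantics directly to show that $\Inter\models\rolU\ssb\rolV$ iff $(\rolU\sqcap\neg\rolV)^\Inter=\emptyset$ iff $\Inter\models\forall(\rolU\sqcap\neg\rolV).\bot$. The paper defers the well-formedness check (that $\rolU\sqcap\neg\rolV$ is restricted) to the paragraph immediately following the lemma, whereas you include it up front; otherwise the arguments are the same.
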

\proof
By the semantics' definition, $\rolU \ssb \rolV$ holds in an interpretation $\Inter$ exactly if for every two individuals $\delem,\delem'$ with $\tuple{\delem,\delem'} \in \rolU^\Inter$ it also holds that $\tuple{\delem,\delem'} \in \rolV^\Inter$. This in turn is the case if and only if there are no $\delem,\delem'$ with $\tuple{\delem,\delem'} \in \rolU^\Inter$ but $\tuple{\delem,\delem'} \not\in \rolV^\Inter$ (the latter being expressible as $\tuple{\delem,\delem'} \in (\neg\rolV)^\Inter$). This condition can be formulated as $(\rolU\sqcap\neg \rolV)^\Inter = \emptyset$, which is equivalent to $\forall (\rolU\sqcap\neg \rolV).\bot$.
\qed

Note that $\rolU\sqcap\neg \rolV$ is restricted (hence an admissible role expression) whenever $\rolU$ is -- this can be seen from the fact that $\nroleval{\emptyset}{\rolU}$ implies $\nroleval{\emptyset}{\rolU\sqcap\neg \rolV}$ due to the definition of $\vdash$ and the Boolean role operator $\sqcap$.
 Consequently, for any extended \ALCHIqb{} knowledge base $\kb$, let $\Eh(\kb)$ denote the \ALCIqb{} knowledge base obtained by substituting every RBox axiom $\rolU\ssb \rolV$ by the TBox axiom $\forall (\rolU\sqcap\neg \rolV).\bot$. The above lemma assures equivalence of $\kb$ and $\Eh(\kb)$ (and hence also their equisatisfiability). Obviously, this reduction can be done in linear time.

\subsection{From \ALCIqb{} to \ALCIFb{}}\label{sec:num2}

The elimination of the $\atmost{}$ concept descriptions from an extended \ALCIqb{} knowledge base is more intricate than the previously described transformations. Thus, to simplify our subsequent presentation, we assume that all Boolean role expressions $\rolU$ occurring in concept expressions of the shape $\atmost{n}\rolU.\conC$ are atomic, i.e. $\rolU\in \lang{R}$. This can be easily achieved by introducing a new role name $\rolR_\rolU$ and substituting $\atmost{n}\rolU.\conC$ by $\atmost{n}\rolR_\rolU.\conC$ as well as adding the two TBox axioms
$\forall(\rolU \sqcap \neg \rolR_\rolU).\bot$ and $\forall(\neg \rolU \sqcap \rolR_\rolU).\bot$ (this ensures that the interpretations of $\rolU$ and $\rolR_\rolU$ always coincide).

To further make the presentation more conceivable, we subdivide it into two steps: first we eliminate concept expressions of the shape $\atmost{n}\rolR.\conC$ merely leaving axioms of the form $\atmost{1}\rolR.\top$ (also known as role functionality statements) as the only occurrences of number restrictions, hence obtaining an \ALCIFb{} knowledge base.\footnote{Following the notational convention, we use $\mathcal{F}$ to indicate the modeling feature of role functionality.} Then, in a second step discussed in the next section, we eliminate all occurrences of axioms of the shape $\atmost{1}\rolR.\top$.

Let $\kb$ an \ALCIqb{} knowledge base. We obtain the \ALCIFb{} knowledge base $\Ele(\kb)$ by first flattening $\kb$ and then successively applying the following steps (stopping when no further such occurrence is left):
\begin{iteMize}{$\bullet$}
\item
Choose an occurrence of the shape $\atmost{n}\rolR.\conA$ which is not a functionality axiom $\atmost{1}\rolR.\top$,
\item
substitute this occurrence by $\forall (\rolR \sqcap \neg \rolR_1 \sqcap \ldots \sqcap \neg \rolR_n).\neg \conA$ where $\rolR_1, \ldots, \rolR_n$ are fresh role names,
\item
for every $i\in\{1,\ldots,n\}$, add $\forall \rolR_i.\conA$ as
well as $\atmost{1}\rolR_i.\top$ to the knowledge base.\\
\end{iteMize}

This transformation can clearly be done in polynomial time, again assuming a unary encoding of the number $n$. We now show that this conversion yields an equisatisfiable extended knowledge base. Structurally, the proof is similar to that of Lemma \ref{lemma:removefirstq}.

\begin{lemma}
Given an extended \ALCIqb{} knowledge base $\kb$, the extended \ALCIFb{} knowledge base $\Ele(\kb)$ and $\kb$ are equisatisfiable.
\end{lemma}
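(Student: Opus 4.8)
The plan is to follow the same two-part inductive scheme as in the proof of Lemma~\ref{lemma:removefirstq}, reasoning about a single elimination step that transforms an intermediate knowledge base $\kb'$ into $\kb''$ by replacing one occurrence of $\atmost{n}\rolR.\conA$ (one that is not already of the form $\atmost{1}\rolR.\top$) with $\forall(\rolR \sqcap \neg\rolR_1 \sqcap \ldots \sqcap \neg\rolR_n).\neg\conA$, and by adding, for $1\leq i\leq n$, the axioms $\forall\rolR_i.\conA$ and $\atmost{1}\rolR_i.\top$ with the $\rolR_i$ fresh. Since flattening is carried out first, both the removed and the inserted concept expression sit in a monotone position (outside every negation and quantifier), so in each direction it suffices to compare the extensions of these two subconcepts: the resulting inclusion then lifts to the enclosing axiom, while all other axioms of $\kb'$ and $\kb''$ agree literally.

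For the direction ``every model of $\Ele(\kb)$ is a model of $\kb$'' I would take a model $\Inter$ of $\kb''$ and verify $(\forall(\rolR \sqcap \neg\rolR_1 \sqcap \ldots \sqcap \neg\rolR_n).\neg\conA)^\Inter \subseteq (\atmost{n}\rolR.\conA)^\Inter$. Fix $\delem$ in the left-hand side and suppose it had $n+1$ pairwise distinct $\rolR$-successors $\delem_0,\ldots,\delem_n$, all in $\conA^\Inter$. For each $j$, $\delem_j$ must be an $\rolR_i$-successor of $\delem$ for some $i\in\{1,\ldots,n\}$, since otherwise $\tuple{\delem,\delem_j}\in(\rolR \sqcap \neg\rolR_1 \sqcap \ldots \sqcap \neg\rolR_n)^\Inter$ would force $\delem_j\in(\neg\conA)^\Inter$. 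By pigeonhole, two distinct $j\neq j'$ pick the same index $i$, and then the axiom $\atmost{1}\rolR_i.\top\in\kb''$ forces $\delem_j=\delem_{j'}$, a contradiction. Hence $\delem\in(\atmost{n}\rolR.\conA)^\Inter$. Iterating down the chain of elimination steps reaches $\FLAT(\kb)$, which is equisatisfiable with $\kb$ by Proposition~\ref{prop:flateq}.

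For the converse I would, given a model $\Inter$ of $\kb'$, build $\Jnter$ with the same domain, the same interpretation of all individual and concept names and of all role names except the fresh $\rolR_i$, which are defined as follows: for every $\delem\in(\atmost{n}\rolR.\conA)^\Inter$ the set $W_\delem\defeq\{\delem'\mid\tuple{\delem,\delem'}\in\rolR^\Inter,\ \delem'\in\conA^\Inter\}$ has at most $n$ elements, say $\delem'_1,\ldots,\delem'_m$ with $m\leq n$, and we put $\tuple{\delem,\delem'_j}\in\rolR_j^\Jnter$ for $1\leq j\leq m$ (and no $\rolR_i$-edge leaves any element not in $(\atmost{n}\rolR.\conA)^\Inter$). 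Then each $\rolR_i$ is functional and satisfies $\forall\rolR_i.\conA$, and for $\delem\in(\atmost{n}\rolR.\conA)^\Inter$ and any $\delem'$ with $\tuple{\delem,\delem'}\in(\rolR \sqcap \neg\rolR_1 \sqcap \ldots \sqcap \neg\rolR_n)^\Jnter$, if we had $\delem'\in\conA^\Jnter$ then $\delem'\in W_\delem$, hence $\delem'=\delem'_j$ for some $j$ with $\tuple{\delem,\delem'}\in\rolR_j^\Jnter$, a contradiction; so $\delem\in(\forall(\rolR \sqcap \neg\rolR_1 \sqcap \ldots \sqcap \neg\rolR_n).\neg\conA)^\Jnter$. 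The enclosing axiom of $\kb''$ is then satisfied, the freshly added axioms hold by construction, and every other axiom of $\kb''$ is untouched, so $\Jnter\models\kb''$; iterating and invoking Proposition~\ref{prop:flateq} again yields a model of $\Ele(\kb)$ whenever $\kb$ is satisfiable.

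I expect the only genuinely delicate points to be the pigeonhole step above — where it is the \emph{functionality} of the fresh roles, rather than any mutual disjointness, that does the work — together with some bookkeeping: checking that $\rolR \sqcap \neg\rolR_1 \sqcap \ldots \sqcap \neg\rolR_n$ is a restricted (hence admissible) role expression, which holds because $\nroleval{\emptyset}{\rolR}$ already yields $\nroleval{\emptyset}{\rolR \sqcap \neg\rolR_1 \sqcap \ldots \sqcap \neg\rolR_n}$ by the clauses defining $\vdash$; and using that $\rolR$ is simple, so that redefining the fresh roles in the model-building direction disturbs neither transitivity nor role-hierarchy axioms (which have been eliminated in the earlier reduction steps) and leaves the concept extensions of the un-substituted parts unchanged between $\Inter$ and $\Jnter$. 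None of these should cause real trouble beyond careful case analysis.
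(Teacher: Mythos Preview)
Your proposal is correct and follows essentially the same approach as the paper's proof: both directions proceed by induction over single elimination steps, use the monotone position guaranteed by flattening to lift the inclusion of the substituted subconcepts to the enclosing axiom, argue the first direction via functionality of the fresh $\rolR_i$ (your pigeonhole phrasing is just a contrapositive of the paper's direct count ``at most one $\rolR_i$-successor each, hence at most $n$''), and build the model in the second direction by enumerating the at most $n$ witnesses and assigning them to $\rolR_1,\ldots,\rolR_n$ in order. Your added remarks on restrictedness of $\rolR\sqcap\neg\rolR_1\sqcap\ldots\sqcap\neg\rolR_n$ and on the absence of RBox axioms at this stage are sound bookkeeping that the paper leaves implicit.
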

\proof
$\kb$ and $\FLAT(\kb)$ are equisatisfiable by Proposition \ref{prop:flateq}, so it remains to show equisatisfiability of $\FLAT(\kb)$ and $\Ele(\kb)$.

First, we prove that every model of $\Ele(\kb)$ is a model of $\FLAT(\kb)$. We do so in an inductive way by showing that no additional models can be introduced in any substitution step of the above conversion procedure. Hence, assume $\kb''$ is an intermediate knowledge base with model $\Inter$, and that is obtained from $\kb'$ by eliminating the occurrence of $\atmost{n}\rolR.\conA$ as described above. Now consider an arbitrary $\delem\in(\forall (\rolR \sqcap \neg \rolR_1 \sqcap \ldots \sqcap \neg \rolR_n).\neg \conA)^\Inter$. This ensures that whenever an individual $\delem'\in\Delta^\Inter$ satisfies $\tuple{\delem,\delem'}\in \rolR^\Inter$ and $\delem'\in \conA$, it must additionally satisfy $\tuple{\delem,\delem'}\in \rolR_i^\Inter$ for one $i\in \{1,\ldots,n\}$. However, it follows from the $\kb''$-axioms $\atmost{1}\rolR_i.\top$ that there is at most one such $\delem'$ for each $\rolR_i$. Thus, there can be at most $n$ individuals $\delem'$ with $\tuple{\delem,\delem'}\in \rolR^\Inter$ and $\delem'\in \conA$. This implies $\delem \in (\atmost{n}\rolR.\conA)^\Inter$. So we have  $(\forall (\rolR \sqcap \neg \rolR_1 \sqcap \ldots \sqcap \neg \rolR_n).\neg \conA)^\Inter \subseteq (\atmost{n}\rolR.\conA)^\Inter$. Due to the flattened knowledge base structure, both of those concept expressions occur outside the scope of any negation or quantifier within axioms $\conD''\in \kb''$ and $\conD'\in \kb'$ that are equal up to the substituted occurrence. Hence, we can derive that $\conD''^\Inter \subseteq \conD'^\Inter$. Then, from $\conD''^\Inter = \Delta^\Inter$ follows $\conD'^\Inter = \Delta^\Inter$ making $\conD'$ valid in $\Inter$. Apart from $\conD'$, all other axioms from $\kb'$ are contained in $\kb''$ and hence are naturally satisfied in $\Inter$. So we find that $\Inter$ is a model of $\kb'$ as well.

Second, we show that every model of $\FLAT(\kb)$ can be transformed to a model of $\Ele(\kb)$. We use the same induction strategy as above by doing iterated model transformations following the syntactic knowledge base conversions. Again, assume $\kb''$ is an intermediate knowledge base obtained from $\kb'$ by eliminating the occurrence of a $\atmost{n}\rolR.\conC$ as described above, and suppose $\Inter$ is a model of $\kb'$. Based on $\Inter$, we now (nondeterministically) construct an interpretation $\Jnter$ as follows:
\begin{iteMize}{$\bullet$}
\item
$\Delta^\Jnter \defeq \Delta^\Inter$,
\item
for all $\conC\in\connames$, let $\conC^\Jnter \defeq \conC^\Inter$,
\item
for all $\rolS\in\rolnames\setminus\{\rolR_i\mid 1\leq i \leq n\}$,
 let $\rolS^\Jnter \defeq \rolS^\Inter$,
\item
 for every $\delem \in (\atmost{n}\rolR.\conA)^\Inter$, let
 $\epsilon^\delem_1, \ldots, \epsilon^\delem_k$ be an exhaustive
 enumeration (with arbitrary but fixed order)
 of all those $\epsilon\in\Delta^\Inter$
 with $\tuple{\delem,\epsilon}\in \rolR^\Inter$ and
 $\epsilon\in\conA^\Inter$. Thereby $\delem$'s aforementioned
 concept membership ensures $k\leq n$. Now, let
$\rolR_i^\Jnter \defeq \{\tuple{\delem,\epsilon^\delem_i}\mid\delem \in (\atmost{n}\rolR.\conA)^\Inter \}$.
\end{iteMize}
Now, it is easy to see that $\Jnter$ satisfies all newly introduced axioms of the shape $\atmost{1}\rolR_i.\top$ as every $\delem$ has at most one $\rolR_i$-successor (namely $\epsilon^\delem_i$, if $\delem\in (\atmost{n}\rolR.\conA)^\Inter$, and none otherwise). Moreover, the axioms $\forall \rolR_i.\conA$ are satisfied, as the $\epsilon^\delem_i$ have been chosen accordingly.

Finally for all $\delem \in (\atmost{n}\rolR.\conA)^\Inter$ the construction ensures $\delem \in (\forall (\rolR \sqcap \neg\rolR_1 \sqcap \ldots \sqcap \neg \rolR_n).\neg \conA)^\Jnter$ as by construction, each $\rolR$-successor of $\delem$ that lies within the extension of $\conA$ is contained in $\epsilon^\delem_1, \ldots, \epsilon^\delem_k$ and therefore also $\rolR_i$-successor of $\delem$ for some $i$. Now, again exploiting the fact that both of those concept expressions occur in negation normalized universal concept axioms $\conD'\in \kb'$ and $\conD''\in \kb''$ that are equal up to the substituted occurrence, we can derive that $\conD'^\Inter \subseteq \conD''^\Jnter$. Then, from $\conD'^\Inter = \Delta^\Inter$ follows $\conD''^\Jnter = \Delta^\Jnter$ making $\conD''$ valid in $\Jnter$. Apart from $\conD''$ (and the newly introduced axioms considered above), all other axioms from $\kb''$ coincide with those from $\kb'$ and hence are satisfied in $\Jnter$, as they do not depend on the $\rolR_i$ whose interpretations are the only ones changed in $\Jnter$ compared to $\Inter$. So we find that $\Jnter$ is a model of $\kb''$.
\qed

\subsection{From \ALCIFb{} to \ALCIb{}}

In the sequel, we show how the role functionality axioms of the shape $\atmost{1}\rolR.\top$ can be eliminated from an \ALCIFb{} knowledge base while still preserving equisatisfiability. Partially, the employed rewriting is the same as the one proposed for $\mathcal{ALCIF}$ TBoxes by \cite{DL-98-alc}, however, in the presence of ABoxes more needs to be done.

Essentially, the idea is to add axioms that enforce that for every functional role $\rolR$, any two $\rolR$-successors coincide with respect to their properties expressible in ``relevant'' DL role and concept expressions. To this end, we consider the parts of a knowledge base as defined in Section~\ref{sec:prelims} on page \pageref{page:parts}. While it is not hard to see that the introduced axioms follow from $\rolR$'s functionality, the other direction (a Leibniz-style ``identitas indiscernibilium'' argument) needs a closer look.

Taking an extended \ALCIFb{} knowledge base $\kb$, let $\Ef(\kb)$ denote the extended \ALCIb{} knowledge base obtained from $\kb$ by removing every role functionality axiom $\atmost{1}\rolR.\top$ and instead adding

\begin{iteMize}{$\bullet$}
\item
$\forall \rolR.\neg\conD \sqcup \forall \rolR.\conD$ for every $\conD\in
\parts(\kb\setminus\{\alpha \in \kb\mid \alpha=\atmost{1}\rolR.\top\text{ for some }\rolR\in\lang{R}\})$,
\item
 $\forall (\rolR \sqcap \rolS).\bot \sqcup \forall (\rolR \sqcap \neg
 \rolS).\bot$
for every atomic role $\rolS$ from $\kb$, as well as
\item
the DL-safe rule $\rolR(x,y),\rolR(x,z) \to y\approx z$.
\end{iteMize}

Clearly, this transformation can also be done in polynomial time and space w.r.t.~the size of $\kb$.

Our goal is now to prove equisatisfiability of $\kb$ and $\Ef(\kb)$. The following lemma establishes the easier direction of this correspondence.

\begin{lemma}\label{lemma:kb4kb5}
Any \ALCIFb{} knowledge base $\kb$ entails all axioms of the \ALCIb{} knowledge base $\Ef(\kb)$, i.e. $\kb \models \Ef(\kb)$.
\end{lemma}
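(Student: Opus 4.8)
The plan is to fix an arbitrary model $\Inter$ of the \ALCIFb{} knowledge base $\kb$ and to show that $\Inter$ satisfies each of the three kinds of axioms added in the construction of $\Ef(\kb)$, while noting that all axioms of $\kb$ other than the functionality statements $\atmost{1}\rolR.\top$ are kept verbatim and hence are trivially satisfied. So throughout, assume $\atmost{1}\rolR.\top$ is an axiom of $\kb$; since $\Inter\models\kb$, every element $\delem\in\Delta^\Inter$ has at most one $\rolR$-successor.

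First I would handle the DL-safe rule $\rolR(x,y),\rolR(x,z)\to y\approx z$. Given a variable assignment $Z$ mapping the rule variables to named individuals, if both body atoms are satisfied then $\tuple{x^{\Inter,Z},y^{\Inter,Z}}\in\rolR^\Inter$ and $\tuple{x^{\Inter,Z},z^{\Inter,Z}}\in\rolR^\Inter$; functionality of $\rolR$ forces $y^{\Inter,Z}=z^{\Inter,Z}$, so the head $y\approx z$ holds. Hence $\Inter$ is a model of the rule. Second, for the axiom $\forall(\rolR\sqcap\rolS).\bot\sqcup\forall(\rolR\sqcap\neg\rolS).\bot$ with $\rolS$ atomic: take any $\delem$ and suppose $\delem\notin(\forall(\rolR\sqcap\rolS).\bot)^\Inter$, i.e.\ there is $\delem'$ with $\tuple{\delem,\delem'}\in(\rolR\sqcap\rolS)^\Inter$. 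This $\delem'$ is the unique $\rolR$-successor of $\delem$. Then for every $\delem''$ with $\tuple{\delem,\delem''}\in\rolR^\Inter$ we have $\delem''=\delem'$, so $\tuple{\delem,\delem''}\in\rolS^\Inter$, meaning $\tuple{\delem,\delem''}\notin(\neg\rolS)^\Inter$; hence $\delem$ has no $(\rolR\sqcap\neg\rolS)$-successor and $\delem\in(\forall(\rolR\sqcap\neg\rolS).\bot)^\Inter$. Thus every $\delem$ lies in one of the two disjuncts, so the axiom holds.

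Third, for $\forall\rolR.\neg\conD\sqcup\forall\rolR.\conD$ with $\conD$ a part of the functionality-free fragment of $\kb$: take any $\delem$. If $\delem$ has no $\rolR$-successor, both disjuncts hold vacuously. If $\delem$ has an $\rolR$-successor, it is unique, say $\delem'$, and by functionality $\delem'$ is the \emph{only} $\rolR$-successor. If $\delem'\in\conD^\Inter$ then $\delem\in(\forall\rolR.\conD)^\Inter$; if $\delem'\notin\conD^\Inter$ then $\delem\in(\forall\rolR.\neg\conD)^\Inter$. Either way the disjunction is satisfied at $\delem$, so the axiom holds in $\Inter$. Since these three families exhaust the axioms of $\Ef(\kb)$ not already present in $\kb$, and all remaining axioms of $\Ef(\kb)$ are literally axioms of $\kb$ and hence satisfied by $\Inter$, we conclude $\Inter\models\Ef(\kb)$. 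As $\Inter$ was an arbitrary model of $\kb$, this gives $\kb\models\Ef(\kb)$.

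This direction is genuinely the easy one: the only mild subtlety is making sure that in the $\forall\rolR.\neg\conD\sqcup\forall\rolR.\conD$ case one really does use the \emph{uniqueness} of the $\rolR$-successor (a single successor cannot be both in and out of $\conD$), and in the $\forall(\rolR\sqcap\rolS).\bot\sqcup\forall(\rolR\sqcap\neg\rolS).\bot$ case that every $\rolR$-successor collapses to the same witness $\delem'$. There is no obstacle worth flagging beyond writing these case distinctions cleanly; the hard direction ($\Ef(\kb)$-satisfiability implying $\kb$-satisfiability, via the Leibniz-style indiscernibility argument) is deferred to a subsequent lemma and is not needed here.
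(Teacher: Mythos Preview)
Your proof is correct and follows essentially the same approach as the paper: fix an arbitrary model of $\kb$ and verify that each of the three families of newly introduced axioms holds there, using only that the relevant roles have at most one successor. The paper phrases the two concept-level axioms via the equivalent GCIs $\exists\rolR.\conD\ssb\forall\rolR.\conD$ and $\exists(\rolR\sqcap\rolS).\top\ssb\forall(\rolR\sqcap\neg\rolS).\bot$ before appealing to functionality, while you do the case split directly, but the content is the same.
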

\proof
Let $\Jnter$ be a model of $\kb$. We need to show that $\Jnter$ also satisfies the additional rules and axioms introduced in $\Ef(\kb)$.

First let $\conD$ be an arbitrary concept. Note that $\forall \rolR.\neg\conD \sqcup \forall \rolR.\conD$ is equivalent to the GCI $\exists\rolR.\conD \ssb \forall\rolR.\conD$. This is satisfied if, for any $\delta\in \Delta^\Jnter$, if $\delta$ has an $\rolR$-successor in $\conD^\Jnter$, then all $\rolR$-successors of $\delta$ are in $\conD^\Jnter$.
This is trivially satisfied if $\delta$ has at most one $\rolR$-successor, which holds since $\Jnter$ satisfies the functionality axiom $\atmost{1}\rolR.\top\in\kb$. Since we have shown the satisfaction for arbitrary concepts $\conD$, this holds in particular for those from $\parts(\kb\setminus\{\alpha \in \kb\mid \alpha=\atmost{1}\rolR.\top\text{ for some }\rolR\in\lang{R}\})$.

Second, let $\rolS$ be an atomic role. Mark that $\forall (\rolR \sqcap \rolS).\bot \sqcup \forall (\rolR \sqcap \neg \rolS).\bot$ is equivalent to the GCI $\exists (\rolR \sqcap \rolS).\top \ssb \forall (\rolR \sqcap \neg \rolS).\bot$. This means that for any $\delta\in \Delta^\Jnter$, all $\rolR$-successors are also $\rolS$-successors of it, whenever one of them is. Again, this is trivially satisfied as $\delta$ has at most one $\rolR$-successor.

Finally all newly introduced rules of the form $R(x,y),R(x,z) \to y\approx z$ are satisfied in $\Jnter$ as a consequence of the functionality statements in \kb.
\qed

The other direction for showing equisatisfiability, which amounts to finding a model of $\kb$ given one for $\Ef(\kb)$, is somewhat more intricate and requires some intermediate considerations.

\begin{lemma}\label{lemma:identitas}
If $\kb$ is an \ALCIFb{} knowledge base with $\atmost{1}\rolR.\top \in \kb$ then in every model $\Jnter$ of $\Ef(\kb)$ we find that $\tuple{\delta,\delta_1} \in \rolR^\Jnter$ and $\tuple{\delta,\delta_2} \in \rolR^\Jnter$ imply
\begin{iteMize}{$\bullet$}
\item
for all $\conC \in \parts(\kb\setminus\{\alpha \in \kb\mid \alpha=\atmost{1}\rolR.\top\text{ for some }\rolR\in\lang{R}\})$, we have $\delta_1\in \conC^\Jnter$ iff $\delta_2\in \conC^\Jnter$,
\item
for all $\rolS\in\rolnames$, we have $\tuple{\delta,\delta_1}\in \rolS^\Jnter$ iff $\tuple{\delta,\delta_2}\in \rolS^\Jnter$.
\end{iteMize}
\end{lemma}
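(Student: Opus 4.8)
The plan is to derive the statement directly from the two families of concept inclusions that $\Ef$ introduces for the functional role $\rolR$, using the syntactic reformulations already recorded inside the proof of Lemma~\ref{lemma:kb4kb5}: the axiom $\forall\rolR.\neg\conD\sqcup\forall\rolR.\conD$ is equivalent to the GCI $\exists\rolR.\conD\ssb\forall\rolR.\conD$, and the axiom $\forall(\rolR\sqcap\rolS).\bot\sqcup\forall(\rolR\sqcap\neg\rolS).\bot$ is equivalent to $\exists(\rolR\sqcap\rolS).\top\ssb\forall(\rolR\sqcap\neg\rolS).\bot$. Since $\atmost{1}\rolR.\top\in\kb$, both kinds of axioms occur in $\Ef(\kb)$ — the first for every part $\conD$ of $\kb$ with all functionality axioms removed, the second for every atomic role $\rolS$ occurring in $\kb$ — so they may be used freely in any model $\Jnter\models\Ef(\kb)$. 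I fix $\delta,\delta_1,\delta_2$ with $\tuple{\delta,\delta_1}\in\rolR^\Jnter$ and $\tuple{\delta,\delta_2}\in\rolR^\Jnter$; by the obvious symmetry between $\delta_1$ and $\delta_2$ it is enough, in each of the two bullets, to establish one of the two directions.

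For the concept bullet, let $\conC$ be as in the statement and suppose $\delta_1\in\conC^\Jnter$. Since $\tuple{\delta,\delta_1}\in\rolR^\Jnter$, this witnesses $\delta\in(\exists\rolR.\conC)^\Jnter$; applying the corresponding axiom of $\Ef(\kb)$ in the form $\exists\rolR.\conC\ssb\forall\rolR.\conC$ gives $\delta\in(\forall\rolR.\conC)^\Jnter$, and then $\tuple{\delta,\delta_2}\in\rolR^\Jnter$ yields $\delta_2\in\conC^\Jnter$. For the role bullet, let $\rolS\in\rolnames$ and suppose $\tuple{\delta,\delta_1}\in\rolS^\Jnter$. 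Then $\tuple{\delta,\delta_1}\in(\rolR\sqcap\rolS)^\Jnter$, so $\delta\in(\exists(\rolR\sqcap\rolS).\top)^\Jnter$; the corresponding axiom in the form $\exists(\rolR\sqcap\rolS).\top\ssb\forall(\rolR\sqcap\neg\rolS).\bot$ gives $\delta\in(\forall(\rolR\sqcap\neg\rolS).\bot)^\Jnter$. If we had $\tuple{\delta,\delta_2}\notin\rolS^\Jnter$, then $\tuple{\delta,\delta_2}\in(\rolR\sqcap\neg\rolS)^\Jnter$ would force $\delta_2\in\bot^\Jnter$, which is impossible; hence $\tuple{\delta,\delta_2}\in\rolS^\Jnter$.

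I do not expect a genuine obstacle here: the statement is precisely the effect for which the axioms of $\Ef$ were designed, and — in contrast to the structural inductions appearing elsewhere in Section~\ref{sec:reduction} — no induction on the shape of $\conC$ is required, exactly because $\Ef$ supplies the axiom $\forall\rolR.\neg\conC\sqcup\forall\rolR.\conC$ for \emph{every} relevant part $\conC$, not merely for outermost occurrences. The only point worth a remark is that the axiom $\forall(\rolR\sqcap\rolS).\bot\sqcup\forall(\rolR\sqcap\neg\rolS).\bot$ is added only for atomic roles $\rolS$ that actually occur in $\kb$, so in the role bullet one may harmlessly restrict $\rolS$ to those roles, the remaining role names being irrelevant to both $\kb$ and $\Ef(\kb)$. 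Note also that the DL-safe rule $\rolR(x,y),\rolR(x,z)\to y\approx z$ added by $\Ef$ is not used in this lemma; it will only matter later, when the argument is lifted to named individuals and the ABox.
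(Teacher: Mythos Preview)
Your proof is correct and follows essentially the same route as the paper's own proof: both arguments use the reformulation $\exists\rolR.\conC\ssb\forall\rolR.\conC$ for the concept part and the reformulation of $\forall(\rolR\sqcap\rolS).\bot\sqcup\forall(\rolR\sqcap\neg\rolS).\bot$ for the role part, invoking symmetry for the converse directions. Your additional remarks about the restriction to roles occurring in $\kb$ and the non-use of the DL-safe rule are accurate and go slightly beyond what the paper spells out.
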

\proof
For the first proposition, assume $\delta_1\in \conC^\Jnter$. From $\tuple{\delta,\delta_1} \in \rolR^\Jnter$ follows $\delta \in (\exists\rolR.\conC)^\Jnter$. Due to the $\Ef(\kbrb)$ axiom $\forall \rolR.\neg\conC \sqcup \forall \rolR.\conC$ (being equivalent to the GCI $\exists\rolR.\conC \ssb \forall\rolR.\conC$) follows $\delta \in (\forall\rolR.\conC)^\Jnter$. Since $\tuple{\delta,\delta_2} \in \rolR^\Jnter$, this implies $\delta_2\in \conC^\Jnter$. The other direction follows by symmetry.

To show the second proposition, assume $\tuple{\delta,\delta_1}\in \rolS^\Jnter$. Since also $\tuple{\delta,\delta_1} \in \rolR^\Jnter$, we have $\tuple{\delta,\delta_1} \in \rolR\sqcap\rolS^\Jnter$ and hence $\delta \in (\exists (\rolR \sqcap \rolS).\top)^\Jnter$ . From the $\Ef(\kbrb)$ axiom $\forall (\rolR \sqcap \rolS).\bot \sqcup \forall (\rolR \sqcap \neg
 \rolS).\bot$ (which is equivalent to the GCI $\exists (\rolR
\sqcap \rolS).\top \ssb \neg \exists (\rolR \sqcap \neg
 \rolS).\top$) we conclude $\delta \in (\neg\exists (\rolR \sqcap \neg
 \rolS).\top)^\Jnter$, in words: $\delta$ has no $\rolR$-successor that
 is not its $\rolS$-successor. Thus, as $\tuple{\delta,\delta_2} \in
 \rolR^\Jnter$, it must also hold that $\tuple{\delta,\delta_2}\in
 \rolS^\Jnter$. Again, the other direction follows by symmetry.
\qed

In order to convert a model of $\Ef(\kbrb)$ into one of $\kbrb$, we will have to enforce role functionality where needed by cautiously deleting individuals from the original model. 
Definition~\ref{def:pruning} will provide a method for this. To this end, some auxiliary notions 
defined beforehand will come in handy.

\begin{definition} Let $\Jnter$ be an interpretation, and let $\Inter$ be the unraveling of $\Jnter$.\footnote{Remember that by construction, the individuals of $\Inter$ are sequences of individuals of $\Jnter$. For better readability, we will strictly use $\selem$ -- with possible subscripts -- for $\Inter$-individuals and $\delem$ for $\Jnter$-individuals.}
For a domain element $\selem \in \Delta^\Inter$ and an $\rolR \in \lang{R}$, we define the set of \define{$\rolR$-neighbors} of $\selem$ in $\Inter$ by $\mbox{\small\textsf{nb}}_\Inter^\rolR(\selem)\defeq \{\selem' \mid \tuple{\selem,\selem'}\in \rolR^\Inter\}$. Among the \define{$\rolR$-neighbors}, we distinguish between  \define{subordinate $\rolR$-neighbors} $\mbox{\small\textsf{sub}}_\Inter^\rolR(\selem)\defeq \{\selem\delem \mid \tuple{\selem,\selem\delem}\in \rolR^\Inter\}$ and the \define{non-subordinate $\rolR$-neighbors} $\mbox{\small\textsf{nonsub}}_\Inter^\rolR(\selem)\defeq \mbox{\small\textsf{nb}}_\Inter^\rolR(\selem)\setminus \mbox{\small\textsf{sub}}_\Inter^\rolR(\selem)$.
\end{definition}


%

\begin{definition}\label{def:pruning}
Let $\Jnter$ be an interpretation, and let $\Inter$ be the unraveling of $\Jnter$.
 Given an extended \ALCIFb{} knowledge base \kbrb, let $\kb^* \defeq \kb\setminus\{\alpha \in \kb\mid \alpha=\atmost{1}\rolR.\top\text{ for some }\rolR\in\lang{R}\}$, let $\mathscr{D} \defeq \parts(\kbrb)$ and let $\mathscr{S}\defeq\{\rolR \mid \atmost{1}\rolR.\top \in \kb\}$.

Then, an interpretation $\Knter$ will be called \define{\kb-pruning} of $\Inter$, if $\Knter$ can be constructed from $\Inter$ in the following way: Let first $\Delta_0=\Delta^\Inter$.
%
%
%
Next, iteratively determine $\Delta_{i+1}$ from $\Delta_i$ as follows:
\begin{iteMize}{$\bullet$}
\item
Select a word-length minimal $\sigma$ from $\Delta_i$ where there is an $\rolS\in\mathscr{S}$ for which  
$\mbox{\small\textsf{nb}}_\Inter^\rolS(\selem)>1$ and $\mbox{\small\textsf{sub}}_\Inter^\rolS(\selem)>0$.

\item

If $\mbox{\small\textsf{nonsub}}_\Inter^\rolS(\selem)>0$, let $\Delta'=\mbox{\small\textsf{sub}}_\Inter^\rolS(\selem)$, otherwise let $\Delta'=\mbox{\small\textsf{sub}}_\Inter^\rolS(\selem)\setminus\{\sigma'\}$ for an arbitrarily chosen $\sigma'\in\mbox{\small\textsf{sub}}_\Inter^\rolS(\selem)$.\\
Delete $\Delta'$ from $\Delta_i$ as well as all $\sigma^{**}$ having some $\sigma^*\in\Delta'$ as prefix.
\end{iteMize}
Finally, let $\Knter$ be the limit of this process: $\Delta^\Knter \defeq \bigcap_{i\in\mathbb{N}}\Delta_i$ and $\cdot^\Knter$ is the function $\cdot^\Inter$ restricted to $\Delta^\Knter$.
\end{definition}

Roughly speaking, any \define{$\kb$-pruning} of $\Inter$ is (nondeterministically) constructed by deleting surplus functional-role-successors. Mark that the tree-like structure of non-named individuals of the unraveling is crucial in order to make the process well-defined.

\begin{lemma}\label{lemma:modelforf}
Let $\kbrb$ be an extended \ALCIFb{} knowledge base, let $\Jnter$ be a model of $\Ef(\kbrb)$ and let $\Inter$ be an unraveling of $\Jnter$. Then, any $\kbrb$-pruning $\Knter$ of $\Inter$ is a model of $\kbrb$.
\end{lemma}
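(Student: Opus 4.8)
The plan is as follows. Since $\Ef(\kbrb)$ is an extended \ALCIb{} knowledge base it contains no transitivity axioms, so by Lemma~\ref{lemma:unrav}(1) the unraveling $\Inter$ of $\Jnter$ is itself a model of $\Ef(\kbrb)$; in particular Lemma~\ref{lemma:identitas} is applicable with $\Inter$ in place of ``$\Jnter$''. I record three elementary facts about an arbitrary $\kbrb$-pruning $\Knter$: (i) every deleted sequence lies in, or extends, some set $\mbox{\small\textsf{sub}}_\Inter^\rolS(\selem)$, all of whose members have length at least $2$, so every named individual survives, i.e.\ $\{a^\Inter\mid a\in\indnames\}\subseteq\Delta^\Knter$; (ii) each $\Delta_i$ arises from $\Delta^\Inter$ by deleting entire subtrees, so $\Delta^\Knter$ is prefix-closed; (iii) $\cdot^\Knter$ is $\cdot^\Inter$ restricted to $\Delta^\Knter$, whence for $\selem,\selem'\in\Delta^\Knter$ and any restricted role expression $\rolU$ we have $\tuple{\selem,\selem'}\in\rolU^\Knter$ iff $\tuple{\selem,\selem'}\in\rolU^\Inter$, and $\selem\in\conA^\Knter$ iff $\selem\in\conA^\Inter$ for atomic $\conA$. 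Now $\kbrb$ consists of the TBox of $\kb^*$ (with $\kb^*$ as in Definition~\ref{def:pruning}), the functionality axioms $\atmost{1}\rolR.\top$, and the DL-safe rules $\mathscr{P}$. The rules are handled at once: by (i) and (iii), $\Inter$ and $\Knter$ satisfy exactly the same ground atoms over named individuals, and $\Inter\models\mathscr{P}$, so $\Knter\models\mathscr{P}$. For $\atmost{1}\rolS.\top$ I claim every $\selem\in\Delta^\Knter$ has at most one $\rolS$-neighbor in $\Knter$: $\mbox{\small\textsf{nonsub}}_\Inter^\rolS(\selem)$ has at most one element -- for non-named $\selem$ its sole non-subordinate neighbor is its predecessor, and for $\selem=a^\Inter$ validity of the rule $\rolS(x,y),\rolS(x,z)\to y\approx z$ of $\Ef(\kbrb)$ in $\Inter$ leaves at most one named $\rolS$-successor -- so either $\mbox{\small\textsf{nb}}_\Inter^\rolS(\selem)\le 1$ and there is nothing to do, or $\mbox{\small\textsf{sub}}_\Inter^\rolS(\selem)\neq\emptyset$, the pruning processes $\selem$ for $\rolS$, and afterwards at most one $\rolS$-neighbor of $\selem$ remains.

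The main remaining task is that $\Knter$ satisfies the TBox of $\kb^*$. I would prove, by induction on the structure of $\conC$: \emph{for every concept $\conC$ occurring as a subexpression of a TBox axiom of $\kb^*$ and every $\selem\in\Delta^\Knter$, $\selem\in\conC^\Inter$ implies $\selem\in\conC^\Knter$}. This single implication suffices, since a GCI of $\kb^*$, written $\top\ssb C$ with $C$ in negation normal form, has $C^\Inter=\Delta^\Inter$ (as $\Inter\models\Ef(\kbrb)\supseteq\kb^*$), and the claim then forces $C^\Knter=\Delta^\Knter$. The base case ($\conC$ a concept name, $\top$, $\bot$, or a negated concept name) and the cases $\conC=\conD_1\sqcap\conD_2$, $\conC=\conD_1\sqcup\conD_2$ are immediate from (iii) and the induction hypothesis. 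If $\conC=\forall\rolU.\conD$ and $\selem\in(\forall\rolU.\conD)^\Inter$: any $\rolU$-neighbor of $\selem$ in $\Knter$ is by (iii) also one in $\Inter$, hence lies in $\conD^\Inter$, hence by induction in $\conD^\Knter$; thus $\selem\in(\forall\rolU.\conD)^\Knter$.

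The interesting case is $\conC=\exists\rolU.\conD$. Assume $\selem\in(\exists\rolU.\conD)^\Inter$ with a witness $\selem'$, $\tuple{\selem,\selem'}\in\rolU^\Inter$, $\selem'\in\conD^\Inter$. If $\selem'\in\Delta^\Knter$ we are done by (iii) and the induction hypothesis. Otherwise $\selem'$ was deleted; by (i), (ii) and the tree shape of the unraveling, a predecessor of $\selem$ or a named neighbor of $\selem$ could not have been deleted, so $\selem'$ must be an immediate successor $\selem\delem$ of $\selem$, removed while processing $\selem$ for some functional role $\rolS$ with $\tuple{\selem,\selem'}\in\rolS^\Inter$. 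By Lemma~\ref{lemma:identitas} applied to $\Inter$ and $\rolS$, any $\rolS$-neighbor $\selem''$ of $\selem$ agrees with $\selem'$ on all concepts in $\parts(\kb^*)$ -- hence, since $\parts(\conD)\subseteq\parts(\kb^*)$ and $\conD$ is a Boolean combination of members of $\parts(\conD)$, also $\selem''\in\conD^\Inter$ -- and on the set of atomic roles linking it to $\selem$, hence $\tuple{\selem,\selem''}\in\rolU^\Inter$. So it suffices to produce one surviving $\rolS$-neighbor $\selem''$ of $\selem$: then $\tuple{\selem,\selem''}\in\rolU^\Knter$, and by the induction hypothesis applied to $\selem''\in\conD^\Inter$, $\selem''\in\conD^\Knter$, giving $\selem\in(\exists\rolU.\conD)^\Knter$. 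If $\mbox{\small\textsf{nonsub}}_\Inter^\rolS(\selem)\neq\emptyset$, take for $\selem''$ the unique non-subordinate $\rolS$-neighbor of $\selem$ -- the predecessor of $\selem$ (surviving by prefix-closedness) or a named individual (always surviving). If $\mbox{\small\textsf{nonsub}}_\Inter^\rolS(\selem)=\emptyset$, take for $\selem''$ the element of $\mbox{\small\textsf{sub}}_\Inter^\rolS(\selem)$ retained by the pruning.

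I expect the crux to be exactly this last point: that when $\mbox{\small\textsf{nonsub}}_\Inter^\rolS(\selem)=\emptyset$ the pruning really does leave some element of $\mbox{\small\textsf{sub}}_\Inter^\rolS(\selem)$ in $\Delta^\Knter$. The key observation is a ``block'' structure of the subordinate functional successors of a fixed $\selem$: if two functional roles in $\mathscr{S}$ share a subordinate neighbor of $\selem$, then by Lemma~\ref{lemma:identitas} every neighbor of $\selem$ along either of them agrees with it on all atomic roles from $\selem$, so the two roles have the same set of subordinate $\selem$-neighbors; hence the sets $\mbox{\small\textsf{sub}}_\Inter^\rolR(\selem)$, $\rolR\in\mathscr{S}$, are pairwise equal or disjoint. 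Since every deletion affecting the block that contains $\selem\delem$ keeps a representative of that block, one representative survives into the limit $\Knter$, and by the interchangeability above it is the required witness $\selem''$ (this is the single place where some care about the order and the arbitrary choices in the pruning steps is called for). Assembling the pieces, $\Knter$ satisfies $\mathscr{P}$, all functionality axioms, and -- via the induction, using $\Inter\models\kb^*$ -- all TBox axioms of $\kb^*$; since these are precisely the axioms of $\kbrb$, $\Knter$ is a model of $\kbrb$.
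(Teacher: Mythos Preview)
Your proof is correct and follows essentially the same route as the paper's: show that $\Inter$ is already a model of $\Ef(\kbrb)$, transfer concept membership from $\Inter$ to $\Knter$ by induction over the (NNF) concept structure using Lemma~\ref{lemma:identitas} for the $\exists$-case, and then verify functionality via the added DL-safe rule together with the pruning. The only notable differences are cosmetic: you prove just the needed implication $\conC^\Inter\to\conC^\Knter$ (the paper proves the full equivalence), and you spell out the ``block'' structure of subordinate functional successors where the paper simply writes ``w.l.o.g.\ $\sigma''$ does not get deleted''.
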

\proof
By construction, we know that $\Inter$ is a model of $\Ef(\kb)$. Now, let $\Knter$ be a $\kb$-pruning of $\Inter$. For showing $\Knter \models \kb$, we divide $\kb$ into two sets, namely the set of role functionality axioms $\{\alpha \in \kb\mid \alpha=\atmost{1}\rolR.\top\text{ for some }\rolR\in\lang{R}\}$ and all the remaining axioms, denoted by $\kb^*$, and show $\Knter \models \kb^*$ and $\Knter \models \{\alpha \in \kb\mid \alpha=\atmost{1}\rolR.\top\text{ for some }\rolR\in\lang{R}\}$ separately.

\medskip

We start by showing $\Knter \models \kb^*$.
To this end, we prove that, for each $\conC\in
\parts(\kb^*)$ and for every individual $\sigma$ from $\Knter$, we
have $\sigma \in \conC^\Knter$ exactly if $\sigma \in \conC^\Inter$. Clearly, this statement extends to concepts that are Boolean combinations of elements from $\parts(\kb^*)$, i.e., to all axioms in $\kb^*$. We omit this easy structural induction.

The claim for $\conC\in
\parts(\kb^*)$ is shown by induction over the depth of role restrictions in $\conC$, and we assume that is has already been shown for concepts of smaller role depth. We consider three cases:
\begin{iteMize}{$\bullet$}
\item
$\conC \in \connames\cup\{\top,\bot\}$\\ Then the coincidence follows directly from the construction of $\Knter$.

\item
$\conC = \exists \rolU. \conD$\\
``$\Rightarrow$''\mbox{}\hspace{2mm}\mbox{}%
$\sigma \in (\exists \rolU. \conD)^\Knter$ means that there is a $\Knter$-individual $\sigma'$ with $\tuple{\sigma,\sigma'}\in\rolU^{\Knter}$ and $\sigma'\in \conD^\Knter$. Because of the construction of $\Knter$ by pruning $\Inter$, this means also $\tuple{\sigma,\sigma'}\in\rolU^{\Inter}$ and by induction hypothesis, we have $\sigma'\in \conD^\Inter$, ergo $\sigma \in (\exists \rolU. \conD)^\Inter$.

\noindent
``$\Leftarrow$''\mbox{}\hspace{2mm}\mbox{}%
If $\sigma \in (\exists \rolU. \conD)^\Inter$, there is an $\Inter$-individual $\sigma'$ with $\tuple{\sigma,\sigma'}\in\rolU^{\Inter}$ and $\sigma'\in\conD^\Inter$. In case $\sigma'$ is not deleted during the construction of $\Knter$, it proves (by using the induction hypothesis on $\conD$) that $\sigma \in (\exists \rolU. \conD)^\Knter$. Otherwise, it must have been deleted due to the existence of another $\Inter$-individual $\sigma''$ for with Lemma~\ref{lemma:identitas} ensures $\{\rolR\in \lang{R}\mid \tuple{\sigma,\sigma''} \in \rolR^\Inter\}=\{\rolR\in \lang{R}\mid \tuple{\sigma,\sigma'} \in \rolR^\Inter\}$ and $\{\conE\in \parts(\kb^*)\mid \sigma''\in \conE^\Inter\}=\{\conE\in
\parts(\kb^*)\mid \sigma'\in \conE^\Inter\}$. W.l.o.g., $\sigma''$
does not get deleted in the whole construction procedure. Yet, then the $\Knter$-individual $\sigma''$ obviously proves $\sigma \in (\exists \rolU. \conD)^\Knter$.

\item
$\conC = \forall \rolR. \conD$\\
``$\Rightarrow$''\mbox{}\hspace{2mm}\mbox{}%
Assume the contrary, i.e., $\sigma \in (\forall \rolU. \conD)^\Knter$ but $\sigma \not\in (\forall \rolU. \conD)^\Inter$ which means that there is an $\Inter$-individual $\sigma'$ with $\tuple{\sigma,\sigma'}\in\rolU^{\Inter}$ but $\sigma' \not\in\conD^\Inter$. In case $\sigma'$ has not been deleted during the construction of $\Knter$, it disproves $\sigma \in (\forall \rolU. \conD)^\Knter$ (by invoking the induction hypothesis on $\conD$) leading to a contradiction. Otherwise, $\sigma'$ is deleted because of the existence of another $\Inter$-individual $\sigma''$ for with Lemma~\ref{lemma:identitas} ensures  $\{\rolR\in \lang{R}\mid \tuple{\sigma,\sigma''} \in \rolR^\Inter\}=\{\rolR\in \lang{R}\mid \tuple{\sigma,\sigma'} \in \rolR^\Inter\}$ and $\{\conE\in
\parts(\kb^*)\mid \sigma''\in \conE^\Inter\}=\{\conE\in
\parts(\kb^*)\mid \sigma'\in \conE^\Inter\}$. W.l.o.g., $\sigma''$
does not get deleted in the whole construction procedure. Yet, then the $\Knter$-individual $\sigma''$ obviously contradicts $\sigma \in (\exists \rolU. \conD)^\Knter$.

\noindent
``$\Leftarrow$''\mbox{}\hspace{2mm}\mbox{}%
Assume the contrary, i.e., $\sigma \in (\forall\rolU.\conD)^\Inter$ but $\sigma \not\in (\forall \rolU. \conD)^\Knter$. The latter means that there is a $\Knter$-individual $\sigma'$ with $\tuple{\sigma,\sigma'}\in\rolU^{\Knter}$ and $\sigma'\not\in \conD^\Knter$. Because of the construction of $\Knter$ by pruning $\Inter$, this means also $\tuple{\sigma,\sigma'}\in\rolU^{\Inter}$ and $\sigma'\not\in \conD^\Inter$, ergo $\sigma \not\in (\forall\rolU.\conD)^\Inter$, contradicting the assumption.
\end{iteMize}

We proceed by showing that every role $\rolR$ with $\atmost{1}\rolR.\top \in \kb$ is functional in $\Knter$. Let $\sigma\in\Delta^\Knter$ and let $\sigma_1,\sigma_2$ be two $\rolR$-successors of $\sigma$. We consider two cases: First, assume that $\sigma_1=a_1^\Knter$ and $\sigma_2=a_2^\Knter$ for $a_1,a_2 \in \indnames$. Then, by construction of the unraveling we can derive that there must be an $a_3 \in \indnames$ with $\sigma=a_3^\Knter$. However, then, the DL-safe rule $R(x,y),R(x,z) \to y\approx z$ from $\Ef(\kb)$ ensures $\sigma_1=\sigma_2$. Next we consider the case that at least one of $\sigma_1,\sigma_2$ is unnamed. By Lemma \ref{lemma:identitas} and the point-wise correspondence between $\Inter$ and $\Knter$ shown in the previous part of the proof, two statements hold: First, for all $\conC \in\parts(\kb^*)$, we have that $\sigma_1\in \conC^\Knter$ iff $\sigma_2\in \conC^\Knter$. Second, for all $\rolS\in\rolnames$ we have that $\tuple{\sigma,\sigma_1}\in \rolS^\Knter$ iff $\tuple{\sigma,\sigma_2}\in \rolS^\Knter$. However, in the pruning process generating $\Knter$, exactly such duplicate occurrences are erased, leaving at most one $\rolR$-successor per $\sigma$. Thus we conclude $\sigma_1 = \sigma_2$.
This completes the proof that all axioms from $\kbrb$ are satisfied in $\Knter$.
\qed

Finally, we are ready to establish the equisatisfiability result also for this last transformation step.

\begin{theorem}
For any extended \ALCIFb{} knowledge base $\kb$, the \ALCIb{} knowledge base $\Ef(\kb)$ and $\kb$ are equisatisfiable.
\end{theorem}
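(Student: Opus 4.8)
The plan is to read the theorem off as a short corollary of Lemma~\ref{lemma:kb4kb5} and Lemma~\ref{lemma:modelforf}, which together already carry all the technical weight; no new construction is needed.

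For the direction ``$\kb$ satisfiable $\Rightarrow$ $\Ef(\kb)$ satisfiable'', I would simply invoke Lemma~\ref{lemma:kb4kb5}: if $\Inter\models\kb$ then, since $\kb\models\Ef(\kb)$, the very same interpretation $\Inter$ satisfies every axiom and every DL-safe rule of $\Ef(\kb)$, so $\Ef(\kb)$ has a model. For the converse, suppose $\Jnter\models\Ef(\kb)$. Then $\Ef(\kb)$ is a consistent knowledge base --- and, being an \ALCIb{} knowledge base, it is in particular an extended \SHIQb{} knowledge base containing no transitivity axioms --- so Definition~\ref{def:unraveling} applies and the unraveling $\Inter$ of $\Jnter$ is defined; by Lemma~\ref{lemma:unrav}(1) we even get $\Inter\models\Ef(\kb)$. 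Now fix any $\kb$-pruning $\Knter$ of $\Inter$ in the sense of Definition~\ref{def:pruning}. Lemma~\ref{lemma:modelforf} immediately gives $\Knter\models\kb$, hence $\kb$ is satisfiable, and the two implications together establish equisatisfiability.

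The only genuine points of care --- and the closest thing to an obstacle --- concern the well-definedness of the step ``fix any $\kb$-pruning $\Knter$'': one should note that a $\kb$-pruning always exists and is a bona fide nonempty interpretation. The pruning process of Definition~\ref{def:pruning} only ever deletes subordinate functional-role successors together with the subtrees hanging below them, so it never removes a named individual $a^\Inter$ (such elements are length-one sequences that are subordinate to nothing), and the limit $\bigcap_i\Delta_i$ is meaningful even if the process does not terminate after finitely many rounds, since whether a given sequence survives is settled once the process has reached sequences of that word length. Beyond these routine observations, everything is inherited verbatim from the two cited lemmas.
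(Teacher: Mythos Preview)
Your proof is correct and follows essentially the same route as the paper: one direction by Lemma~\ref{lemma:kb4kb5}, the other by unraveling a model of $\Ef(\kb)$ and applying Lemma~\ref{lemma:modelforf} to a $\kb$-pruning. Your explicit appeal to Lemma~\ref{lemma:unrav}(1) is harmless but redundant, since Lemma~\ref{lemma:modelforf} is stated directly for the unraveling of a model of $\Ef(\kb)$; your remarks on well-definedness and nonemptiness of the pruning merely spell out what the paper compresses into ``the existence of which is ensured by constructive definition.''
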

\proof
Lemma~\ref{lemma:kb4kb5} ensures that every model of $\kb$ is also a model of $\Ef(\kb)$. Moreover, by Lemma~\ref{lemma:modelforf}, given a model $\Jnter$ for of $\Ef(\kb)$, any $\kb$-pruning of $\Jnter$'s unraveling (the existence of which is ensured by constructive definition) is a model of $\kb$. This finishes the proof.
\qed

\bigskip

Eventually, the results of this section can be composed to show how to transform an extended \SHIQb{} knowledge base $\kb$ into an equisatisfiable extended \ALCIb{} knowledge base by computing $\Eshq(\kbrb)\defeq\Ef\Ele\Eh\Ege\Es(\kbrb)$. \fancypicture{Figure~\ref{fig:DLTrafo} summarizes this procedure.
\begin{figure}[t]
\includegraphics[width=1\textwidth]{figures/executiveTrafo75}
\caption{Transformation from \SHIQ{} to \ALCIb{} TBoxes.\label{fig:DLTrafo}}
\end{figure}}
Moreover, as each of the single transformation steps is time polynomial, so is the overall procedure. Therefore, we are able to check the satisfiability of any extended \SHIQ{} knowledge base using the method presented in the previous sections, by first transforming it into \ALCIb{} and then checking.

This result is recorded in the below theorem, where we also exploit it to show an even stronger result about the correspondence between $\kbrb$ and $\Eshq(\kbrb)$.
\begin{theorem}\label{theo:completetrafo}
Let $\kb$ be an extended \SHIQb{} knowledge base. Then the following hold:
\begin{iteMize}{$\bullet$}
\item
$\kb$ and $\Eshq(\kb)$ are equisatisfiable,
\item
$\kb\models \conC(a)$ iff $\Eshq(\kb)\models {\conC}(a)$,
\item
$\kb\models \rolR(a,b)$ iff $\Eshq(\kb)\models {\rolR}(a,b)$, and
\item
$\kb\models a\approx b$ iff $\Eshq(\kb)\models a\approx b$,
\end{iteMize}
for any $a,b\in\indnames$, $\conC\in\connames$, and $\rolR\in\rolnames$.
\end{theorem}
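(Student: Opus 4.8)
The plan is to prove the first item by pure composition and to reduce the three entailment items to it.

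\emph{Equisatisfiability.} Since $\Eshq=\Ef\circ\Ele\circ\Eh\circ\Ege\circ\Es$, it suffices to chain the five equisatisfiability results already proved in this section: $\Es$ takes a \SHIQb{} knowledge base to an equisatisfiable \ALCHIQb{} one (Proposition~\ref{prop:boxpushing}), $\Ege$ goes from \ALCHIQb{} to \ALCHIqb{} (Lemma~\ref{lemma:removefirstq}), $\Eh$ from \ALCHIqb{} to \ALCIqb{}, $\Ele$ from \ALCIqb{} to \ALCIFb{}, and $\Ef$ from \ALCIFb{} to \ALCIb{} (the equisatisfiability lemmas and the theorem of the respective subsections above). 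Composing these equivalences gives that $\kb$ and $\Eshq(\kb)$ are equisatisfiable.

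\emph{Entailment of ground facts.} Here I would reduce entailment to unsatisfiability, using the observation recalled in Section~\ref{sec:prelims} that the negation of a ground assertion of the form $\conC(a)$, $\rolR(a,b)$, or $a\approx b$ can be written as a DL-safe rule (the assertion as the only body atom, empty head). Hence $\kb\models\conC(a)$ holds precisely when the extended \SHIQb{} knowledge base $\kb'$ obtained from $\kb$ by adjoining this rule $r$ is unsatisfiable, and similarly for $\rolR(a,b)$ and $a\approx b$. The auxiliary fact I need is that $\Eshq$ commutes with adjoining a DL-safe rule: for every extended \SHIQb{} knowledge base $\tuple{\mathscr{T},\mathscr{R},\mathscr{P}}$ and every DL-safe rule $r$, $\Eshq(\tuple{\mathscr{T},\mathscr{R},\mathscr{P}\cup\{r\}})$ is exactly $\Eshq(\tuple{\mathscr{T},\mathscr{R},\mathscr{P}})$ with $r$ added. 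Granting this, the already-established first item applied to $\kb'$ yields $\kb\models\conC(a)$ iff $\kb'$ is unsatisfiable iff $\Eshq(\kb')=\Eshq(\kb)\cup\{r\}$ is unsatisfiable iff $\Eshq(\kb)\models\conC(a)$, and likewise for the role-atom and equality-atom cases. The negative ground facts mentioned in the abstract are handled in the same way, writing the positive assertion itself as a DL-safe rule with empty body.

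\emph{The commutation fact.} This is the only claim that genuinely has to be verified, and it is a routine inspection of the definitions of $\Es$, $\Ege$, $\Eh$, $\Ele$, and $\Ef$. Each of these transformations removes only RBox or number-restriction axioms and adds only GCIs, RBox axioms, and fresh DL-safe rules, and everything it removes or adds is determined by the TBox and RBox alone --- through $\clos$, $\parts$, the relation $\ssb^*$, the partition of roles into simple and non-simple, the $\NNF$ and flattening of \emph{TBox} axioms, and which roles carry transitivity or functionality axioms --- while all pre-existing DL-safe rules are carried over verbatim. Since $r$ is a DL-safe rule, whose atoms mention only atomic concepts and atomic roles, none of these operations acts on it: it is untouched by $\NNF$ and flattening, it contributes nothing to $\clos$, $\parts$, or the simple/non-simple partition, and it adds no transitivity or functionality axiom. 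Hence each single step applied to the knowledge base obtained from $\tuple{\mathscr{T},\mathscr{R},\mathscr{P}}$ by adjoining $r$ produces exactly the result of that step on $\tuple{\mathscr{T},\mathscr{R},\mathscr{P}}$, followed by adjoining $r$; composing the five steps gives the commutation fact.

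\emph{Expected main obstacle.} No deep difficulty is expected; the only delicate point is the bookkeeping in the commutation fact, specifically checking that the DL-safe rules introduced by $\Es$ (the $\mathit{Self}_R$-rules and the transitivity-simulating rules $R(x,y),R(y,z)\to R(x,z)$) and by $\Ef$ (the functionality rule $\rolR(x,y),\rolR(x,z)\to y\approx z$) are determined solely by the RBox and by the functionality axioms, so that the presence of the auxiliary rule $r$ cannot change which of them are generated and no spurious interaction with $r$ can occur.
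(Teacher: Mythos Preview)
Your proposal is correct and follows essentially the same route as the paper: chain the five equisatisfiability results for the first item, then reduce ground-fact entailment to unsatisfiability by adjoining the negated assertion as a DL-safe rule and invoking the commutation of $\Eshq$ with rule adjunction. The paper simply asserts this commutation (``we observe that $\Eshq(\kb')$ is obtained by extending $\Eshq(\kb)$ with $\conC(a)\to$''), whereas you actually justify it by inspecting the five transformations, which is a welcome addition.
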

\proof
Equisatisfiability follows from the fact that each of the transformations $\Ef,\Ele,\Eh,\Ege,\Es$ preserves satisfiability.
We then use the established equisatisfiability of $\kb$ and $\Eshq(\kb)$ to prove the other claims. Assume $\kb\models \conC(a)$. This means that the knowledge base $\kb'$ obtained by extending $\kb$ with the DL-safe rule $\conC(a)\to$ is unsatisfiable. Now we observe that $\Eshq(\kb')$ is obtained by extending $\Eshq(\kb)$ with $\conC(a)\to$. Since $\Eshq(\kb')$ is unsatisfiable, so is $\Eshq(\kb)$ extended with $\conC(a)\to$, and hence $\Eshq(\kb)\models {\conC}(a)$ as required. The other direction of the claim follows via a similar argumentation. The remaining cases are shown analogously.
\qed

\medskip

Consolidating all our results, we now can formulate our main theorem for checking satisfiability as well as entailment of positive and negative ground facts for extended \SHIQb{} knowledge bases.

\begin{theorem}
Let $\kbrb$ be an extended \SHIQb{} knowledge base and let $$\mathbb{P} \defeq \Prog(\Eshq(\kbrb)).$$ Then the following hold:
\begin{iteMize}{$\bullet$}
\item
$\kbrb$ is satisfiable iff $\mathbb{P}$ is,
\item
$\kbrb\models \conC(a)$ iff $\mathbb{P}\models S_{\conC}(a)$,
\item
$\kbrb\models \rolR(a,b)$ iff $\mathbb{P}\models S_{\rolR}(a,b)$, and
\item
$\kbrb\models a\approx b$ iff $\mathbb{P}\models a\approx b$,
\end{iteMize}
for any $a,b\in\indnames$, $\conC\in\connames$, and $\rolR\in\rolnames$.
\end{theorem}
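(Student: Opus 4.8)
The plan is to obtain the statement by composing the two principal correctness results already established: Theorem~\ref{theo:completetrafo}, which relates an extended \SHIQb{} knowledge base $\kbrb$ with its reduct $\Eshq(\kbrb)$, and Theorem~\ref{theo:aboxalcib}, which relates an extended \ALCIb{} knowledge base with the disjunctive Datalog program obtained from its OBDD compilation. In effect, nothing new has to be proved: the theorem is the concatenation of the two correctness chains established in Sections~\ref{sec:abox} and~\ref{sec:reduction}.

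First I would observe that $\Eshq(\kbrb) = \Ef\Ele\Eh\Ege\Es(\kbrb)$ is an \emph{extended \ALCIb{}} knowledge base. Indeed, by the constructions of Section~\ref{sec:reduction}, the step $\Es$ turns $\kbrb$ into an extended \ALCHIQb{} knowledge base (eliminating transitivity axioms), $\Ege$ into an extended \ALCHIqb{} one (eliminating $\atleast{}$ number restrictions), $\Eh$ into an extended \ALCIqb{} one (eliminating role hierarchies), $\Ele$ into an extended \ALCIFb{} one (eliminating number restrictions other than functionality axioms $\atmost{1}\rolR.\top$), and finally $\Ef$ into an extended \ALCIb{} one (eliminating functionality). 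Consequently, Theorem~\ref{theo:aboxalcib} applies verbatim to $\Eshq(\kbrb)$ and yields that $\Eshq(\kbrb)$ and $\mathbb{P}=\Prog(\Eshq(\kbrb))$ are equisatisfiable, that $\Eshq(\kbrb)\models\conC(a)$ iff $\mathbb{P}\models S_{\conC}(a)$, that $\Eshq(\kbrb)\models\rolR(a,b)$ iff $\mathbb{P}\models S_{\rolR}(a,b)$, and that $\Eshq(\kbrb)\models a\approx b$ iff $\mathbb{P}\models a\approx b$, for all $a,b\in\indnames$, $\conC\in\connames$, and $\rolR\in\rolnames$.

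Second, Theorem~\ref{theo:completetrafo} already provides the four corresponding equivalences between $\kbrb$ and $\Eshq(\kbrb)$ over exactly the same signature fragment, namely the original individual, concept, and role names of $\kbrb$, which survive every transformation step untouched (the steps in Section~\ref{sec:reduction} only introduce fresh names and never drop constants or original concept/role names). Chaining the two families of equivalences then gives the claim: $\kbrb$ is satisfiable iff $\Eshq(\kbrb)$ is iff $\mathbb{P}$ is; $\kbrb\models\conC(a)$ iff $\Eshq(\kbrb)\models\conC(a)$ iff $\mathbb{P}\models S_{\conC}(a)$; and the role-atom and equality-atom cases are analogous.

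I do not anticipate a genuine obstacle, since all the substantive work — the domino/OBDD decision procedure and its faithful Datalog encoding (Theorems~\ref{theo:domcorrect} and~\ref{theo:aboxalcib}) together with the five stepwise equisatisfiability arguments behind Theorem~\ref{theo:completetrafo} — has already been carried out. The only point warranting a line of care is the bookkeeping observation that the ``external'' signature over which the ground facts $\conC(a)$, $\rolR(a,b)$, and $a\approx b$ are formulated is preserved by $\Eshq$, so that the two theorems compose literally rather than merely up to a signature renaming; this is immediate from the shape of the transformation rules in Section~\ref{sec:reduction}.
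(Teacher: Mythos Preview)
Your proposal is correct and matches the paper's own proof exactly: the paper simply writes ``Combine Theorem~\ref{theo:completetrafo} with Theorem~\ref{theo:aboxalcib},'' which is precisely the chaining you describe, with the same observation that $\Eshq(\kbrb)$ is an extended \ALCIb{} knowledge base so that Theorem~\ref{theo:aboxalcib} applies.
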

\proof
Combine Theorem~\ref{theo:completetrafo} with Theorem~\ref{theo:aboxalcib}.
%
\qed


Note also that the above observation immediately allows us to add reasoning support for
\emph{DL-safe} conjunctive queries, i.e. conjunctive queries that assume all variables to range only over named individuals. It is easy to see that, as a minor extension, one could generally allow for concept expressions $\forall\rolR.\conA$ and $\exists\rolR.\conA$ in queries and rules, simply because $\Prog(\kbrb)$ represents these elements of $\parts(\FLAT(\mathscr{T}))$ as atomic symbols in disjunctive Datalog.

\section{Related Work}\label{sec:related}

Boolean constructors on roles have been investigated in the context
of both description and modal logics. \cite{borgida-dl-fol-96} used
them extensively for the definition of a DL that is equivalent to
the two-variable fragment of FOL.

It was shown by \cite{Hustadt:2000:IDD:647938.741235} that the DL
obtained by augmenting \ALC{} with full Boolean role constructors
($\mathcal{ALB}$) is decidable. \cite{lutz01complexity} established
\NExpTime-completeness of the standard reasoning tasks in this logic.
Restricting to only role negation \citep{lutz01complexity} or only role
conjunction \citep{Tobies:PhD} retains \ExpTime-completeness.
On the other hand, complexity does not increase beyond \NExpTime{} even when
allowing for
inverses, qualified number restrictions, and nominals. This was shown
by \cite{Tobies:PhD} via a polynomial translation of
$\mathcal{ALCOIQB}$ into $\mathcal{C}^2$, the two variable fragment
of first order logic with counting quantifiers, which in turn was
proven to be \NExpTime-complete by \cite{PH:C2complex}. Also the
description logic $\mathcal{ALBO}$ \citep{ALBO} falls in that range
of \NExpTime-complete DLs. 

On the contrary, it was also shown by \cite{Tobies:PhD} that
restricting to \emph{safe} Boolean role constructors keeps \ALC{}'s
reasoning complexity in \ExpTime{}, even when adding inverses and
qualified number restrictions (\ALCIQb).

For logics including modeling constructs that deal with role
composition like transitivity or -- more general -- complex role
inclusion axioms, results on complexities in the presence of Boolean
role constructors are more sparse.
\cite{DBLP:journals/jancl/LutzW05} show that \ALC{} can be extended
by negation and regular expressions on roles while keeping reasoning
within \ExpTime. Furthermore, \cite{DBLP:conf/aaai/CalvaneseEO07}
provided \ExpTime{} complexity for a similar logic that includes
inverses and qualified number restriction but reverts to safe
negation on roles.
The present work showed that reasoning remains in \ExpTime{} for
extended \SHIQb{} knowledge bases. 
Regarding DLs that combine nominals and role composition, it was shown that \emph{unsafe} Boolean role constructors
can be added to \SHOIQ{} and \SROIQ{} (resulting in DLs \SHOIQBs{} and \SROIQBs{}) 
without affecting their respective worst-case complexities of \NExpTime{} and \NExpExpTime{} \citep{RKH:Jelia-08}. The restriction to simple roles, on the other hand, is essential to retain decidability.
Furthermore, conjunctions of simple roles (which are trivially safe in the absence of role negation) can be added to tractable DLs of the \EL{} and DLP families without increasing their worst-case complexity \citep{RKH:Jelia-08}.

\medskip

Type-based reasoning techniques have been described sporadically in
the area of DLs but never been practically adopted.

\cite{Lutz-et-al-IANDC-05} use a particular kind of types, called
\emph{mosaics} for finite model reasoning.
\cite{DBLP:conf/wollic/EiterLOS09} use similar structures, called
\emph{knots} for query answering in the description logic \SHIQ{}.
Both notions show a similarity to the notion of \emph{(counting)
star types} used for reasoning in fragments of first order logic
(\citealp{PH:C2complex}), in that they do not only store information
about single domain individuals but also about all their direct
neighbors. As opposed to this, our notion of dominoes exhibits more
similarity to the notion of (non-counting) \emph{two-types} used in
first-order logic, e.g., by \cite{GradelOR97}; both notions encode
information related to pairs of domain individuals (rather than
whole neighborhoods).

The approach of constructing a canonical model (resp.~a sufficient
representation of it) in a downward manner (i.e., by pruning a larger
structure) shows some similarity to Pratt's type elimination
technique \citep{pratt}, originally used to decide
satisfiability of modal formulae.

Canonical models themselves have been a widely used notion in modal
logic \citep{popkorn,blackburn},
 however, due to the additional
expressive power of \ALCIb{} compared to standard modal logics like
K (being the modal logic counterpart of the description logic
$\mathcal{ALC}$), we had to substantially modify the notion of a
canonical model used there: in order to cope with number
restrictions, we use infinite tree models based on unravelings
whereas the canonical models in the mentioned approaches are
normally finite and obtained via filtrations.

Related in spirit (namely to use BDD-based reasoning for DL
reasoning tasks and to use a type elimination-like technique for
doing so) is the work presented by \cite{pansattlervardi}. However,
the established results as well as the approaches differ greatly
from ours: the authors establish a procedure for
deciding the satisfiability of $\mathcal{ALC}$ concepts in a setting
not allowing for general TBoxes, while our approach can check
satisfiability of \SHIQ{} (resp.~\ALCIb{}) knowledge bases
supporting general TBoxes, thereby generalizing the
results by \cite{pansattlervardi} significantly.

\medskip

The presented method for reasoning with DL-safe rules and
assertional data exhibits similarities to the algorithm underlying
the KAON2 reasoner
(\citealp{Motik:PhD,DBLP:journals/jar/HustadtMS07,DBLP:journals/iandc/HustadtMS08}).
In particular, pre-transformations are first applied to \SHIQ{}
knowledge bases, before a saturation procedure is applied to the
TBox part that results in a disjunctive Datalog program that can be
combined with the assertional part of the knowledge base. As in our
case, extensions with DL-safe rules and ground conjunctive queries
are possible. The processing presented here, however, is very
different from KAON2. Besides using OBDDs, it also employs Boolean
role constructors that admit an indirect encoding of number
restrictions. Moreover, as opposed to our approach, the
transformation in \cite{Motik:PhD} does not preserve \emph{all} ground
consequences: \SHIQ{} consequences of the form
$\rolR(a,b)$ with $\rolR$ being non-simple may not be entailed by
the created Datalog program. This shortcoming could, however, be
easily corrected along the lines of our approach.
On the other hand, the KAON2 transformation avoids the use of disjunctions
in Datalog for knowledge bases that are Horn (i.e., free of disjunctive information). Reasoning for Horn-\SHIQ{} can thus be done in \ExpTime{}, which is worst-case optimal \citep{KRH:HornDLs12}.
In contrast, our OBDD encoding requires disjunctive Datalog in all cases, leading to a \NExpTime{} procedure even for Horn-\SHIQ{}.

\section{Discussion}\label{sec:conc}

We have presented a new worst-case optimal reasoning algorithm for standard reasoning tasks for extended \SHIQb{} knowledge bases. The algorithm compiles \SHIQb{} terminologies into disjunctive Datalog programs, which are then combined with assertional information and DL-safe rules for satisfiability checking and (ground) query answering. To this end, OBDDs are used as a convenient intermediate data structure to process terminologies and are subsequently transformed into disjunctive Datalog programs that can naturally account for ABox data and DL-safe rules. The generation of disjunctive Datalog may require exponentially many computation steps, the cost of which depends on the concrete OBDD implementation at hand -- finding \emph{optimal OBDD encodings} is \NP-complete but heuristic approximations are often used in practice. Querying the disjunctive Datalog program then is co-\NP-complete w.r.t. the size of the ABox, so that the data complexity of the algorithm is worst-case optimal \citep{Motik:PhD}. Concerning combined complexity of testing the satisfiability of extended knowledge bases, the \ExpTime{} OBDD construction step dominates the subsequent disjunctive Datalog reasoning part, so the overall combined complexity of the algorithm is \ExpTime{} resulting in worst-case optimality for this case as well, given the \ExpTime-hardness of satisfiability checking in \SHIQb{}.

It is also worthwhile to briefly discuss the applicability of our method to knowledge bases featuring so-called \emph{complex role inclusion axioms} (RIAs). By means of  techniques described by \cite{kazakov08:sroiqcompl}, any (pure, that is, non-extended) \SRIQbs{} knowledge base can be transformed into an equisatsfiable \ALCHIQb{} knowledge base, however, like Motik's original transitivity elimination, this transformation does not preserve all ground consequences. Consequently, it is not satisfiability-preserving for extended \SRIQbs{} knowledge bases. Still, capitalizing on these RIA-removal techniques, our method provides a way for satisfiability checking for \SRIQbs{} knowledge bases without DL-safe rules that is worst-case optimal w.r.t.\ both combined and data complexity. We believe, however, that it would be not to hard a task to modify the transformation to even preserve ground consequences.


For future work, the algorithm needs to be evaluated in practice. A crude prototype implementation was used to generate the examples within this paper, and has shown to outperform tableaux reasoners in certain handcrafted cases, but more extensive evaluations with an optimized implementation on real-world ontologies are needed for a conclusive statement on the practical potential of this new reasoning strategy. It is also evident that redundancy elimination techniques are required to reduce the number of generated Datalog rules, which is also an important aspect of the KAON2 implementation.

Another avenue for future research is the extension of the approach to more modeling features such as role chain axioms and nominals -- significant revisions of the model-theoretic considerations are needed for these cases.

%

%
%
%
%

\section*{Acknowledgements}

This work was supported by the DFG project ExpresST: Expressive
Querying for Semantic Technologies and by the EPSRC grant HermiT:
Reasoning with Large Ontologies.

We thank Boris Motik and Uli Sattler for useful discussions on
related approaches as well as Giuseppe DeGiacomo and Birte Glimm for
hints on the origins of some techniques employed by us. We also
thank the anonymous reviewers for their very thorough scrutiny of an
earlier version of this article as well as for their comments and
questions which helped to make the article more comprehensible and
accurate.

\bibliographystyle{apalike}
\bibliography{references}

\end{document}